\newcommand{\B}{\mathcal{B}}
\newcommand{\C}{\mathbb{C}}
\newcommand{\F}{\mathcal{F}}
\newcommand{\Hil}{\mathcal{H}}
\newcommand{\N}{\mathbb{N}}
\newcommand{\R}{\mathbb{R}}
\newcommand{\T}{\mathbb{T}}
\newcommand{\Z}{\mathbb{Z}}
\theoremstyle{plain}
\newtheorem{theorem}{Theorem}[section]
\theoremstyle{plain}
\newtheorem{proposition}[theorem]{Proposition}
\theoremstyle{plain}
\newtheorem{lemma}[theorem]{Lemma}
\theoremstyle{plain}
\newtheorem{corollary}[theorem]{Corollary}
\theoremstyle{plain}
\theoremstyle{plain}
\newtheorem{definition}[theorem]{Definition}
\theoremstyle{plain}
\theoremstyle{remark}
\newtheorem{remark}[theorem]{Remark}
\theoremstyle{remark}
\newtheorem{example}[theorem]{Example}
\theoremstyle{remark}
\title
[Intertwiners between $1$-dim QWs]
{Intertwining operators between one-dimensional homogeneous quantum walks}
\author{Hiroki Sako}
\address
{Faculty of Engineering, Niigata University, Nishi-ku, Niigata 950-2181, Japan}
\email
{sako@eng.niigata-u.ac.jp}
\subjclass[2010]{46L99, 81Q35}
\begin{document}
\maketitle

\begin{abstract}
The subject of this paper is 
a kind of dynamical systems called {\it quantum walks}.
We study one-dimensional homogeneous analytic quantum walks $U$.
We explain how to identify the space of all the uniform
intertwining operators between these walks.
We can also determine 
whether $U$ can be realized by a (not necessarily homogeneous)
continuous-time uniform quantum walk on $\Z$.
Several examples of quantum walks, which can not be realized by continuous-time uniform quantum walks, are presented.
The $4$-state Grover walk is one of them.
Before stating the main theorems, we clarify the definition of one-dimensional quantum walks.
For the first half of this paper, we study basic properties of one-dimensional quantum walks, which are not necessarily homogeneous.
An equivalence relation between quantum walks called {\it similarity} is also introduced.
This allows us to manipulate quantum walks in a flexible manner.
\end{abstract}

\section{introduction}

{\it Quantum walka} are dynamical systems related to quantum physics.
Many researchers study this subject in a number of frameworks.
They commonly use a pair of a Hilbert space $\Hil$ and a unitary operator $U$ on $\Hil$. The Hilbert space is associated to some space $X$. The space $\Hil$ is given by $\ell_2(X)$, $L^2(X)$, or their amplification.


There are two families of quantum walks. One is the family of discrete-time quantum walks.
These walks give unitary representations $(U^t)_{t \in \Z}$ of the integer group $\Z$.
We can regard the integer $t$ as the number of steps of some procedure.
The other is the family of continuous-time quantum walks.
Such a walk gives a unitary representation $(\exp(i t H))_{t \in \R}$ of the real group $\R$.
We can regard the real number $t$ as the flow of time.

For a discrete-time quantum walk $(U^t)_{t \in \Z}$, does there exist a {\it good} continuous-time quantum walk $(\exp(i t H))_{t \in \R}$ satisfying $\exp(i X) = U$?
A related open problem is proposed in \cite{Ambainis}.
For every unitary operator $U$, there exists a self-adjoint operator $H$ such that $\exp(i H) = U$. However, $X$ is not necessarily a {\it good} operator.
The unitary operator $\exp(i t H)$ ignores the base space $X$. Namely, there might exist unit vectors $\xi$ and $\eta$ such that the support of $\eta$ in $X$ is distant from that of $\xi$, and that the transition probability $|\langle \exp(i t H) \xi, \eta \rangle|^2$ is not small.
This means that the dynamical system by $(\exp(i t H))_{t \in \R}$ moves unit vectors too fast.
Therefore, we eliminate such a pathological walk and concentrate on walks satisfying some regularity. In this paper, we consider three kinds of regularity called {\it uniformity}, {\it smoothness}, or {\it analyticity} for operators on $\Hil$. Uniformity is the weakest, and analyticity is the strongest.
To the best of the author's knowledge, 
all the known examples of one-dimensional quantum walks are analytic.

In this paper, we consider the case that the space $X$ is the integer group $\Z$, the local degree of freedom of $\Hil$ is finite, and $(U^t)_{t \in \Z}$ is a discrete-time homogeneous {\it analytic} quantum walk. 
We determine whether $(U^t)_{t \in \Z}$ is realized by a continuous-time {\it uniform} quantum walk in Theorem \ref{theorem: continuous-time}. 
To show this theorem, in Subsection \ref{subsection: uniform intertwiner}, we determine
the space of uniform intertwining operator between two one-dimensional homogeneous analytic quantum walks.

Before stating the main theorem (Theorem \ref{theorem: continuous-time}), we need to clarify the definition of one-dimensional quantum walks and regularity for operators on $\Hil$ in Section \ref{section: one-dimensional QW} (Definitions \ref{definition: one-dimensional QW}, \ref{definition: regularity}, \ref{definition: regularity for QW}).
Many results in Section \ref{section: one-dimensional QW} can be applied to general one-dimensional quantum walks, which are not necessarily of finite degree of freedom.
We also propose a new equivalence relation between one-dimensional quantum walks called {\it similarity}.
This new notion allows us to treat quantum walks in a flexible manner.
Similar walks have the same asymptotic behavior (Theorem \ref{theorem: similarity and limit distribution}).

For the argument in this paper, 
we need a structure theorem on one-dimensional homogeneous quantum walks in \cite{SaigoSako}.
A concise abstract of the paper \cite{SaigoSako} is described in Subsection \ref{subsection: review of SS}.
Readers who wants to concretely understand the contents of this paper are recommended to read examples, skipping lemmas and propositions.
Among several examples, Example \ref{example: 4-state Grover walk} and Example \ref{example: 3-state Grover walk} introduce the $4$-state Grover walk and the $3$-state Grover walk.
Example \ref{example: 3-state Grover walk is continuous-time} shows that the $3$-state Grover walk can be realized by a continuous-time quantum walk, while
Example \ref{example: 4-state Grover walk is not continuous-time} shows that the $4$-state Grover walk can not.

\section{Definitions and basic properties of $1$-dimensional quantum walks}
\label{section: one-dimensional QW} 

We construct a general framework for one-dimensional quantum walks as follows.

\begin{definition}\label{definition: one-dimensional QW}
One-dimensional {\rm discrete-time} quantum walk is a triplet $(\mathcal{H}$, $(U^{t})_{t \in \Z}$, $D)$ of 
\begin{itemize}
\item
a Hilbert space $\mathcal{H}$,
\item
a unitary representation $(U^t)_{t \in \Z}$ of $\Z$ on $\mathcal{H}$, 
\item
and a self-adjoint operator $D$. (In most cases, $D$ is unbounded.)
\end{itemize}
We call $U = U^1$ the generator of the quantum walk.
\end{definition}

\begin{definition}
One-dimensional {\rm continuous-time} quantum walk is a triplet $(\mathcal{H}$, $(U^{(t)})_{t \in \R}$, $D)$ of 
\begin{itemize}
\item
a Hilbert space $\mathcal{H}$,
\item
a one-parameter group $(U^{(t)})_{t \in \R}$ of unitary operators on $\mathcal{H}$ which is continuous with respect to the strong operator topology, 
\item
and a self-adjoint operator $D$. (In most cases, $D$ is unbounded.)
\end{itemize}
The self-adjoint operator $\lim_{t \to 0} \frac{U^{(t)} - 1}{it}$ is called the generator of the quantum walk. 
\end{definition}

For the rest of this paper, we concentrate on one-dimensional quantum walks, so we simply call them {\it quantum walk}.

In most of preceding research, quantum walks are regarded as a dynamical system on some geometric space.
To fit the quantum walks defined above, we have only to define the operator $D$ as the observable of position on a one-dimensional space.
However, the above definition allows more flexible interpretations of quantum walks.
The self-adjoint operator $D$ can be other observables such as the momentum operator.

\subsection{Regularity on quantum walks}

By physical requirement,
we often assume a kind of regularity for operators such as {\it uniformity}, {\it smoothness},
 or {\it analyticity}.
Note that for a map $f$ from the real line or a complex domain to a Banach space $B$, we can define differentiability on $f$ using the limit $\lim_{\Delta x \to 0} \frac{f(x + \Delta x) - f(x)}{\Delta x}$ in norm.

\begin{definition}\label{definition: regularity}
Let $D$ be a self-adjoint operator on $\Hil$ and let $U$ be a bounded operator on $\Hil$.
\begin{itemize}
\item
The operator $U$ is said to be {\rm uniform} with respect to $D$, if
the map
\[\R \ni k \mapsto e^{i k D} U e^{-i k D} \in \mathcal{B}(\mathcal{H})\]
is continuous with respect to the norm topology.
This condition implies that $k \mapsto e^{i k D} U e^{-i k D}$ is uniformly continuous.
\item
The operator $U$ is said to be {\rm smooth} or {\rm in the ${\rm C}^\infty$-class} with respect to $D$, if
the map
\[\R \ni k \mapsto e^{i k D} U e^{-i k D} \in \mathcal{B}(\mathcal{H})\]
is a smooth mapping with respect to the variable $k \in \R$.
\item
The operator $U$ is said to be {\rm analytic} with respect to $D$, if
there exists a holomorphic extension of the map 
\[\R \ni k \mapsto e^{i k D} U e^{-i k D} \in \mathcal{B}(\mathcal{H})\]
defined on a domain of the form $\{\kappa \in \C | -\delta < {\rm Im}(\kappa) < \delta \}$.
\end{itemize}
\end{definition}

These conditions related to transition probability in quantum mechanics.
Consider the case that the spectrum of $D$ stands for position of some particle and that $U$ corresponds to some dynamical system.
Let $E(\cdot)$ be the spectral measure of $D$.
Let $\xi$ and $\eta$ be unit vectors in $\Hil$.
Suppose that the support of the measure $\langle E(\cdot) \xi, \xi \rangle$ is distant from that of  $\langle E(\cdot) \eta, \eta \rangle$.
The conditions on regularity of $U$ mean that the matrix coefficient $\langle U \xi, \eta \rangle$ is small, if the support of $\xi$ with respect to the spectral decomposition of $D$ is distant from that of $\eta$. 
See \cite[Definition 3.1]{SaigoSako} and \cite[Lemma 4.1]{SaigoSako}.
See also Proposition \ref{proposition: converging to an atom}.
Among the three conditions, uniformity is the weakest, and analyticity is the strongest.
The space of all the uniform operators forms a C$^*$-algebra.
The space of all the smooth operators forms a $*$-subalgebra.
The space of all the analytic operators also forms a $*$-subalgebra.

The main subject of this paper is {\it uniform} intertwiner between two homogeneous discrete-time {\it analytic} quantum walks.

If the operator $U$ is smooth with respect to $D$, the $m$-th derivative of $k \mapsto e^{i k D} U e^{-i k D}$ is given by the commutator
$i^m e^{ik D} [D, [D, \cdots [D, U] \cdots]] e^{-ik D}$. We put the commutator $[\cdot, \cdot]$ $n$-times here.
In particular, $[D, [D, \cdots [D, U] \cdots]]$ is a bounded operator.
This is a consequence of the following lemma.

\begin{lemma}\label{lemma: smoothness}
Let $V \colon \Hil_1 \to \Hil_2$ be a bounded operator.
Let $D_1$ be a self-adjoint operator on $\Hil_1$ and let 
$D_2$ be a self-adjoint operator on $\Hil_2$.
Suppose that $\R \ni k \mapsto e^{ik D_2} V e^{-ik D_1} \in \mathcal{B}(\Hil_2 \leftarrow \Hil_1)$ is differentiable in the operator norm topology.
Then $V$ is a map from the domain of $D_1$ to that of $D_2$,
and
$D_2 V - V D_1 \colon {\rm dom}(D_1) \to \Hil_2$ is bounded with respect to the norm of $\Hil_1$.
The derivative of $k \mapsto e^{ik D_2} V e^{-ik D_1}$ is $i e^{ik D_2} (D_2 V - V D_1)e^{-ik D_1}$
\end{lemma}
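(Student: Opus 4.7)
My plan is to extract the derivative of $k \mapsto e^{ikD_2}Ve^{-ikD_1}$ at $k=0$ as a bounded operator $A$, identify $A$ with $i(D_2V - VD_1)$ on $\mathrm{dom}(D_1)$ via a double application of Stone's theorem, and then transport this identity to arbitrary $k$ using the group property of the unitary groups.

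First, set $A := \lim_{k\to 0}\tfrac{1}{k}(e^{ikD_2}Ve^{-ikD_1} - V)$; this is a bounded operator, since it is an operator-norm limit of bounded ones. For $\xi \in \mathrm{dom}(D_1)$ I would split the difference quotient as
\[
\frac{e^{ikD_2}Ve^{-ikD_1}\xi - V\xi}{k} = e^{ikD_2}V\cdot\frac{e^{-ikD_1}\xi - \xi}{k} + \frac{e^{ikD_2}V\xi - V\xi}{k}.
\]
By Stone's theorem applied to $D_1$, together with the fact that $e^{ikD_2}V \to V$ strongly with uniformly bounded norm, the first summand converges in norm to $-iVD_1\xi$. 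Since the entire left-hand side converges to $A\xi$ by assumption, the second summand must converge as well. The converse direction of Stone's theorem -- that $\mathrm{dom}(D_2)$ consists precisely of those vectors on which such a difference-quotient limit exists in norm -- now forces $V\xi \in \mathrm{dom}(D_2)$ and $iD_2V\xi = A\xi + iVD_1\xi$. Rearranging yields $D_2V - VD_1 = -iA$ on $\mathrm{dom}(D_1)$, which is bounded of norm at most $\|A\|$.

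For the derivative at an arbitrary $k_0 \in \R$, I would use the group property to write
\[
\frac{e^{i(k_0+h)D_2}Ve^{-i(k_0+h)D_1} - e^{ik_0D_2}Ve^{-ik_0D_1}}{h} = e^{ik_0D_2}\cdot\frac{e^{ihD_2}Ve^{-ihD_1} - V}{h}\cdot e^{-ik_0D_1},
\]
and pass to the norm limit $h \to 0$; since conjugation by unitaries is a norm isometry on $\mathcal{B}(\Hil_2 \leftarrow \Hil_1)$, the derivative at $k_0$ equals $e^{ik_0D_2}Ae^{-ik_0D_1} = ie^{ik_0D_2}(D_2V - VD_1)e^{-ik_0D_1}$, where the commutator expression acts on $\mathrm{dom}(D_1)$ (which is preserved by $e^{-ik_0D_1}$).

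I expect the sole nontrivial step to be the backward direction of Stone's theorem in the second paragraph: without that characterization, the calculation would only show that a particular difference quotient converges on the vector $V\xi$, not that $V\xi$ itself lies in $\mathrm{dom}(D_2)$. Everything else -- the algebraic splitting, the uniform-boundedness argument for the strong convergence of $e^{ikD_2}V$, and the reduction of the $k_0$-derivative to the $k=0$-derivative -- is a routine manipulation.
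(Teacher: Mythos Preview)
Your proof is correct and follows essentially the same route as the paper: define the norm-derivative at $0$, split the difference quotient acting on $\xi\in\mathrm{dom}(D_1)$ so as to isolate $\tfrac{e^{ikD_2}-1}{k}V\xi$, invoke Stone's theorem to conclude $V\xi\in\mathrm{dom}(D_2)$, and then use the group law to transport the derivative to arbitrary $k$. The only cosmetic difference is that the paper writes $W=\lim_{k\to0}\tfrac{e^{ikD_2}Ve^{-ikD_1}-V}{ik}$ (your $A$ equals $iW$) and rearranges the splitting accordingly.
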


\begin{proof}
Denote by $W$ the limit
\[W = \lim_{k \to 0} \dfrac{e^{ik D_2} V e^{-ik D_1} - V}{ik}\]
in the norm topology.
Let $\xi$ be an element of the domain of $D_1$.
Then we have
\begin{eqnarray*}
\frac{e^{ik D_2} - 1}{ik} V \xi 
=
\frac{e^{ik D_2} V e^{-ik D_1} - V}{ik} \xi
-
e^{ik D_2} V \frac{ e^{-ik D_1} - 1}{ik} \xi.
\end{eqnarray*}
As $k$ tends to $0$, the first term converges to $W \xi$.
The norm of $e^{ik D_2}$ is uniformly bounded and $e^{ik D_2}$ converges to $1$ in strong operator topology.
The vector $\frac{ e^{-ik D_1} - 1}{ik} \xi$ converges to $- D_1 \xi$ in norm.
Therefore, the vector $\frac{e^{ik D_2} - 1}{ik} V \xi$ converges to
$W \xi + V D_1 \xi$. It follows that
$V\xi \in {\rm dom}D_2$, $D_2 V \xi = W \xi + V D_1 \xi$.
We calculate the derivative as follows:
\begin{eqnarray*}
& &
\lim_{\Delta k \to 0} 
\frac{ 
e^{i (k + \Delta k) D_2} V e^{- i (k + \Delta k) D_1}
-
e^{i k D_2} V e^{- i k D_1}}
{\Delta k}
\\
&=&
i e^{i k D_2}
\lim_{\Delta k \to 0} 
\frac{ 
e^{i \Delta k D_2} V e^{- i \Delta k D_1} - V}
{i \Delta k}
e^{- i k D_1}
\\
&=&
i e^{i k D_2}
W
e^{- i k D_1}.
\end{eqnarray*}
\end{proof}

\begin{definition}\label{definition: regularity for QW}
A discrete-time or continuous-time quantum walk $\left( \mathcal{H}, \left( U^{(t)} \right), D \right)$ is said to be 
{\rm analytic} ({\rm smooth}, or {\rm uniform}), if for every $t$, $U^{(t)}$ is analytic (smooth, or uniform, respectively) with respect to $D$.
\end{definition}

\begin{remark}
In the case of continuous-time quantum walks, the author is not so sure about the above definition. A definition might be given by the relation between $D$ and the generator of the one-parameter unitary group $(U^{(t)})_{t \in \R}$, and would be stronger than our condition.
To state the main result of this paper,
we choose the weaker condition as above.
\end{remark}

\subsection{Basic examples of quantum walks}

The following are examples of quantum walks. The examples \ref{example: constant quantum walk} to \ref{example: the inverse Fourier transform} are analytic. We also define several notations, which are often used for the rest of this paper.

\begin{example}[Constant quantum walk]
\label{example: constant quantum walk}
Let $\alpha$ be a complex number whose absolute value is $1$.
The triplet $(\Hil, (\alpha^t)_{t \in \Z}, D)$ is a discrete-time quantum walk.
\end{example}

\begin{example}
[Discrete-time free quantum walk]
\label{example: discrete free QW}
Let $r$ be a positive real number.
Let $D_r$ be the diagonal operator on $\ell_2 (r \Z)$ defined by $D_r(\delta_x) = x \delta_x, x \in r \Z$.
Denote by $S_r$ the bilateral shift $\delta_x \mapsto \delta_{x+r}, x \in r \Z$.
Then $(\ell_2(r \Z), (S_r^t)_{t \in \Z}, D_r)$ is a one-dimensional discrete-time quantum walk. 
We call $(\ell_2(r \Z), (S_r^t)_{t \in \Z}, D_r)$ {\it the discrete-time free quantum walk}.
The map
\[\R \ni k \mapsto \exp(i k D_r) S_r \exp(- i k D_r) \in \B(\ell_2(r \Z))\] 
is given by
$\exp(i k D_r) S_r \exp(- i k D_r) = \exp(i k r) S_r$.
This can be extended to a holomorphic map defined on the complex plane $\C$.

The positive number $r$ is often defined by $1$.
\end{example}

\begin{example}
[Continuous-time free quantum walk]
Let $D$ be the multiplication operator on $L^2 (\R)$ given by the function $x \mapsto x$ on $\R$.
Let $X$ be the differential operator $- \frac{d}{i dx}$ on $L^2 (\R)$.
The one-parameter unitary group $(\exp(it X))_{t \in \R}$ generated by $X$ is the translation operator given by
\[[\exp(it X)(\xi)](x) = \xi(x - t), \quad \xi \in L^2(\R), x \in \R.\]
Then $\left( L^2(\R), (\exp(i t X))_{t \in \R}, D_1 \right)$ is a continuous-time quantum walk. 
We call $( L^2(\R)$, $(\exp(it X))_{t \in \R}$, $D_1 )$ {\it the continuous-time free quantum walk}.
\end{example}

\begin{example}
[$(2 \times 2)$-matrix]
Let
$
u_x =
\left(
\begin{array}{cc}
a_x & b_x\\
c_x & d_x
\end{array}
\right), x \in \Z
$
be a two-sided infinite sequence of complex unitary matrices.
Define a unitary operator $U$ on $\ell_2(\Z) \otimes \C^2$ as follows:
\begin{eqnarray*}
U (\delta_x \otimes \delta_1) &=& a_x \delta_{x - 1} \otimes \delta_1 + c_x \delta_{x + 1} \otimes \delta_2,\\
U (\delta_x \otimes \delta_2) &=& b_x \delta_{x - 1} \otimes \delta_1 + d_x \delta_{x + 1} \otimes \delta_2, \quad x \in \Z.
\end{eqnarray*}
Let $D$ be the diagonal operator on $\ell_2(\Z)$ defined in Example \ref{example: discrete free QW}. (In the present case, $r$ is $1$).
Then $(\ell_2(\Z) \otimes \C^2, (U^t)_{t \in \Z}, D \otimes {\rm id})$ is a discrete-time quantum walk.
\end{example}

\begin{example}
[Homogeneous $(2 \times 2)$-quantum walk]
\label{example: 2by2 homogeneous}
In the previous example, consider the case that
$u_x$ is a constant sequence
$u_x =
\left(
\begin{array}{cc}
a & b\\
c & d
\end{array}
\right)$.
Then the unitary operator $U$ is given by
$U = \left(
\begin{array}{cc}
a S_1^{-1} & b S_1^{-1}\\
c S_1 & d S_1
\end{array}
\right)$.
The triplet
$(\ell_2(\Z) \otimes \C^2, (U^t), D_1 \otimes {\rm id})$ is a homogeneous discrete-time quantum walk.
\end{example}

\begin{example}\label{example: the inverse Fourier transform}
Denote by $\T$ the set of all the complex numbers
whose absolute values are $1$.
The dual group of $r \Z$ is given by
$\T_{2 \pi r^{-1}} = \R /(2 \pi r^{-1} \Z)$ via the coupling
\[r \Z \times \T_{2 \pi r^{-1}} \ni (x, k + 2 \pi r^{-1} \Z) \mapsto \exp(i x k) \in \T.\]
We distinguish $\T_{2 \pi r^{-1}}$ from $\T$ in this paper.
The subscript $2 \pi r^{-1}$ is equal to the length of the torus $\T_{2 \pi r^{-1}}$.
We denote by $c_x$ the character on $\T_{2 \pi r^{-1}}$ defined by $x \in r \Z$.
Denote by $\mathcal{F}_r \colon L^2(\T_{2 \pi r^{-1}}) \to \ell_2(r \Z)$
the Fourier transform given by $c_x \mapsto \delta_x$.

The inverse Fourier transform $\widehat{D_r} = \F_r^{-1} D_r \F_r$ of $D_r$ in Example \ref{example: discrete free QW} is
\[ \left[ \widehat{D_r}(\xi) \right] \left( e^{ik} \right) 
= 
\dfrac{1}{i}\frac{d \xi}{dk} \left( e^{ik} \right), 
\quad \xi \in {\rm C}^\infty(\T_{2 \pi r^{-1}}), 
k + 2 \pi r^{-1} \Z \in \T_{2 \pi r^{-1}}.\]
We simply denote by $D_r$ the inverse Fourier transform $\widehat{D_r}$.
Here $D_r$ stands for the differential operator $\frac{d}{i d k}$.
The inverse Fourier transform $\F_r^{-1} S_r \F_r$ of the bilateral shift $S_r$ in Example \ref{example: discrete free QW} is the multiplication operator $M[c_r]$.

The inverse Fourier transform of the discrete-time free quantum walk in Example \ref{example: discrete free QW} is
$(L^2(\T_{2 \pi r^{-1}}), (M[c_r]^t)_{r \in \Z}, D_r)$.
This is also a quantum walk.
\end{example}

\begin{example}\label{example: the inverse Fourier transform of 2 by 2}
The inverse Fourier transform $\widehat{U} = (\F_1^{-1} \otimes {\rm id}) U (\F_1 \otimes {\rm id})$ of $U$ in Example \ref{example: 2by2 homogeneous} is
\[\widehat{U} = \left(
\begin{array}{cc}
a M[c_1]^{-1} & b M[c_1]^{-1}\\
c M[c_1] & d M[c_1]
\end{array}
\right).\]
Here $M[c_1]\colon L^2(\T_{2 \pi}) \to L^2(\T_{2 \pi})$ is the multiplication operator given by $c_1$.
The triplet 
$\left( L^2(\T_{2 \pi}) \otimes \C^2, \left( \widehat{U}^t \right)_{t \in \Z}, D_1 \otimes {\rm id} \right)$ is also a quantum walk.
\end{example}

\begin{example}
[Quantum walk by a multiplication operator]
\label{example: QW by multiplication operator}
Let $\lambda \colon \T_{2 \pi r^{-1}} \to \T$ be a Borel function.
Denote by $M[\lambda] \colon L^2(\T_{2 \pi r^{-1}}) \to L^2(\T_{2 \pi r^{-1}})$ the multiplication operator given by $\lambda$.
The triplet $\left( L^2(\T_{2 \pi r^{-1}}), (M[\lambda]^t)_{t \in \Z}, D_r = \frac{d}{i d k} \right)$ is a discrete-time quantum walk.
This type of quantum walks will often appear in this paper.
The walk is analytic (smooth, or uniform), if $\lambda$ analytic (smooth, or continuous, respectively).
\end{example}

\begin{example}[Direct sum]
Let $\left(\Hil_1, \left(U_1^{(t)}\right), D_1\right)$ and $\left(\Hil_2, \left(U_2^{(t)}\right), D_2\right)$ be two continuous-time or discrete-time quantum walks.
Then $(\Hil_1 \oplus \Hil_2$, $(U_1^{(t)} \oplus U_2^{(t)} )$, $D_1 \oplus D_2 )$ is also a quantum walk.
We call it {\it the direct sum quantum walk}.
\end{example}

\begin{example}[Amplification]
Let $m$ be a natural number.
Let $\left( \Hil, \left( U^{(t)} \right), D \right)$ be a quantum walk.
Then $\left(\Hil \otimes \C^n, \left(U^{(t)} \otimes {\rm id}\right), D \otimes {\rm id}\right)$ is also a quantum walk. We call it {\it the amplification quantum walk}.
\end{example}

Analyticity, smoothness, and uniformity are preserved under direct sum and amplification.

\subsection{Intertwiners between two quantum walks and their regularity}
\begin{definition}
Let $(\mathcal{H}_1, (U_1^{t})_{t \in \Z}, D_1)$ and $(\mathcal{H}_2, (U_2^{t})_{t \in \Z}, D_2)$
be two one-dimensional discrete-time quantum walks.
A bounded operator $X \colon \Hil_1 \to \Hil_2$ is called an {\rm intertwiner} between them, if it satisfies $X U_1 = U_2 X$.
\end{definition}

An intertwiner between $(\Hil_1, (U^{t})_{t \in \Z}, D)$ and itself is nothing other than an operator $X \in \mathcal{B}(\Hil)$ which commutes with $U$.

\begin{definition}
If
the mapping $\R \ni k \mapsto e^{i k D_2} X e^{-ik D_1} \in \mathcal{B}(\Hil_2 \leftarrow \Hil_1)$ is continuous (or smooth),
the intertwiner $X$ is said to be {\rm uniform} (or {\rm smooth}) with respect to $D_1$ and $D_2$.
If there exists a holomorphic extension of the map 
\[\R \ni k \mapsto e^{i k D_2} X e^{-i k D_1} \in \mathcal{B}(\mathcal{H})\]
defined on a domain of the form 
$\{\kappa \in \C | -\delta < {\rm Im}(\kappa) < \delta \}$, 
$X$ is said to be {\rm analytic}.
\end{definition}

The bounded operator $X \colon \Hil_1 \to \Hil_2$ defines an operator
\[\widetilde{X} = 
\left(
\begin{array}{cc}
0 & 0\\
X & 0\\
\end{array}
\right)
\colon \Hil_1 \oplus \Hil_2 
\to 
\Hil_1 \oplus \Hil_2 .
\]
The operator $X$ is an intertwiner between $U_1$ and $U_2$, if and only if $\widetilde{X}$ commutes with $U_1 \oplus U_2$.
The intertwiner $X$ is uniform (smooth, or analytic) with respect to $D_1$ and $D_2$,
if and only if $\widetilde X$ is uniform (smooth, or analytic, respectively) with respect to $D_1 \oplus D_2$.

Let $(\mathcal{H}_3, (U_3^{t})_{t \in \Z}, D_3)$ be another quantum walk.
If $X_1 \colon \Hil_1 \to \Hil_2$ is an intertwiner between $U_1$ and $U_2$, and if
$X_2 \colon \Hil_2 \to \Hil_3$ is an intertwiner between $U_2$ and $U_3$,
then
If $X_2 X_1 \colon \Hil_1 \to \Hil_3$ is an intertwiner between $U_1$ and $U_3$.
If $X_1$ is uniform with respect to $D_1$ and $D_2$, and if $X_2$ is uniform with respect to $D_2$ and $D_3$, then
$X_2 X_1$ is uniform with respect to $D_1$ and $D_3$.
Smoothness and analyticity are also preserved under this composition procedure.

To determine whether there exists a non-zero uniform intertwiner between given two homogeneous quantum walks, we use the following as a key tool.

\begin{proposition}\label{proposition: converging to an atom}
Let $r(1)$ and $r(2)$ be positive real numbers.
Let $V$ be a bounded operator from $\ell_2(r(1) \Z)$ to $\ell_2(r(2) \Z)$.
Let $D_{r(1)}$ be the diagonal operator $\delta_y \mapsto y \delta_y, y \in r(1) \Z$ on $\ell_2 (r(1) \Z)$. 
Let $D_{r(2)}$ be the diagonal operator $\delta_x \mapsto x \delta_x, x \in r(2) \Z$ on $\ell_2 (r(2) \Z)$. 
Suppose that
$V$ is uniform with respect to $D_{r(1)}$ and $D_{r(2)}$.
For $y \in r(1) \mathbb{N}$, define a probability measure $p_y$ on $\mathbb{R}$ by
\[ p_y = \sum_{x \in r(2) \Z} \left| \left\langle V \delta_y, \delta_x \right\rangle_{\ell_2(r(2) \Z)} \right|^2 \delta_{x / y},\]
where $\delta_{y / x}$ stands for the point mass at $y / x \in \R$.
Then for every positive number $\delta$, we have
\[\lim_{y \to \infty} p_y((-\infty, 1 - \delta] \cup [1 + \delta, \infty)) = 0\]
\end{proposition}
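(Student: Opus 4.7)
The plan is to convert the uniformity of $V$ into a quantitative concentration bound on $V\delta_y$, and then to rescale the frequency parameter $k$ by $1/y$ in order to translate this into concentration of $p_y$ near the value $x/y = 1$.

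First I would unpack uniformity at the vector $\delta_y$. Fix $\varepsilon > 0$ and choose $k_0 > 0$ with $\|e^{ikD_{r(2)}} V e^{-ikD_{r(1)}} - V\| < \varepsilon$ for $|k| < k_0$. Setting $c_x = \langle V\delta_y, \delta_x\rangle$, a direct computation using $D_{r(j)}\delta_z = z\delta_z$ together with Parseval's identity yields
\[
\bigl\|(e^{ikD_{r(2)}} V e^{-ikD_{r(1)}} - V)\delta_y\bigr\|^2 = \sum_{x \in r(2)\Z} |c_x|^2 \,|e^{ik(x-y)} - 1|^2,
\]
so this sum is bounded by $\varepsilon^2$ uniformly in $y$ whenever $|k| < k_0$.

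Next I would rescale and integrate. Fix $\delta > 0$, set $S = 2/\delta$, and substitute $k = s/y$ (legitimate for $y > 0$). For $y > S/k_0$ the preceding bound holds at every $s \in [0, S]$, giving
\[
\int_0^S \sum_{x} |c_x|^2 \,|e^{is(x/y - 1)} - 1|^2 \,ds \leq \varepsilon^2 S.
\]
The elementary identity $\int_0^S |e^{is\xi} - 1|^2 \,ds = 2S - 2\sin(S\xi)/\xi \geq 2S - 2/|\xi|$ shows that, with the choice $S = 2/\delta$, this integral is at least $S$ whenever $|\xi| = |x/y - 1| \geq \delta$. Restricting the outer sum to such $x$ and dividing by $S$, I obtain
\[
p_y\bigl((-\infty, 1-\delta] \cup [1+\delta, \infty)\bigr) \leq \varepsilon^2.
\]
Since $\varepsilon > 0$ is arbitrary and the threshold on $y$ depends only on $\varepsilon$ and $\delta$, letting $y \to \infty$ proves the claim.

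The key conceptual step is the rescaling $k = s/y$: uniformity naturally controls the difference $x - y$, whereas the statement concerns the \emph{ratio} $x/y - 1$, so the frequency has to be scaled like $1/y$. Once that substitution is in place, the integration is a standard Fej\'er-type device for converting a pointwise oscillatory bound into $\ell_2$ mass concentration, and I do not anticipate any further serious obstacle.
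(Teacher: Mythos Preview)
Your argument is correct. The computation of the matrix coefficients, the rescaling $k = s/y$, the elementary bound $\int_0^S |e^{is\xi}-1|^2\,ds \ge 2S - 2/|\xi|$, and the final restriction to $|\xi|\ge\delta$ all check out, and the Tonelli interchange of sum and integral is justified since everything is nonnegative.

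Your route differs from the paper's. The paper introduces a Gaussian-averaged operator
\[
V_\sigma = \int_{-\infty}^\infty e^{-k^2/2\sigma^2}\, e^{ikD_{r(2)}} V e^{-ikD_{r(1)}}\,\frac{dk}{\sqrt{2\pi}\,\sigma},
\]
observes that uniformity gives $\|V_\sigma - V\|\to 0$ as $\sigma\to 0$, and that the matrix coefficients of $V_\sigma$ pick up a Gaussian factor $e^{-(x-y)^2\sigma^2/2}$; the tail bound then splits into a piece controlled by this Gaussian decay and a piece controlled by $\|V-V_\sigma\|$. Your approach skips the auxiliary operator entirely: you work at the level of the single vector $\delta_y$, rescale the frequency by $1/y$, and use a flat (Fej\'er-type) average in place of a Gaussian one. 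This is more elementary and more direct for the statement at hand. The paper's approach buys an operator-level approximation of $V$ by operators with explicit off-diagonal decay, which is what feeds into the uniform Roe algebra remark immediately following the proposition; your argument does not produce that byproduct, but it is not needed for the proposition itself.
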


\begin{proof}
For every real number $k$, the matrix coefficient of $\exp (i k D_{r(2)}) V \exp (-ik D_{r(1)})$ at $(x, y)$ is
\[\left\langle \exp (i k D_{r(2)}) V \exp (-ik D_{r(1)}) \delta_y, \delta_x \right\rangle_{\ell_2(r(2) \Z)}
= \exp(ik (x - y)) \left\langle V \delta_y, \delta_x \right\rangle_{\ell_2(r(2) \Z)}.
\]
For a positive real number $\sigma$, define an operator $V_\sigma$ by
\[V_\sigma = \int_{-\infty}^\infty \exp \left( -\frac{k^2}{2 \sigma^2} \right) \exp (i k D_{r(2)}) V \exp (-ik D_{r(1)}) \dfrac{dk}{\sqrt{2 \pi} \sigma}.\]
Note that the operator norm of $V_\sigma$ is no more than that of $V$.
We also note that as $\sigma$ tends to $0$, $V_\sigma$ converges to $V$ in the operator norm topology.
The matrix coefficient of $V_\sigma$ is
\begin{eqnarray*}
\left\langle V_\sigma \delta_y, \delta_x \right\rangle_{\ell_2(r(2) \Z)}
&=& \int_{-\infty}^\infty \exp \left( -\frac{k^2}{2 \sigma^2} \right) \exp(ik (x - y)) \dfrac{dk}{\sqrt{2 \pi} \sigma} \left\langle V \delta_y, \delta_x \right\rangle_{\ell_2(r(2) \Z)}\\
&=& 
\exp\left( -\frac{(x - y)^2}{2} \sigma^2 \right)
\left\langle V \delta_y, \delta_x \right\rangle_{\ell_2(r(2) \Z)}.
\end{eqnarray*}

Take arbitrary (small) positive real numbers $\delta$ and $\epsilon$.
There exists a (small) positive real number $\sigma$ such that
$\| V - V_\sigma \| < \epsilon$.
For such $\epsilon$ and $\sigma$,
there exists a positive number $K$ such that for every $y \in r(1) \Z$, 
\[\sum_{x \in r(2) \Z, |x - y| \ge K} \exp\left( - {(x - y)^2} \sigma^2 \right) < \epsilon.\]
We consider the case that $y \in r(1) \Z$ is larger than $K / \delta$.
Then
\begin{eqnarray*}
p_y(\R \setminus (1 - \delta, 1 + \delta))
&=&
\sum_{x \in r(2) \Z, |x / y - 1| \ge \delta} \left| \left\langle V \delta_y, \delta_x \right\rangle_{\ell_2(r(2) \Z)} \right|^2.
\end{eqnarray*}
Since the inequality $|x / y - 1| \ge \delta$ implies $|x - y| \ge |y| \delta \ge K$,
we have
\begin{eqnarray*}
p_y(\R \setminus (1 - \delta, 1 + \delta))
&\le&
\sum_{x \in r(2) \Z, |x - y| \ge K} \left| \left\langle V \delta_y, \delta_x \right\rangle_{\ell_2(r(2) \Z)} \right|^2.
\end{eqnarray*}
We further obtain the following inequalities
\begin{eqnarray*}
& &p_y(\R \setminus (1 - \delta, 1 + \delta))\\
&\le&
\sum_{x \in r(2) \Z, |x - y| \ge K} 
\left| \left\langle V_\sigma \delta_y, \delta_x \right\rangle 
+ \left\langle (V - V_\sigma) \delta_y, \delta_x \right\rangle \right|^2\\
&\le&
2 \sum_{x \in r(2) \Z, |x - y| \ge K}
\left| \left\langle V_\sigma \delta_y, \delta_x \right\rangle  \right|^2
+ 
2 \sum_{x \in r(2) \Z, |x - y| \ge K}
\left| \left\langle (V - V_\sigma) \delta_y, \delta_x \right\rangle \right|^2\\
&\le&
2 \sum_{x \in r(2) \Z, |x - y| \ge K} 
\exp\left( - {(x - y)^2} \sigma^2 \right) \left| \left\langle V \delta_y, \delta_x \right\rangle  \right|^2
+ 
2 \left\| (V - V_\sigma) \delta_y \right\|^2.
\end{eqnarray*}
By the assumptions on $\sigma$ and $K$, we have
\begin{eqnarray*}
p_y(\R \setminus (1 - \delta, 1 + \delta))
\le
2 \|V\| \epsilon
+ 
2 \|V - V_\sigma\|^2
\le
2 \|V\| \epsilon
+ 
2 \epsilon^2.
\end{eqnarray*}
It follows that the positive measure $p_y$ tends to $0$ on $\R \setminus (1 - \delta, 1 + \delta)$.
\end{proof}

\begin{remark}\label{remark: uniform Roe algebra}
A bounded operator $V \colon \ell_2 \Z \to \ell_2 \Z$ is continuous with respect to the standard diagonal operator $D_1 \colon \delta_x \to x \delta_x, x \in \Z$, if and only if $V$ is an element of the uniform Roe algebra ${\rm C}^*_{\rm u} (\Z)$ defined in \cite[Subsection 4.5]{Roe}.
We can easily prove it, using $V_\sigma$ introduced in the above proof.
In this paper, we regard ${\rm C}^*_{\rm u} (\Z)$ as the space of operators on $\ell_2(\Z)$ which are uniform with respect to $D_1$.
\end{remark}

\subsection{Similarity between discrete-time quantum walks}

\begin{definition}
\label{definition: similarity}
Discrete-time quantum walks $(\mathcal{H}_1, (U_1^{t})_{t \in \Z}, D_1)$ and $(\mathcal{H}_2$, $(U_2^{t})_{t \in \Z}$, $D_2)$ are said to be {\rm similar}, if
there exists a unitary operator $V \colon \Hil_1 \to \Hil_2$ which is a {\it smooth} intertwiner between $(\mathcal{H}_1, (U_1^{t})_{t \in \Z}, D_1)$ and $(\mathcal{H}_2, (U_2^{t})_{t \in \Z}, D_2)$.
If $V$ maps the domain of $D_1$ to that of $D_2$ and $D_2 V = V D_1$ holds, or equivalently, if the mapping $k \mapsto \exp(i k D_2) V \exp(-ik D_1)$ is constant, then these walks are said to be {\rm unitary equivalent}.
\end{definition}

If $(\mathcal{H}_1, (U_1^{t})_{t \in \Z}, D_1)$ 
and $(\mathcal{H}_2, (U_2^{t})_{t \in \Z}, D_2)$ are similar, 
and if one of them is smooth, then 
the other is also smooth.
Similarity is an equivalence relation on smooth quantum walks.

\begin{example}
Example \ref{example: discrete free QW} and Example
\ref{example: the inverse Fourier transform} are unitary equivalent.
Example \ref{example: 2by2 homogeneous} and Example \ref{example: the inverse Fourier transform of 2 by 2} are also unitary equivalent.
The Fourier transform is a smooth intertwining operator.
\end{example}

\begin{example}\label{example: replacing D}
Let $(\Hil, (U^{t})_{t \in \Z}, D_1)$ be a smooth quantum walk.
Let $D_2$ be a self-adjoint operator on $\Hil$.
If the mapping $\R \ni k \mapsto e^{ik D_2} e^{- i k D_1}$ is smooth, then the
quantum walks $(\Hil, (U^{t})_{t \in \Z}, D_1)$ and $(\Hil, (U^{t})_{t \in \Z}, D_2)$ are similar.
Indeed, the identity operator gives a smooth intertwining operator between them.
\end{example}

\begin{example}
Let $(\Hil, (U^{t})_{t \in \Z}, D)$ be a smooth quantum walk.
Let $V$ be a unitary operator on $\Hil$. If $V$ is smooth with respect to $D$, then
quantum walks $(\Hil, (U^{t})_{t \in \Z}, D)$ and $(\Hil, (V U^{t} V^{-1})_{t \in \Z}, D)$ are similar.
Indeed, the unitary operator $V$ gives a smooth intertwining operator between them.
\end{example}

Similarity is compatible with direct sum and with amplification.

\subsection{Asymptotic behavior of quantum walks}

Let $(\Hil, (U^t)_{t \in \Z}, D)$ be a discrete-time smooth quantum walk.
Fix a unit vector $\xi$ in $\Hil$.
We often call $\xi$ an initial unit vector of the quantum walk.
Let $E(\cdot)$ be the spectral projection of $D$.
Recall that $E$ maps Borel subsets of $\R$ to orthogonal projection in $\mathcal{B}(\Hil)$.
For every $t \in \mathbb{N}$, we have a probability measure on $\R$ defined by
$\langle E(\cdot) U^t \xi, U^t \xi \rangle$. We pay attention on the push-forward $p_t$ of the measure under the mapping $\R \ni x \mapsto x / t \in \R$.
The measure $p_t$ is given by
\[p_t(\Omega) = \left\langle E(t \Omega) U^t \xi, U^t \xi \right\rangle, \quad {\rm a\  Borel\ subset\ } \Omega \subset \R. \]

The vector $\xi \in \Hil$ is said to be {\it smooth} with respect to $D$, if $\xi \in \bigcap_{m \in \N} \mathrm{dom} (D^m)$.
We often assume that 
the quantum walk $(\Hil, (U^t)_{t \in \Z}, D)$ is smooth and that the initial unit vector
$\xi$ is smooth.
It is not hard to see that for every integer $t$, $U^t \xi$ is also smooth with respect to $D$.

\begin{lemma}\label{lemma: moment}
The $m$-th moment of $p_t$ is
$\displaystyle \int_{v \in \R} v^m \cdot p_t(dv) 
=
\left\langle \dfrac{1}{t^m} D^m  U^t \xi, U^t \xi \right\rangle.$
\end{lemma}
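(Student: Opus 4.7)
The plan is to start from the definition $p_t(\Omega) = \langle E(t\Omega) U^t \xi, U^t \xi \rangle$ and reduce the claimed moment identity to the functional calculus applied to the self-adjoint operator $D$ and the vector $U^t \xi$, which by hypothesis lies in $\bigcap_m \mathrm{dom}(D^m)$.

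First, I would rewrite $p_t$ as the push-forward of the spectral probability measure $\mu_t(\cdot) = \langle E(\cdot) U^t \xi, U^t \xi \rangle$ under the scaling map $\phi_t \colon \R \ni x \mapsto x/t \in \R$. Concretely, for every Borel set $\Omega \subset \R$,
\[
p_t(\Omega) = \mu_t(\phi_t^{-1}(\Omega)) = \mu_t(t \Omega),
\]
which is exactly the identity in the paper. Once $p_t = (\phi_t)_* \mu_t$ is established, the standard change-of-variables formula for push-forward measures yields
\[
\int_{\R} v^m \, p_t(dv) = \int_{\R} \phi_t(x)^m \, \mu_t(dx) = \frac{1}{t^m} \int_{\R} x^m \, \mu_t(dx).
\]

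Second, I would invoke the functional calculus for the self-adjoint operator $D$: since $U^t \xi \in \mathrm{dom}(D^m)$, the spectral theorem gives
\[
\int_{\R} x^m \, \mu_t(dx) = \int_{\R} x^m \, \langle E(dx) U^t \xi, U^t \xi \rangle = \langle D^m U^t \xi, U^t \xi \rangle.
\]
Combining this with the display above finishes the proof.

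The only conceptual point worth checking carefully, and the place where one might stumble, is the integrability of $x \mapsto x^m$ against $\mu_t$; this is equivalent to the requirement $U^t \xi \in \mathrm{dom}(D^{m/2})$ for the integral to be finite and in fact to $U^t \xi \in \mathrm{dom}(D^m)$ to write $\langle D^m U^t\xi, U^t\xi\rangle$ as a matrix coefficient. Both are guaranteed by the smoothness of $U^t \xi$ with respect to $D$, which has been noted just before the lemma statement. Everything else is routine change of variables and the spectral theorem, so no real obstacle remains.
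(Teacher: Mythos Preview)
Your argument is correct and is essentially the same as the paper's: the paper performs the change of variables $x = tv$ in one line and then identifies $\int x^m \langle E(dx) U^t\xi, U^t\xi\rangle$ with $\langle D^m U^t\xi, U^t\xi\rangle$ via the spectral theorem. Your version is simply more explicit, including the integrability remark, which the paper leaves implicit.
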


\begin{proof}
For every $t \in \N$, 
we calculate the $m$-th moment of $p_t$ as follows:
\begin{eqnarray*}
\int_{v \in \R} v^m \cdot p_t(dv) 
&=&
\int_{v \in \R} v^m \cdot \langle E(t dv) U^t \xi, U^t \xi \rangle \\
&=&
\int_{x \in \R} \dfrac{1}{t^m} x^m \cdot \langle E(dx) U^t \xi, U^t \xi \rangle.
\end{eqnarray*}
This is nothing other than the right hand side of the lemma.
\end{proof}

\begin{definition}
If the weak limit of $p_t$ exists, it is called the {\rm limit distribution} of the quantum walk $(\Hil, (U^t)_{t \in \Z}, D)$ with respect to the vector $\xi$.
\end{definition}

\begin{lemma}\label{lemma: upper bound for moments}
Let $(\Hil, (U^t)_{t \in \Z}, D)$ be a discrete-time {\rm smooth} quantum walk.
Let $\xi \in \Hil$ be a unit vector.
Assume that $\xi$ is smooth with respect to $D$.
Then we have
\begin{eqnarray*}
\limsup_t \left| \int_{v \in \R} v^m \cdot p_t(dv) \right| 
\le \|[D, U]\|^m.
\end{eqnarray*}
\end{lemma}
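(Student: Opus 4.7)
The plan is to apply Lemma \ref{lemma: moment} together with the Cauchy--Schwarz inequality, reducing the claim to a norm bound on $D^m U^t \xi$. Since $\|U^t \xi\| = 1$, Lemma \ref{lemma: moment} gives
\[\left|\int_{v \in \R} v^m \, p_t(dv)\right| = \frac{1}{t^m}\left|\langle D^m U^t \xi, U^t \xi\rangle\right| \le \frac{\|D^m U^t \xi\|}{t^m},\]
so it suffices to prove $\limsup_{t \to \infty} \|D^m U^t \xi\|/t^m \le \|[D, U]\|^m$.

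Let $\delta$ denote the derivation $\delta(X) = [D, X]$. By induction on $m$, using $D \, \delta^k(U^t) = \delta^k(U^t)\, D + \delta^{k+1}(U^t)$ on the domain of $D$, I would establish the identity
\[D^m U^t \xi = \sum_{k=0}^m \binom{m}{k} \,\delta^k(U^t)\, D^{m-k}\xi,\]
valid for smooth $\xi$. The inductive step is legitimate because smoothness of the walk ensures every $\delta^k(U^t)$ is a bounded operator that preserves the space of smooth vectors: iterating Leibniz for $\delta$ yields
\[\delta^k(U^t) = \sum_{\alpha_1+\cdots+\alpha_t=k} \frac{k!}{\alpha_1!\cdots\alpha_t!}\,\delta^{\alpha_1}(U)\cdots\delta^{\alpha_t}(U),\]
and each factor $\delta^j(U)$ is bounded by smoothness.

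The quantitative heart of the proof is the estimate
\[\|\delta^k(U^t)\| \le t^k \|[D, U]\|^k + O(t^{k-1}).\]
The $\binom{t}{k}$ multi-indices with entries in $\{0,1\}$ yield the leading contribution $k!\binom{t}{k}\|[D,U]\|^k = (t^k + O(t^{k-1}))\|[D,U]\|^k$; the remaining multi-indices have some $\alpha_i \ge 2$, and there are only $O(t^{k-1})$ of them (via $\binom{t+k-1}{k} - \binom{t}{k} = O(t^{k-1})$), each with a factor bounded uniformly in $t$ by $\prod_j \|\delta^j(U)\|^{\text{something}}$, giving a total remainder of $O(t^{k-1})$.

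Combining these ingredients via Cauchy--Schwarz,
\[\frac{\|D^m U^t \xi\|}{t^m} \le \sum_{k=0}^m \binom{m}{k}\,\frac{\|\delta^k(U^t)\|}{t^k}\cdot\frac{\|D^{m-k}\xi\|}{t^{m-k}}.\]
For $k < m$, the factor $\|D^{m-k}\xi\|/t^{m-k}$ tends to $0$ since $\xi$ is smooth, so those terms vanish in the limsup; for $k = m$, the first factor is at most $\|[D, U]\|^m + O(1/t)$ while the second equals $\|\xi\| = 1$. Only the $k = m$ term survives, which yields the claim. The main obstacle I anticipate is the combinatorial bookkeeping: carefully checking that the non-leading multi-indices in the Leibniz expansion of $\delta^k(U^t)$ really contribute only $O(t^{k-1})$ rather than $O(t^k)$, after which the rest is routine.
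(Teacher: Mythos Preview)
Your proposal is correct and follows essentially the same approach as the paper. The paper writes the Leibniz expansion in one step, indexing terms by maps $s\colon\{1,\dots,m\}\to\{0,1,\dots,t\}$ and isolating the subset $I_0$ of injective maps avoiding $0$, whereas you factor the same expansion into two stages (first $D^m(U^t\xi)=\sum_k\binom{m}{k}\delta^k(U^t)D^{m-k}\xi$, then the multinomial expansion of $\delta^k(U^t)$); the combinatorial counts and the resulting estimate $\limsup_t t^{-m}\|D^m U^t\xi\|\le\|[D,U]\|^m$ are identical.
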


\begin{proof}

For a while, we fix $m$ and consider the case that $t$ is large.
By Lemma \ref{lemma: smoothness}, we can define a sequence of bounded operators $u_0, u_1, u_2, \cdots$ as follows:
\[u_0 = U, \quad u_1 = [D, u_0], \quad u_2 = [D, u_1], \cdots.\]
By smoothness of $\xi$, we can also define a sequence of vectors $\xi_0, \xi_1, \xi_2, \cdots \in \Hil$ by
\[\xi_t = D^t \xi, \quad t \in \N.\]
For smooth operators and smooth vectors, the following Leibniz rule holds:
\begin{eqnarray*}
D X \eta &=& X' \eta + X \eta', \quad {\rm where }\ X' = [D, X], \eta' = D \eta,
\\
\left[D, X Y\right] &=& X' Y + X Y', \quad {\rm where } \ X' = [D, X], Y' = [D, Y].
\end{eqnarray*}
By the Leibniz rule, the vector $D^m  U^t \xi$ can be expressed as follows:
\begin{eqnarray}\label{equation: Leibniz rule}
D^m  U^t \xi = \sum_{s \in I} u_{\sharp s^{-1}(t)} u_{\sharp s^{-1}(t - 1)} \cdots u_{\sharp s^{-1}(1)} \xi_{\sharp s^{-1}(0)}.
\end{eqnarray}
In this formula,
\begin{itemize}
\item
$s$ is an element of the index set
\[ I = \left\{ s \colon \{1, 2, \cdots, m\} \to \{0, 1, \cdots, t\} \ | \ \rm{a\ map}\right\},\]
\item
$s^{-1} (j)$ is the inverse image of $\{j\} \subset  \{0, 1, \cdots, t\}$ under the mapping $s$.
\item
$\sharp s^{-1} (j)$ is the number of elements of the inverse image.
\end{itemize}
Define a subset $I_0$ of $I$ as follows:
\[ I_0 = \left\{ s \colon \{1, 2, \cdots, m\} \to \{0, 1, \cdots, t\} \in I \ | \ 
s^{-1}(0) = \emptyset, s {\rm\ is\ injective}
\right\}.\]
It is not hard to see that if $t$ is large, almost all the elements of $I$ are in $I_0$.
More precisely, $\lim_{t \to \infty} \sharp I_0 / \sharp I$ is $1$.
For $s \in I$, the norm of the term $u_{\sharp s^{-1}(t)}$ $u_{\sharp s^{-1}(t - 1)}$ $\cdots$ $u_{\sharp s^{-1}(1)}$ $\xi_{\sharp s^{-1}(0)}$ in the equation (\ref{equation: Leibniz rule})
is bounded by 
\[\max\{1 = \|u_0\|, \|u_1\|, \cdots, \|u_m\| \}^m \max\{1 = \|\xi\|, \|\xi_1\|, \cdots, \|\xi_m\|\}.\]
It follows that
\begin{eqnarray*}
\limsup_t \left\| \dfrac{1}{t^m} D^m  U^t \xi \right\|
&\le&
\limsup_t 
\dfrac{1}{\sharp I}
\sum_{s \in I} \left\| u_{\sharp s^{-1}(t)} u_{\sharp s^{-1}(t - 1)} \cdots u_{\sharp s^{-1}(1)} \xi_{\sharp s^{-1}(0)} \right\|\\
&=&
\limsup_t 
\dfrac{1}{\sharp I}
\sum_{s \in I_0} \left\| u_{\sharp s^{-1}(t)} u_{\sharp s^{-1}(t - 1)} \cdots u_{\sharp s^{-1}(1)} \xi_{\sharp s^{-1}(0)} \right\|\\
&\le&
\limsup_t 
\dfrac{\sharp I_0}{\sharp I}
\|u_1\|^m
\\
&=&
\|[D, U]\|^m
\end{eqnarray*}
Combining with the equation in Lemma \ref{lemma: moment}, for every positive integer $m$, we have
\begin{eqnarray*}
\limsup_t \left| \int_{v \in \R} v^m \cdot p_t(dv) \right| 
=
\limsup_t \left| \left\langle \dfrac{1}{t^m} D^m  U^t \xi, U^t \xi \right\rangle \right|
\le
\|[D, U]\|^m.
\end{eqnarray*}
\end{proof}

\begin{proposition}\label{proposition: compact support}
Let $(\Hil, (U^t)_{t \in \Z}, D)$ be a discrete-time smooth quantum walk.
Let $\xi \in \Hil$ be a smooth unit vector.
For every $L \in \R$ larger than $\|[D, U]\|$,
we have
\[\lim_{t \to \infty} p_t((-\infty, -L] \cup [L, \infty)) = 0.\]
\end{proposition}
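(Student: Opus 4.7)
The plan is to derive a Chebyshev-type tail estimate from the moment bound in Lemma \ref{lemma: upper bound for moments}. Fix $L > \|[D,U]\|$ and take an even positive integer $m$. Since $v^m \geq 0$ for every $v \in \R$ when $m$ is even, the probability measure $p_t$ satisfies the elementary inequality
\[
\int_{\R} v^m \, p_t(dv) \;\geq\; \int_{\{|v| \geq L\}} v^m \, p_t(dv) \;\geq\; L^m \cdot p_t\bigl((-\infty,-L] \cup [L,\infty)\bigr).
\]
(One should check that the left-hand side is indeed a nonnegative real number; by Lemma \ref{lemma: moment} it equals $\langle t^{-m} D^m U^t \xi, U^t \xi \rangle$, and for even $m$ the operator $D^m$ is positive self-adjoint, so this matrix coefficient is a nonnegative real, matching $\int v^m\, p_t(dv)$ as required.)

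Dividing by $L^m$ and taking $\limsup_{t \to \infty}$, Lemma \ref{lemma: upper bound for moments} yields
\[
\limsup_{t \to \infty} p_t\bigl((-\infty,-L] \cup [L,\infty)\bigr) \;\leq\; \frac{1}{L^m}\, \|[D,U]\|^m \;=\; \left(\frac{\|[D,U]\|}{L}\right)^{\!m}.
\]
Since the inequality $L > \|[D, U]\|$ forces the ratio $\|[D,U]\|/L$ to be strictly less than $1$, letting $m$ tend to infinity through the even positive integers drives the right-hand side to $0$. Hence
\[
\lim_{t \to \infty} p_t\bigl((-\infty,-L] \cup [L,\infty)\bigr) = 0,
\]
which is the desired conclusion.

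There is no serious obstacle here: the proof is a direct Chebyshev argument, and the only subtle point is ensuring that the even-$m$ moment is a genuine nonnegative real number so that Markov's inequality applies without any absolute-value losses against the bound given in Lemma \ref{lemma: upper bound for moments}. This positivity follows from the self-adjointness of $D$, and the hypothesis that $\xi$ is smooth guarantees that $D^m U^t \xi$ is a well-defined vector so that the matrix coefficient in Lemma \ref{lemma: moment} makes sense.
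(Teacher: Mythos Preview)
Your proof is correct and essentially identical to the paper's own argument: both use the Chebyshev--Markov tail bound $p_t(\{|v|\ge L\}) \le L^{-2m}\int v^{2m}\,p_t(dv)$ together with Lemma \ref{lemma: upper bound for moments}, and then let $m\to\infty$. The paper writes the even exponent as $2m$ with $m$ a positive integer rather than restricting $m$ to even integers, but the content is the same.
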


\begin{proof}
By Lemma \ref{lemma: upper bound for moments},
for every positive integer $m$, we obtain the following:
\begin{eqnarray*}
\limsup_t p_t ((-\infty, -L] \cup [L, \infty))
&\le&
\limsup_t \dfrac{1}{L^{2m}} \int_{v \in (-\infty, -L] \cup [L, \infty)}  v^{2m} \cdot p_t (dv)\\
&\le&
\limsup_t \dfrac{1}{L^{2m}} \int_{v \in \R}  v^{2m} \cdot p_t (d v)\\
&\le&
\dfrac{ \|[D, U]\|^{2m}}{L^{2m}}.
\end{eqnarray*}
Since the positive integer $m$ is arbitrary, the conclusion follows.
\end{proof}

\begin{corollary}
Let $(\Hil, (U^t)_{t \in \Z}, D)$ be a discrete-time smooth quantum walk.
Let $\xi \in \Hil$ be a smooth unit vector.
If the limit distribution exists, then its support is compact.
\end{corollary}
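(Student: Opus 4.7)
The plan is to derive the corollary as an immediate consequence of Proposition \ref{proposition: compact support} combined with the portmanteau characterization of weak convergence of probability measures on $\R$.

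First I would fix any real number $L > \|[D, U]\|$ and consider the closed subset $F_L = (-\infty, -L] \cup [L, \infty) \subset \R$. Proposition \ref{proposition: compact support} asserts that $\lim_{t \to \infty} p_t(F_L) = 0$. Denoting by $p_\infty$ the weak limit of $p_t$, the portmanteau theorem then yields $p_\infty(F_L) \le \limsup_t p_t(F_L) = 0$. Hence $p_\infty$ is concentrated on the open interval $(-L, L)$, and consequently $\mathrm{supp}(p_\infty) \subset [-L, L]$, which is compact.

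I do not anticipate any real obstacle; the argument is essentially a direct quotation of Proposition \ref{proposition: compact support} through the standard machinery of weak convergence. The one additional observation worth recording is that $L$ may be taken arbitrarily close to (but strictly greater than) $\|[D, U]\|$, which upgrades the conclusion to the more quantitative statement $\mathrm{supp}(p_\infty) \subset [-\|[D, U]\|, \|[D, U]\|]$, recovering the operator-norm bound on the commutator as an explicit radius for the support of the limit distribution.
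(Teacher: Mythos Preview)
Your proposal is correct and matches the paper's intent: the corollary is stated there without proof, as an immediate consequence of Proposition \ref{proposition: compact support}, and the portmanteau argument you give is exactly the standard way to fill in that one line. Your additional remark that the support is contained in $[-\|[D,U]\|,\|[D,U]\|]$ is also made in the paper, at the start of the proof of Proposition \ref{proposition: two kinds of convergence}.
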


\begin{proposition}\label{proposition: two kinds of convergence}
Let $(\Hil, (U^t)_{t \in \Z}, D)$ be a discrete-time smooth quantum walk.
Let $\xi \in \Hil$ be a smooth unit vector.
Let $p_\infty$ be a Borel measure on $\R$.
The sequence of the measures $p_t$ weakly converges to $p_\infty$, if and only if it converges to $p_\infty$ in law. 
\end{proposition}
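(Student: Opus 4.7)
The plan is to exploit the uniform tightness of the family $(p_t)$ provided by Proposition \ref{proposition: compact support}: for any $L > \|[D, U]\|$ we have $\lim_{t \to \infty} p_t(\R \setminus [-L, L]) = 0$, so all but arbitrarily small mass of each $p_t$ is confined to a single compact set. This is the key input that upgrades weak convergence (tested against compactly supported continuous functions) to convergence in law (tested against all bounded continuous functions).

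The implication ``convergence in law $\Rightarrow$ weak convergence'' is immediate because every compactly supported continuous function is bounded. For the converse, assume $p_t \to p_\infty$ weakly. First I would verify that $p_\infty$ is itself a probability measure supported in $[-L, L]$ for every $L > \|[D, U]\|$. Fixing such $L$ and choosing $\phi \in C_c(\R)$ with $0 \le \phi \le 1$ and $\phi \equiv 1$ on $[-L, L]$, weak convergence yields $\int \phi\, dp_t \to \int \phi\, dp_\infty$; the left-hand side is at least $p_t([-L, L]) \to 1$ by Proposition \ref{proposition: compact support}, forcing $p_\infty(\R) \ge 1$. The reverse inequality $p_\infty(\R) \le 1$ follows by monotone convergence applied to an increasing $C_c(\R)$-approximation of the constant function $1$, using $\int \phi_n\, dp_t \le 1$. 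A slight variant, with $\phi$ supported outside $[-L, L]$, shows $p_\infty(\R \setminus [-L, L]) = 0$.

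With tightness established on both sides, $C_b$-convergence follows by a standard cut-off argument. Given $g \in C_b(\R)$ and $\epsilon > 0$, choose $L$ large enough that $p_t(\R \setminus [-L, L]) < \epsilon$ for all sufficiently large $t$ and $p_\infty(\R \setminus [-L, L]) = 0$, and pick $\phi \in C_c(\R)$ with $\phi \equiv 1$ on $[-L, L]$ and $0 \le \phi \le 1$. Since $g\phi \in C_c(\R)$, weak convergence gives $\int g\phi\, dp_t \to \int g\phi\, dp_\infty$, while the discarded tail is controlled by
\[\left| \int g\, dp_t - \int g\phi\, dp_t \right| \le \|g\|_\infty \cdot p_t(\R \setminus [-L, L]) < \|g\|_\infty \, \epsilon,\]
together with the vanishing analogue for $p_\infty$. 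Taking $\limsup_{t}$ and then letting $\epsilon \to 0$ completes the proof.

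I do not anticipate any substantial obstacle: the whole argument is the textbook tightness-plus-weak-convergence technique, with Proposition \ref{proposition: compact support} supplying the tightness essentially for free. The only mild subtlety is the bookkeeping required to confirm that $p_\infty$ is genuinely a probability measure rather than a defective one, which the first reduction above handles.
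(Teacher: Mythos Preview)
Your argument is internally correct, but it is a proof of a different statement: you have taken ``weak convergence'' to mean testing against $C_c(\R)$ and ``convergence in law'' to mean testing against $C_b(\R)$. In this paper the terminology is nonstandard but unambiguous from the proof and from the reference to \cite[Theorem~4.5.5]{Chung} (the method of moments): here \emph{weak convergence} is ordinary $C_b$-convergence, while \emph{convergence in law} means convergence of all moments, i.e.\ $\int v^m\,p_t(dv)\to\int v^m\,p_\infty(dv)$ for every $m$. You can also see this from how the proposition is used: in Theorem~\ref{theorem: similarity and limit distribution} and in Lemma~\ref{lemma: diffeomorphism on T} the author establishes moment convergence and then invokes the present proposition to conclude weak convergence.

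With the correct reading, the substantive direction is ``weak $\Rightarrow$ in law'': from $C_b$-convergence one must deduce convergence of $\int v^m\,dp_t$ for unbounded test functions $v^m$. Tightness alone (Proposition~\ref{proposition: compact support}) is not sufficient for this, because it does not control the contribution of the tails to polynomial integrals uniformly in $t$. The paper's argument instead uses the quantitative moment bound of Lemma~\ref{lemma: upper bound for moments}, $\limsup_t |\int v^{2m}\,dp_t|\le \|[D,U]\|^{2m}$, to dominate the tail of $|f-g|$ by $(v/L)^{2m}$ and then push $(\|[D,U]\|/L)^{2m}$ below $\epsilon$. The other direction, ``in law $\Rightarrow$ weak'', is handled by Weierstrass approximation of a given $f\in C_b(\R)$ by a polynomial on $[-L,L]$, again with the same tail control. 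Your cut-off argument does not touch either of these points, so as written it does not establish the proposition.
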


`If part' of this proposition is a consequence of the general theory like \cite[Theorem 4.5.5]{Chung}.
We give a proof to make the argument self-contained.

\begin{proof}
If $p_t$ converges to $p_\infty$ in law,
then the support of $p_\infty$ is included in $[- \|[D, U]\|$, $\|[D, U]\|]$, 
by Lemma \ref{lemma: upper bound for moments}.
If $p_t$ weakly converges to $p_\infty$, 
then the support of $p_\infty$ is included in $[- \|[D, U]\|$, $\|[D, U]\|]$, 
by Proposition \ref{proposition: compact support}.

Let $\epsilon$ be an arbitrary positive real number less than $1$.
Take a real number $L$ larger than $\|[D, U]\|$.
For a bounded continuous function $f$ on $\R$,
and for a polynomial function $g$ on $\R$ satisfying
\begin{eqnarray}\label{equation: f and g}
|g(v) - f(v)| < \epsilon, \quad v \in [- L, L],
\end{eqnarray}
we have
\begin{eqnarray}\label{equation: p infty}
\left| \int_{v \in \R} f(v) p_\infty(dv) - \int_{v \in \R} g(v) p_\infty(dv) \right| < \epsilon.
\end{eqnarray}
Since the function $f$ is bounded and $g$ is polynomial, there exists a positive integer $m$
such that
\begin{itemize}
\item
for $v \in ( - \infty, - L] \cup [L, \infty)$, 
$| f(v) - g(v) | < \left( \dfrac{v}{L} \right)^{2m}$, and
\item
$\left( \dfrac{\|[D, U]\|}{L} \right)^{2m} < \epsilon$.
\end{itemize}
For such a natural number $m$, we have
\begin{eqnarray*}
& &
\left| \int_{v \in \R} f(v) p_t(dv) - \int_{v \in \R} g(v) p_t(dv) \right|\\
&\le&  
\int_{-L}^L | f(v) - g(v) | p_t(dv) + \int_{v \in ( - \infty, - L] \cup [L, \infty)} | f(v) - g(v) | p_t(dv) \\
&\le&  
\epsilon + \int_{v \in \R} \left( \dfrac{v}{L} \right)^{2m} p_t(dv).
\end{eqnarray*}
By Lemma \ref{lemma: upper bound for moments},
if $t$ is large enough,
then we have
\begin{eqnarray}\label{equation: p t}
\left| \int_{v \in \R} f(v) p_t(dv) - \int_{v \in \R} g(v) p_t(dv) \right|
\le
\epsilon + \left( \dfrac{\|[D, U]\|}{L} \right)^{2m} + \epsilon < 3 \epsilon.
\end{eqnarray}

Suppose
that $p_t$ converges to $p_\infty$ in law.
Take an arbitrary bounded continuous function $f$ on $\R$.
There exists a polynomial function $g$ satisfying the inequality
(\ref{equation: f and g}), by the approximation theorem of Weierstrass.
The inequality (\ref{equation: p infty}) follows.
If $t$ is large enough,
we obtain the inequality (\ref{equation: p t}).
By the definition of convergence in law, if $t$ is large enough,  
\begin{eqnarray}\label{equation: g}
\left| \int_{v \in \R} g(v) p_t(dv) - \int_{v \in \R} g(v) p_\infty(dv) \right| < \epsilon.
\end{eqnarray}
The inequalities (\ref{equation: p infty}), (\ref{equation: p t}), and (\ref{equation: g})
implies 
that if $t$ is large enough,
\begin{eqnarray*}
\left| \int_{v \in \R} f(v) p_t(dv) - \int_{v \in \R} f(v) p_\infty(dv) \right|
< 5 \epsilon.
\end{eqnarray*}
We conclude that $p_t$ weakly converges to $p_\infty$.

Conversely suppose that $p_t$ weakly converges to $p_\infty$.
Take an arbitrary polynomial function $g$.
Then there exists a bounded continuous function $f$ on $\R$ satisfying the inequality
(\ref{equation: f and g}).
Then we have the inequality (\ref{equation: p infty}).
If $t$ is large enough,
we obtain the inequality (\ref{equation: p t}).
By the definition of weak convergence, if $t$ is large enough,  
\begin{eqnarray}\label{equation: f}
\left| \int_{v \in \R} f(v) p_t(dv) - \int_{v \in \R} f(v) p_\infty(dv) \right| < \epsilon.
\end{eqnarray}
The inequalities (\ref{equation: p infty}), (\ref{equation: p t}), and (\ref{equation: f})
implies 
that if $t$ is large enough,
\begin{eqnarray*}
\left| \int_{v \in \R} g(v) p_t(dv) - \int_{v \in \R} g(v) p_\infty(dv) \right|
< 5 \epsilon.
\end{eqnarray*}
We conclude that $p_t$ converges to $p_\infty$ in law.
\end{proof}

When we discuss the limit distribution,
we can freely replace the original quantum walk with similar one.

\begin{theorem}\label{theorem: similarity and limit distribution}
Assume that two smooth quantum walks $(\mathcal{H}_1, (U_1^{t})_{t \in \Z}, D_1)$ and $(\mathcal{H}_2, (U_2^{t})_{t \in \Z}, D_2)$ are similar.
Let $V \colon \Hil_1 \to \Hil_2$ be a unitary operator which gives similarity between the quantum walks. Let $\xi$ be a unit vector in $\Hil_1$ which is smooth with respect to $D_1$.
Then the vector $V \xi$ is smooth with respect to $D_2$. The quantum walk $(\mathcal{H}_2, (U_2^{t})_{t \in \Z}, D_2)$ has limit distribution with respect to $V \xi$, if and only if  $(\mathcal{H}_1, (U_1^{t})_{t \in \Z}, D_1)$ has limit distribution with respect to $\xi$. In this case, these limit distributions coincide.
\end{theorem}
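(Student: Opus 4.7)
The plan is to compare the moment sequences of the two walks, show that they agree asymptotically, and then pass from moments to weak convergence via Proposition \ref{proposition: two kinds of convergence}. Write $p_t^{(1)}$ for the measure associated to $(\mathcal{H}_1, (U_1^{t})_{t \in \Z}, D_1)$ and $\xi$, and $p_t^{(2)}$ for the one associated to $(\mathcal{H}_2, (U_2^{t})_{t \in \Z}, D_2)$ and $V \xi$.

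The first step is a Leibniz-type identity that encodes the smoothness of $V$. Iterated application of Lemma \ref{lemma: smoothness} to the smooth mapping $k \mapsto e^{i k D_2} V e^{-i k D_1}$ shows that each of the derivatives
\[
B_m := (-i)^m \left.\dfrac{d^m}{dk^m}\right|_{k = 0} e^{i k D_2} V e^{-i k D_1}, \qquad m \in \N,
\]
extends to a bounded operator in $\mathcal{B}(\Hil_2 \leftarrow \Hil_1)$ that agrees, on $\bigcap_n \mathrm{dom}(D_1^n)$, with the formal expansion $\sum_{j=0}^m \binom{m}{j}(-1)^j D_2^{m-j} V D_1^j$. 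From this I would deduce by induction on $m$ the identity
\[
D_2^m V \eta = V D_1^m \eta + \sum_{j=1}^m \binom{m}{j} B_j D_1^{m-j} \eta
\]
for every smooth $\eta$; the base case $m = 1$ is Lemma \ref{lemma: smoothness}, and in the inductive step one substitutes the hypothesis into the formal expansion of $B_{m+1}$ and collects terms using a Vandermonde-type identity. A byproduct is that $V$ sends smooth $D_1$-vectors to smooth $D_2$-vectors, so in particular $V \xi$ is smooth with respect to $D_2$.

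Combining this identity with the intertwining relation $V U_1^t = U_2^t V$, the unitarity of $V$, and Lemma \ref{lemma: moment}, the difference of $m$-th moments becomes
\begin{eqnarray*}
\int_{\R} v^m \cdot p_t^{(2)}(dv) - \int_{\R} v^m \cdot p_t^{(1)}(dv)
=
\dfrac{1}{t^m} \sum_{j=1}^m \binom{m}{j} \left\langle B_j D_1^{m-j} U_1^t \xi, V U_1^t \xi \right\rangle.
\end{eqnarray*}
The Leibniz bookkeeping from the proof of Lemma \ref{lemma: upper bound for moments} yields $\| D_1^{m-j} U_1^t \xi \| = O(t^{m-j})$, so each summand on the right is $O(t^{-j}) \to 0$. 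Consequently, for every $m \in \N$, the $m$-th moments of $p_t^{(1)}$ and of $p_t^{(2)}$ share the same asymptotic behavior as $t \to \infty$.

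Finally I would close the argument using Proposition \ref{proposition: two kinds of convergence}: if $p_t^{(1)}$ converges weakly to some $p_\infty$, that proposition yields convergence in law of $p_t^{(1)}$, the moment comparison above transfers convergence in law to $p_t^{(2)}$, and a second application of the proposition converts this back into weak convergence of $p_t^{(2)}$ to the same $p_\infty$. The converse direction is symmetric via the smooth unitary intertwiner $V^{-1}$. The main obstacle, and the only nontrivial work, is the algebraic identity in the first step: one must reconcile the analytic definition of $B_m$ as an operator-norm derivative with its formal expression in unbounded operators, and push the induction through cleanly. Once that identity is in hand, everything else is a mechanical application of the preceding lemmas.
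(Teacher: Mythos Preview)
Your proof is correct and follows essentially the same route as the paper: both arguments show that the $m$-th moments of $p_t^{(1)}$ and $p_t^{(2)}$ agree asymptotically via a Leibniz expansion involving the bounded iterated commutators of $V$ (your $B_j$ are exactly the paper's $v_j$), and then invoke Proposition~\ref{proposition: two kinds of convergence} to pass between weak convergence and convergence in law. The only difference is organizational: the paper writes out the full combinatorial Leibniz sum over maps $s \colon \{1,\dots,m\} \to \{0,1,\dots,t+1\}$ and isolates the terms with $s^{-1}(t+1) = \emptyset$, whereas you first collapse the $V$-part into the closed-form binomial identity $D_2^m V = \sum_{j} \binom{m}{j} B_j D_1^{m-j}$ and then quote the $O(t^{m-j})$ bound from Lemma~\ref{lemma: upper bound for moments}; this is a tidier packaging of the same computation.
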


\begin{proof}
By Lemma \ref{lemma: smoothness}, $V$ maps $\xi$ in $\mathrm{dom} (D_1^m)$ to an element of $\mathrm{dom} (D_2^m)$.

Let $p_{1, t}$ be the $t$-th probability measure of $(\mathcal{H}_1, (U_1^{t})_{t \in \Z}, D_1)$ with respect to $\xi$.
Let $p_{2, t}$ be the $t$-th probability measure of $(\mathcal{H}_2, (U_2^{t})_{t \in \Z}, D_2)$ with respect to $V \xi$.
By Lemma \ref{lemma: moment}, the $m$-th moment of $p_{1, t}$ is given by
\[\left\langle \dfrac{1}{t^m} D_1^m U_1^t  \xi, U_1^t \xi \right\rangle.\]
The $m$-th moment of $p_{2, t}$ is given by
\[
\left\langle \dfrac{1}{t^m} D_1^m U_2^t V \xi, U_2^t V \xi \right\rangle
=
\left\langle \dfrac{1}{t^m} D_2^m V U_1^t \xi, V U_1^t \xi \right\rangle.
\]
By Proposition \ref{proposition: two kinds of convergence}, it suffices to show that for every $m$, as $t$ tends to infinity, the difference of these moments converges to $0$.

Define sequences of bounded operators $\{v_j \colon \Hil_1 \to \Hil_2\}$ and $\{u_j \colon \Hil_1 \to \Hil_1\}$ by
\begin{eqnarray*}
v_0 = V, & & v_j = D_2 v_{j -1} - v_{j -1} D_1,\\
u_0 = U_1, & & u_j = D_1 u_{j -1} - u_{j -1} D_1.
\end{eqnarray*}
By Lemma \ref{lemma: smoothness}, these operators are bounded.
Define vectors $\{\xi_j\} \in \Hil_1$ by
\begin{eqnarray*}
\xi_0 = \xi, \quad \xi_j = D_1 \xi_{j -1}.
\end{eqnarray*}
By the Leibniz rule, the vector $D_2^m V U_1^t \xi$ can be expressed as follows:
\[D_2^m V U_1^t \xi = \sum_{s \in J} v_{\sharp s^{-1}(t + 1)} u_{\sharp s^{-1}(t)} u_{\sharp s^{-1}(t - 1)} \cdots u_{\sharp s^{-1}(1)} \xi_{\sharp s^{-1}(0)}.   \]
In the formula,
\begin{itemize}
\item
$s$ is an element of the index set
\[ J = \left\{ s \colon \{1, 2, \cdots, m\} \to \{0, 1, \cdots, t, t + 1\} \ | \ \rm{a\ map}\right\},\]
\item
$s^{-1} (j)$ is the inverse image of $\{j\} \subset  \{0, 1, \cdots, t, t + 1\}$ under the mapping $s$.
\item
$\sharp s^{-1} (j)$ is the number of elements of the inverse image.
\end{itemize}
Define a subset $J_0$ of $J$ as follows:
\[ J_0 = \left\{ s \colon \{1, 2, \cdots, m\} \to \{0, 1, \cdots, t, t + 1\} \in J \ | \ 
s^{-1}(t+ 1) = \emptyset \right\}.\]
If $t$ is large, the number of elements in the coset $J \setminus J_0$ is much smaller than $t^m$.
That is $\lim_{t \to \infty} (\sharp J - \sharp J _0)/ t^m = 1$.
For $s \in J$, the norm of the term 
\[v_{\sharp s^{-1}(t + 1)} u_{\sharp s^{-1}(t)} u_{\sharp s^{-1}(t - 1)} \cdots u_{\sharp s^{-1}(1)} \xi_{\sharp s^{-1}(0)}\] in $D_2^m V U_1^t \xi$
 is bounded by 
\[\max_{0 \le j \le m} \|v_j\| \left( \max_{0 \le j \le m} \|u_j\|\right)^m \max_{0 \le j \le m} \|\xi_j\|.\]
It follows that
\begin{eqnarray*}
\lim_{t \to \infty} 
\left\| \dfrac{1}{t^m} D_2^m V U_1^t \xi 
-
\dfrac{1}{t^m} \sum_{s \in J_0} v_{\sharp s^{-1}(t + 1)} u_{\sharp s^{-1}(t)} u_{\sharp s^{-1}(t - 1)} \cdots u_{\sharp s^{-1}(1)} \xi_{\sharp s^{-1}(0)}
\right\|
=
0.
\end{eqnarray*}
The second term in the limit is nothing other than the following vector:
\begin{eqnarray*}
\dfrac{1}{t^m} V \sum_{s \in J_0} u_{\sharp s^{-1}(t)} u_{\sharp s^{-1}(t - 1)} \cdots u_{\sharp s^{-1}(1)} \xi_{\sharp s^{-1}(0)}
=
\dfrac{1}{t^m} V D_1^m U_1^t \xi.
\end{eqnarray*}
It follows that the $m$-th moment
$
\left\langle \dfrac{1}{t^m} D_2^m V U_1^t \xi, V U_1^t \xi \right\rangle
$
of $p_{2, t}$
is asymptotically identical to
\[
\left\langle \dfrac{1}{t^m} V D_1^m U_1^t \xi, V U_1^t \xi \right\rangle
=
\left\langle \dfrac{1}{t^m} D_1^m U_1^t \xi, U_1^t \xi \right\rangle.
\]
This is nothing other than the $m$-th moment of $p_{1, t}$.
\end{proof}

\subsection{Homogeneous quantum walks}
\label{subsection: homogeneous}
We propose the following axiom on one-dimensional homogeneous quantum walks.

\begin{definition}\label{definition: homogeneous}
The quadruple $U = (\Hil, (U^t)_{t \in \Z}, D, S)$ is called a discrete-time {\rm homogeneous} quantum walk, if the following conditions hold:
\begin{itemize}
\item
The triple $(\Hil, (U^t)_t, D)$ is a quantum walk.
\item
$S$ is a unitary operator on $\Hil$.
\item
$U S = S U$.
\item
$S$ preserves the domain of $D$.
\item
$S^{-1} D S - D$ is a positive constant operator $r \cdot {\rm id}$.
\item
The spectral projection of $D$ corresponding to $[0, r) \subset \R$ has finite rank.
\end{itemize}
An operator $X$ on $\Hil$ is said to be homogeneous, if $X$ commutes with $S$.
It is said to be essentially homogeneous, if there exists a natural number $N$ such that the operator $X$ commutes with $S^N$.
Regularity (uniformity, smoothness, or analyticity) for an operator on $\Hil$ is determined by $D$ as in Definition $\ref{definition: regularity}$.

Two discrete-time homogeneous quantum walks are said to be similar, if
two triplets of quantum walks are similar, and if the forth entries are unitary equivalent via the intertwiner which gives similarity.
If we need to consider continuous-time homogeneous quantum walk, replace $(U^t)_{t \in \Z}$ with $(U^{(t)})_{t \in \R}$.
\end{definition}

Let $\Hil_0$ denote the spectral subspace of $D$ corresponding to $[0, r) \subset \R$.
Denote by $n$ the dimension of $\Hil_0$.
The natural number $n$ is called {\it the degree of freedom}.
The Hilbert space $\Hil$ is decomposed as follows:
\[\Hil = \overline{\cdots \oplus S^{-1} \Hil_0 \oplus \Hil_0 \oplus S \Hil_0 \oplus \cdots}.\]
Identifying $\Hil_0$ with $\C^n$,
we can easily show that $(\Hil, (U^t)_t, D)$ is similar to a quantum walk $(\ell_2(\Z) \otimes \C^n, (U^t)_t, D_1 \otimes {\rm id})$ and that $S$ is identified with the bilateral shift $S_1 \otimes {\rm id}$.
We may fix the original homogeneous quantum walk $U$ as $U = (\ell_2(\Z) \otimes \C^n, (U^t)_t, D_1 \otimes {\rm id}, S_1 \otimes {\rm id})$.
For the rest of this paper, we always assume that $U$ is of the form
\[(\ell_2(\Z) \otimes \C^n, (U^t)_{t \in \Z}, D_1 \otimes {\rm id}, S_1 \otimes {\rm id})\]
and that the generator $U$ is analytic with respect to $D_1 \otimes {\rm id}$.

\section{Structure theorems on intertwiners and commutant}

In this section, 
we demonstrate the way to determine the space of {\it uniform} intertwining operators between discrete-time homogeneous {\it analytic} quantum walks.
As a corollary, the algebra of uniform operators which commute with a given walk is determined.
For the first half, we review the structure theorem in \cite{SaigoSako}.
We need to fix the notations related to Fourier analysis.

\subsection{Fourier analysis}\label{subsection: Fourier analysis}
We consider the group $r \Z$ generated by a positive real number $r$ and its dual.
All the characters of $r \Z$ are of the form
\[\widehat \chi_k (x) = \exp(i k x), \quad k \in \left[ 0, \dfrac{2 \pi}{r} \right), x \in r \Z. \]
We identify the dual group $\{\widehat{\chi}_k \ |\ 0 \le k < 2 \pi r^{-1} \}$ with
$\R / (2 \pi r^{-1} \Z)$.
We often denote by $\T_{2 \pi r^{-1}}$ the dual group $\R / (2 \pi r^{-1} \Z)$.
The subscript $2 \pi r^{-1}$ is equal to the length of the torus.
We introduce the counting measure on $r \Z$. The scalar multiple $\dfrac{r}{2 \pi} dk$ of the Lebesgue measure $dk$ defines the Haar measure on $\T_{2 \pi r^{-1}}$.
For $x \in r \Z$, the character $c_x$ on $\T_{2 \pi r^{-1}}$ is defined as
\[c_x(k) = \exp(i k x), \quad k \in \T_{2 \pi r^{-1}}.\]
The Fourier transform $\F_r \colon L^2(\T_{2 \pi r^{-1}}) \to \ell_2(r \Z)$ maps
$c_x$ to the definition function $\delta_x$ of $\{x\} \subset r \Z$.

\subsection{Model quantum walk}

Let $\lambda \colon \T_{2 \pi r^{-1}} \to \T = \{z \in \C \ |\  |z| = 1\}$ be an analytic function.
The function $\lambda$ defines the multiplication operator $M[\lambda] \colon 
 L^2(\T_{2 \pi r^{-1}}) \to L^2(\T_{2 \pi r^{-1}})$. 
The triplet $(\ell_2(r \Z), (U_\lambda^{t})_{t \in \Z}, D_r)$
of 
\begin{itemize}
\item
The Hilbert space $\ell_2(r \Z)$ of the square summable functions on $r \Z$,
\item
The Fourier transform $U_\lambda^t = \F_r M[\lambda]^t \F_r^{-1} \colon  \ell_2(r \Z) \to  \ell_2(r \Z)$,
\item
The diagonal operator $D_r$ given by $D_r(\delta_x) = x \delta_x, x \in r \Z$.
\end{itemize}
is called a {\it model quantum walk}.
Here we note that the inverse Fourier transform $(\F_r^{-1} \ell_2(r \Z), (\F_r^{-1} U_\lambda^{t}\F_r)_{t \in \Z}, \F_r^{-1} D\F_r)$ of the model quantum walk is identical to 
\[\left( L^2(\T_{2 \pi r^{-1}}), (M[\lambda]^t)_{t \in \Z}, \dfrac{d}{i dk} \right).\]

The model quantum walk was first introduced in \cite{SaigoSako}. The formulation in \cite{SaigoSako} is different from that in this paper. However, the difference is not crucial.
In  \cite{SaigoSako}, the parameter $r$ was a reciprocal $r = d^{-1}$ of a natural number $d$.
A quantum walk
\[\left( \ell_2 (\Z) \otimes \C^d, \left(\F_{d^{-1}} M[\lambda]^t \F_{d^{-1}}^{-1} \right)_{t \in \Z}, D_1 \otimes {\rm id} \right)\]
was called a model quantum walk in \cite{SaigoSako}.
The Hilbert space $\ell_2 (\Z) \otimes \C^d$ can be identified with $\ell_2(d^{-1} \Z)$ by
\[\delta_s \otimes \delta_k \mapsto \delta_{k d^{-1} + s}, \quad s \in \Z, k \in \{1, 2, \cdots, d\}.\]
As explained in Example \ref{example: replacing D},
the self-adjoint operator $D_1 \otimes {\rm id}$ can be replaced with $D_{d^{-1}}$.
It turns out that the model quantum walk in \cite{SaigoSako} is similar to
\[\left( \ell_2 (d^{-1} \Z), \left(\F_{d^{-1}} M[\lambda]^t \F_{d^{-1}}^{-1} \right)_{t \in \Z}, D_{d^{-1}} \right).\]
This is a special case of model quantum walks defined in this paper.

\subsection{Review of a structure theorem in \cite{SaigoSako}}
\label{subsection: review of SS}
We have already obtained a structure theorem for discrete-time homogeneous analytic quantum walks.
We review here the main result in \cite{SaigoSako} and adjust the notations for the argument of this paper.

Let $U = (\ell_2(\Z) \otimes \C^n, (U^t)_{t \in \Z}, D_1 \otimes {\rm id}, S_1 \otimes {\rm id})$ be an arbitrary homogeneous analytic quantum walk.
The inverse Fourier transform of $U$ is
\[\widehat{U} = 
\left(
L^2(\T_{2 \pi}) \otimes \C^n, 
((\F_1 \otimes {\rm id})^{-1} U^t (\F_1 \otimes {\rm id}))_{t \in \Z}, 
\frac{d}{i d k} \otimes {\rm id}
\right).
\]
The generator $\widehat{U}$ is a unitary element of $C(\T_{2 \pi}) \otimes M_n(\C)$, the space of $(n \times n)$-matrices whose entries are multiplication operators given by analytic functions on $\T_{2 \pi}$.
For every $k \in \T_{2 \pi}$, $\widehat{U}$ gives an $(n \times n)$-unitary matrix $\widehat{U}(k)$.
The unitary matrix provides a decomposition of $\C^n$ into eigenspaces.
By analyticity of the entries of $\widehat{U}$, we obtain not only analytic functions $\lambda(k)$ of eigenvalues of $\widehat{U}(k)$, but also analytic sections of eigenvectors whose fibers make orthonormal bases of $\C^n$.
We need to keep in mind that the eigenvalue functions $\lambda(k)$ are not necessarily single-valued.

To describe the multi-valued eigenvalue functions,
we make use of the torus $\T_{2 \pi d} = \R / (2 \pi d \Z)$,
and define a covering map $p_d \colon \T_{2 \pi d} \to \T_{2 \pi} = \R / (2 \pi \Z)$ by the standard quotient.
We obtain 
\begin{itemize}
\item
natural numbers $d(1), d(2), \cdots, d(\nu)$ whose sum is $n$,
\item
analytic maps $\lambda_\iota \colon \T_{2 \pi d(\iota)} \to \T$, $(\iota = 1, 2, \cdots, \nu)$
\end{itemize}
such that for every $k \in \T_{2 \pi}$, the set of the eigenvalues of $\widehat{U}(k)$ is
\[\bigcup_{\iota = 1}^\nu \left\{ \left. \lambda_\iota \left(\widetilde{k} \right) \ \right|\ \widetilde{k} \in \T_{2 \pi d(\iota)}, p_{d(\iota)}\left(\widetilde{k}\right) = k\right\}.\]
Corresponding to this description of eigenvalues, an analytic sections ${\bf v}_\iota  \left( \widetilde{k} \right)$ of eigenvectors do exist.
These sections naturally define a unitary operator
\[V \colon \bigoplus_{\iota = 1}^\nu L^2(\T_{2 \pi d(\iota)}) \to L^2(\T_{2 \pi}\to \C^n) = L^2(\T_{2 \pi}) \otimes \C^n\]
by the formula
\[[V (\xi_\iota)]\left( k \right) = 
\sum_{\widetilde{k},\ p_\iota\left( \widetilde{k} \right) = k}
\xi_\iota \left( \widetilde{k} \right)
{\bf v}_\iota  \left( \widetilde{k} \right), 
\quad \xi_\iota \in L^2(\T_{2 \pi d(\iota)}),
k \in \T_{2 \pi}.\]
By analyticity of the sections of eigenvectors,
$V$ is analytic with respect to 
\[\frac{d}{i d k} \oplus \frac{d}{i d k} \oplus \cdots \oplus \frac{d}{i d k}\ (\nu {\rm-times}) {\rm \ and \ } \frac{d}{i d k} \otimes {\rm id}.\]
The analytic unitary operator $V$ gives similarity between $\widehat{U}$
and the direct sum
\[\bigoplus_{\iota = 1}^\nu 
\left(
L^2(\T_{2 \pi d(\iota)}),
(M[\lambda_\iota]^t)_{t \in \Z},
\frac{d}{i d k}
\right).
\]
Applying Fourier transform, we conclude that $U$ is similar to a direct sum of model quantum walks.

\begin{example}[$4$-state Grover walk]
\label{example: 4-state Grover walk}
Consider the following unitary operator on $\ell_2(\Z) \otimes \C^4 \cong \ell_2(\Z)^4$:
\[U = 
\frac{1}{2}
\left(
\begin{array}{ccccc}
S_1^{-3} &    0    &   0  & 0       \\
0       & S_1^{-1} &   0  & 0 \\
0       &    0     &  S_1  & 0       \\
0       &    0     &  0  & S_1^{3}
\end{array}
\right)
\left(
\begin{array}{ccccc}
- 1 & 1 & 1 & 1 \\
1 & - 1 & 1 & 1 \\
1 & 1 & - 1 & 1 \\
1 & 1 & 1 & - 1 \\
\end{array}
\right).
\]
We concretely calculate the eigenvalue function of $U$ and identify the decomposition into model quantum walks.
The inverse Fourier transform of $U$ is
\[\widehat{U}(k) = 
\frac{1}{2}
\left(
\begin{array}{ccccc}
e^{-3ik} &    0    &   0  & 0       \\
0       & e^{-ik} &   0  & 0 \\
0       &    0     &  e^{ik}  & 0       \\
0       &    0     &  0  & e^{3ik}
\end{array}
\right)
\left(
\begin{array}{ccccc}
- 1 & 1 & 1 & 1 \\
1 & - 1 & 1 & 1 \\
1 & 1 & - 1 & 1 \\
1 & 1 & 1 & - 1 \\
\end{array}
\right),
\quad k \in \T_{2 \pi}.
\]
The characteristic polynomial is
\begin{eqnarray*}
& & \det \left( \lambda - \widehat{U}(k) \right)\\
&=& \lambda^4 + \frac{e^{3ik} + e^{ik} + e^{-ik} + e^{-3ik}}{2} \lambda^3 
- \frac{e^{3ik} + e^{ik} + e^{-ik} + e^{-3ik}}{2} \lambda -1\\
&=& (\lambda - 1) (\lambda + 1)
\left\{ \lambda^2 + (\cos 3k + \cos k) \lambda + 1\right\}.
\end{eqnarray*}
We obtain two constant eigenvalue functions $\lambda_1(k) = 1$ and $\lambda_2(k) = -1$.
We focus on the roots of the last factor.
The roots are given by
\begin{eqnarray*}
\lambda_3(k) &=& 
- \dfrac{\cos k + \cos 3k}{2} - i \sin k \sqrt{1 + 4 \cos^4 k},
\\
\lambda_4(k) &=& 
- \dfrac{\cos k + \cos 3k}{2} + i \sin k \sqrt{1 + 4 \cos^4 k}.
\end{eqnarray*}
Thus we obtain four single-valued analytic eigenvalue functions
$\lambda_1(k) = 1$, $\lambda_2(k) = -1$, $\lambda_3(k)$, $\lambda_4(k)$,
satisfying
\[\det \left( \lambda - \widehat{U}(k) \right) = \prod_{j = 1}^4 (\lambda - \lambda_j(k)).\]
By the structure theorem, it turns out that
the $4$-state Grover walk 
$\left(
\ell_2(\Z) \otimes \C^4, 
(U^t),
D_1 \otimes {\rm id}
\right)
$
is similar to the direct sum:
\begin{eqnarray*}
\left(
L^2(\T_{2 \pi}) \otimes \C^4, 
(1 \oplus (-1)^t \oplus M[\lambda_3]^t \oplus M [\lambda_4]^t)_{t \in \Z},
\left( \frac{d}{id k} \right)^{\oplus 4}
\right)
\end{eqnarray*}
Similarity is given by a composite of the inverse Fourier transform and a unitary operator $\widehat{V} = \left( \widehat{V}(k) \right)_{k \in \T_{2 \pi}}$ acting on $L^2(\T_{2 \pi}) \otimes \C^4$.
The unitary $\widehat{V}(k) \in M_4(\C)$ is given by an analytic decomposition into eigenvectors and therefore analytic with respect to $\left( \frac{d}{i dk} \right)^{\oplus 4}$.
\end{example}

\begin{example}[$3$-state Grover walk]
\label{example: 3-state Grover walk}
Consider the following unitary operator on $\ell_2(\Z) \otimes \C^3 \cong \ell_2(\Z)^3$:
\[U = 
\frac{1}{3}
\left(
\begin{array}{ccccc}
S_1^{-1} & 0 & 0       \\
0       & 1 & 0 \\
0       & 0 &  S_1 \\
\end{array}
\right)
\left(
\begin{array}{ccccc}
- 1 & 2 & 2 \\
2 & - 1 & 2 \\
2 & 2 & - 1 \\
\end{array}
\right).
\]
We concretely calculate the eigenvalue function of $U$ and identify the decomposition into model quantum walks.
The inverse Fourier transform of $U$ is
\[\widehat{U}(k) = 
\frac{1}{3}
\left(
\begin{array}{ccccc}
e^{-ik} &    0    &   0 \\
0       &    1     &   0 \\
0       &    0     &  e^{ik} \\
\end{array}
\right)
\left(
\begin{array}{ccccc}
- 1 & 2 & 2 \\
2 & - 1 & 2 \\
2 & 2 & - 1
\end{array}
\right),
\quad k \in \T_{2 \pi}.
\]
The characteristic polynomial is
\begin{eqnarray*}
& & \det \left( \lambda - \widehat{U}(k) \right)\\
&=& \lambda^3 
+ \frac{e^{ik} + 1 + e^{-ik}}{3} \lambda^2 
- \frac{e^{ik} + 1 + e^{-ik}}{3} \lambda
1\\
&=& (\lambda - 1)
\left( \lambda^2 + \frac{4 + 2 \cos k}{3} \lambda + 1\right).
\end{eqnarray*}
Here we obtain an eigenvalue function $\lambda_1(k) = 1$.
We focus on the roots of the last factor.
For $k \in [0, 2 \pi)$, the roots are given by
\begin{eqnarray*}
\lambda_2(k) &=& 
- \dfrac{2 + \cos k}{3} - \frac{1}{3} i \sin \frac{k}{2} \sqrt{10 + 2 \cos k}
\end{eqnarray*}
and
$\lambda_2 (k + 2 \pi) = \overline{\lambda_2(k)}$.
Thus we obtain one single-valued analytic eigenvalue function $\lambda_1(k) = 1$ and
one multi-valued analytic eigenvalue function 
\[k \mapsto \{\lambda_2(k), \lambda_2(k + 2 \pi)\}.\]
By the structure theorem, 
it turns out that the direct sum
\begin{eqnarray*}
\left(
L^2(\T_{2 \pi}) \oplus L^2(\T_{4 \pi}), 
(1 \oplus M[\lambda_2]^t)_{t \in \Z},
\frac{d}{i dk} \oplus \frac{d}{i dk}
\right)
\end{eqnarray*}
is similar to
the $3$-state Grover walk 
$\left(
\ell_2(\Z) \otimes \C^3, 
(U^t),
D_1 \otimes {\rm id}
\right)
$.
Inui, Konno, and Segawa in \cite{InuiKonnoSegawa} observed that the limit distribution of the $3$-state Grover walk is localized around $0 \in \R$.
The above decomposition gives another proof of their result, since the walk contains a constant quantum walk as a direct summand.

Similarity is given by a composite of 
a unitary operator 
\[\widehat{V} \colon L^2(\T_{2 \pi}) \oplus L^2(\T_{4 \pi}) \to L^2(\T_{2 \pi}) \otimes \C^3\]
and the Fourier transform
\[\F_1 \otimes {\rm id} \colon L^2(\T_{2 \pi}) \otimes \C^3 \to \ell_2(\Z) \otimes \C^3.\]
The unitary $\widehat{V}(k)$ is given by an analytic decomposition into eigenvectors of $\widehat{U}(k)$ corresponding to the eigenvalues $\{1, \lambda_2(k), \lambda_2(k + 2 \pi)\}$.
The unitary $\widehat{V}$ is analytic with respect to $\frac{d}{i dk}$.
\end{example}

Here we propose a problem.
When is a homogeneous analytic quantum walk $U$ realized by a continuous-time uniform quantum walk which is {\it not necessarily homogeneous}?
Theorem \ref{theorem: continuous-time} gives an answer.

Some model quantum walk can be decomposed into a direct sum of model quantum walks.
To analyze such a case, we need to decompose $U$ further.
 
\subsection{Prime model quantum walks}

Let $\lambda$ a periodic analytic map $\lambda \colon \R \to \T$ on $\R$
which is not constant.
For a positive period $2 \pi r^{-1}$ of $\lambda$, $\lambda$ gives an analytic map $\lambda \colon \T_{2 \pi r^{-1}} \to \T$.
We can construct model quantum walks from $\lambda$.
Since the period $2 \pi r^{-1}$ of $\lambda$ is not unique, the model quantum walk
$(\ell_2(r \Z), (U_\lambda^{t})_{t \in \Z}, D)$ is not uniquely determined by $\lambda$.
But the possible model quantum walks are closely related to that of the minimal period.

We note that the dual group of $(r / m) \Z$ is $\T_{2 \pi m r^{-1}} = \R / (2 \pi m r^{-1} \Z)$.
The length of $\T_{2 \pi m r^{-1}}$ is $m$-times longer than that of $\T_{2 \pi r^{-1}}$.

\begin{proposition}\label{proposition: decomposition into prime walks}
Let $\lambda$ be a periodic analytic map $\lambda \colon \R \to \T$ on $\R$ which is not a constant.
Let $2 \pi r^{-1}$ be the minimal period of $\lambda$.
Let $m$ be a natural number.
The model quantum walk
$\left( \ell_2\left( \dfrac{r}{m} \Z \right), (U_\lambda^{t})_{t \in \Z}, D_{r / m} \right)$
is similar to the direct sum
\[\left( \ell_2\left( r \Z \right), (U_\lambda^{t})_{t \in \Z}, D_r \right)^{\oplus m}\]
 of the model quantum walks given by the minimal period.
\end{proposition}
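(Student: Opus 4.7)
The plan is to exhibit an explicit unitary $T \colon \ell_2(\tfrac{r}{m}\Z) \to \ell_2(r\Z)^{\oplus m}$ that intertwines the generators and is smooth (in fact analytic) with respect to $D_{r/m}$ and $D_r^{\oplus m}$, thereby witnessing similarity directly from Definition \ref{definition: similarity}.

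First I would decompose $\ell_2(\tfrac{r}{m}\Z)$ according to cosets of $r\Z$ inside $\tfrac{r}{m}\Z$: every $y \in \tfrac{r}{m}\Z$ can be written uniquely as $y = qr + jr/m$ with $q \in \Z$ and $j \in \{0, 1, \dots, m-1\}$, giving
\[
\ell_2\!\left(\tfrac{r}{m}\Z\right) = \bigoplus_{j=0}^{m-1} \ell_2\!\left(jr/m + r\Z\right).
\]
The key observation is that $U_\lambda$ preserves this decomposition: because $\lambda$ has minimal period $2\pi r^{-1}$, its Fourier coefficients on the larger torus $\T_{2\pi m r^{-1}}$ (whose dual group is $\tfrac{r}{m}\Z$) vanish outside $r\Z \subset \tfrac{r}{m}\Z$. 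Concretely, writing $\lambda = \sum_{n \in \Z} a_n c_{nr}$, the formula $U_\lambda \delta_y = \sum_{n} a_n \delta_{y + nr}$ shows that the coset $jr/m + r\Z$ is stable under $U_\lambda$.

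Next I would define $T$ by sending $\delta_{qr + jr/m}$ to $\delta_{qr}$ in the $j$-th copy of $\ell_2(r\Z)$. This is a unitary by construction. The computation $T U_\lambda \delta_{qr + jr/m} = \sum_n a_n \delta_{qr + nr}$ in the $j$-th summand, which agrees with $U_\lambda^{\oplus m} T \delta_{qr + jr/m}$, shows that $T$ intertwines the generators.

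It remains to verify smoothness with respect to $D_{r/m}$ and $D_r^{\oplus m}$. A direct calculation on the canonical basis gives
\[
e^{ik D_r^{\oplus m}} T e^{-ik D_{r/m}} \delta_{qr + jr/m} = e^{-ik jr/m}\, \delta_{qr} \quad \text{in the $j$-th summand,}
\]
so $k \mapsto e^{ik D_r^{\oplus m}} T e^{-ik D_{r/m}}$ is the block-diagonal operator $\bigoplus_{j=0}^{m-1} e^{-ikjr/m} T_j$, where $T_j$ is the restriction of $T$ to the $j$-th coset. This is an entire analytic function of $k$ with values in $\B(\ell_2(\tfrac{r}{m}\Z) \to \ell_2(r\Z)^{\oplus m})$, uniformly bounded in norm by $1$, and in particular smooth. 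Thus $T$ is a smooth unitary intertwiner, which is exactly the data required for similarity.

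There is no real obstacle here; the only subtle point is recognizing that minimality of the period $2\pi r^{-1}$ is not actually used—what matters is that $2\pi r^{-1}$ \emph{is} a period, so that $\lambda$ has Fourier support inside $r\Z$ when viewed on the longer torus. The shape of the proof is: decompose the Hilbert space along cosets of $r\Z$ in $\tfrac{r}{m}\Z$, observe that the walk respects the decomposition, and absorb the coset representative $jr/m$ into an explicit analytic conjugation by $e^{-ikjr/m}$ on each summand.
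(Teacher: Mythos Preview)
Your proof is correct and is essentially the Fourier transform of the paper's argument: the paper decomposes $L^2(\T_{2\pi m r^{-1}})$ into the subspaces $\Hil_z = \overline{\mathrm{span}}\{c_{x+z} : x \in r\Z\}$, which under $\F_{r/m}$ become precisely your coset subspaces $\ell_2(jr/m + r\Z)$, and its remark that $\frac{d}{idk}$ on $\Hil_z$ differs from $\frac{d}{idk}$ on $\Hil_0$ by a constant is exactly your computation of the phase $e^{-ikjr/m}$. Your observation that minimality of the period is not actually used is also accurate.
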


\begin{remark}
For the definition of similarity, see Definition \ref{definition: similarity}.
Similarity as homogeneous quantum walks defined in Definition \ref
{definition: homogeneous}
does not necessarily hold.
\end{remark}

\begin{proof}
It suffices to show that
\[\left( L^2\left( \R / (2 \pi m r^{-1} \Z )\right), (M[\lambda]^t)_{t \in \Z}, \frac{d}{i d k} \right)\]
is similar to
\[\left(L^2\left( \R / (2 \pi r^{-1} \Z )\right), (M[\lambda]^t)_{t \in \Z}, \frac{d}{i d k} \right)^{\oplus m}\]

Recall that for $x \in (r / m)\Z$, $c_x(k) = \exp(i k x)$ defines a character of $\T_{2 \pi m r^{-1}} = \R / (2 \pi m r^{-1} \Z )$ and that
$\{c_x \ |\ x \in (r / m)\Z\}$ is an orthonormal basis of $L^2 (\T_{2 \pi m r^{-1}})$.
Since the minimal period of $k$ is $2 \pi r^{-1}$,
we can express the analytic map $\lambda$ by
\[\lambda(k) = \sum_{x \in r \Z} \alpha_x c_x.\]
Since $\lambda$ is analytic, as $|x| \to \infty$, $|\alpha_x|$ rapidly decreases.
Therefore the infinite sum
\[M[\lambda] = \sum_{x \in r \Z} \alpha_x M[c_x] \in \B(L^2(\T_{2 \pi m r^{-1}}))\]
converges in the operator norm topology.
By the relation $M[c_x](c_y) = c_{x + y}$,
for every $z \in \{0, r/m,  2r/m, \cdots, (m - 1) r/m\}$,
\[\Hil_z := \mathrm{\overline{span}} \{c_{x + z} \ |\ x \in r\Z\}\]
is invariant under the action of $M[\lambda]$ and under $\exp(i k \frac{d}{i dk})$.
The action of $M[\lambda]$ on $\Hil_z$ is unitary equivalent to that on $\Hil_0$.
The operator $\frac{d}{idk}$ on $\Hil_z$ corresponds to the sum of a constant operator and $\frac{d}{idk}$ on $\Hil_0$.
The Hilbert space $\Hil_0 = \mathrm{\overline{span}} \{c_{x} \ |\ x \in r\Z\}$ is naturally identified with $L^2(\T_{2 \pi r^{-1}})$. 
It follows that
\[\bigoplus_z \left(\Hil_z, (M[\lambda]^t)_{t \in \Z}, \frac{d}{i d k} \right)\]
and
\[\left(L^2\left( \R / (2 \pi r^{-1} \Z )\right), (M[\lambda]^t)_{t \in \Z}, \frac{d}{i d k} \right)^{\oplus m}\]
are similar.
The former walk is unitary equivalent to the original quantum walk.
\end{proof}

\begin{example}
Consider the quantum walk $(\ell_2(\Z) \otimes \C^3, (U^t)_{t \in \Z}, D_1 \otimes \mathrm{id})$ generated by
$U = 
\left(
\begin{array}{ccc}
0 & S_1 & 0 \\
0 & 0 & S_1 \\
1 & 0 & 0
\end{array}
\right)$.
The inverse Fourier transform is
$\widehat{U}(k) = 
\left(
\begin{array}{ccc}
0 & e^{ik} & 0 \\
0 & 0 & e^{ik} \\
1 & 0 & 0
\end{array}
\right), k \in \T_{2 \pi}$.
The characteristic polynomial is $\lambda^3 - e^{2 i k}$.
The eigenvalue function is a multi-valued function
\[k \mapsto \{\lambda_1(k), \lambda_1(k + 2 \pi), \lambda_1(k + 4 \pi)\}\]
given by
\[c_{2/3} (k) = \exp(2 i k/3), \quad k \in 6 \pi.\]
By the structure theorem in \cite{SaigoSako},
the original quantum walk is similar to
\[\left( L^2(\T_{6 \pi}), \left( M[c_{2/3}]^t \right)_{t \in \Z}, \frac{d}{i d k} \right).\]
The function $\lambda_1 \colon \T_{6 \pi} \to \T$ has a non-trivial period $3 \pi$.
As in Proposition \ref{proposition: decomposition into prime walks},
This is similar to the following direct sum:
\[\left( L^2(\T_{3 \pi}), \left( M[c_{2/3}]^t \right)_{t \in \Z}, \frac{d}{i d k} \right)^{\oplus 2}.\]
This is unitary equivalent to the direct sum of two prime model quantum walks
\[\left( \ell_2 \left( (2 / 3) \Z \right), \left( M[c_{2/3}]^t \right)_{t \in \Z}, D_{2 / 3} \right)^{\oplus 2}.\]
Note that the original quantum walk $U$ and this direct sum are similar in the category of quantum walks, but not similar in the category of homogeneous quantum walks (Subsection \ref{subsection: homogeneous}).

\end{example}

\begin{definition}
A model quantum walk $\left( \ell_2\left( r \Z \right), (U_\lambda^{t})_{t \in \Z}, D \right)$
is said to be {\rm prime}, if the analytic map $\lambda \colon \T_{2 \pi r^{-1}} \to \T$ has no period other than $0$. 
\end{definition}

Thus we have the following structure theorem.

\begin{theorem}\label{theorem: structure theorem}
Let $U = (\ell_2(\Z) \otimes \C^n, (U^t)_{t \in \Z}, D_1 \otimes {\rm id})$ be an arbitrary one-dimensional discrete-time homogeneous analytic quantum walk.
Then there exist 
\begin{itemize}
\item
non-negative integers $l$, $m$, 
\item
rational numbers $r(j)$, $(j \in \{1, \cdots, l\})$,
\item
prime model quantum walks
$\left( \ell_2(r(j) \Z), \left( U_{\lambda(j)}^t \right)_{t \in \Z}, D_{r(j)} \right)$, $(j \in \{1, \cdots, l\})$,
\item
complex numbers $\alpha(k)$, $(k \in \{1, \cdots, m\})$ whose absolute values are $1$,
\end{itemize}
satisfying 
\begin{itemize}
\item
that the given analytic walk
$U$
is similar to the direct sum
\[
\bigoplus_{j = 1}^l 
\left( \ell_2(r(j) \Z), \left( U_{\lambda(j)}^t \right)_{t \in \Z}, D_{r(j)} \right)
\oplus
\bigoplus_{k = 1}^m 
\left( \ell_2(\Z), \left( \alpha(k)^t \right)_{t \in \Z}, D_1 \right)
\]
(The integers $l$ and $m$ can be zero. In the case that $l = 0$, erase the first half. In the case that $m = 0$, erase the second half.)
\item
and that the degree of freedom $n$ is equal to
$m + \sum_{j = 1}^l 
r(j)^{-1}.$
\end{itemize}
\end{theorem}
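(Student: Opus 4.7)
The plan is to combine the structure theorem reviewed in Subsection \ref{subsection: review of SS} with Proposition \ref{proposition: decomposition into prime walks}, treating constant eigenvalue summands as a separate case.

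First I would invoke the structure theorem from \cite{SaigoSako}: the inverse Fourier transform $\widehat{U}$ is a unitary matrix-valued analytic function on $\T_{2 \pi}$, and its analytic spectral decomposition produces natural numbers $d(1), \dots, d(\nu)$ with $\sum_\iota d(\iota) = n$, analytic functions $\lambda_\iota \colon \T_{2 \pi d(\iota)} \to \T$, and an analytic unitary realizing similarity between $U$ and
\[
\bigoplus_{\iota=1}^{\nu} \left(L^2(\T_{2 \pi d(\iota)}), (M[\lambda_\iota]^t)_{t \in \Z}, \tfrac{d}{i d k}\right).
\]
Applying $\F_{d(\iota)^{-1}}$ identifies each summand with the model quantum walk $\left(\ell_2(d(\iota)^{-1} \Z), (U_{\lambda_\iota}^t)_{t \in \Z}, D_{d(\iota)^{-1}}\right)$.

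Next I would process each summand according to whether $\lambda_\iota$ is constant. If $\lambda_\iota$ is non-constant, its group of periods is a discrete subgroup of $\R$ containing $2 \pi d(\iota)$, so the minimal period has the form $2 \pi / r$ for some rational $r$ with $m := r \cdot d(\iota) \in \N$. Because $d(\iota)^{-1} = r/m$, Proposition \ref{proposition: decomposition into prime walks} expresses this summand as a direct sum of $m$ copies of the prime model walk $(\ell_2(r \Z), (U_{\lambda_\iota}^t)_{t \in \Z}, D_r)$. If instead $\lambda_\iota$ is a constant $\alpha$, then $U_{\lambda_\iota}^t$ acts as $\alpha^t \cdot \mathrm{id}$; I would split $\ell_2(d(\iota)^{-1} \Z)$ along the cosets $\Z + s/d(\iota)$ for $s \in \{0, 1, \dots, d(\iota) - 1\}$, identify each coset with $\Z$, and observe that this identification replaces $D_{d(\iota)^{-1}}$ by $D_1$ plus a scalar. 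By Example \ref{example: replacing D} this yields $d(\iota)$ walks each similar to $(\ell_2(\Z), (\alpha^t)_{t \in \Z}, D_1)$.

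Finally I would verify the degree-of-freedom identity. The non-constant case contributes $m$ prime walks of parameter $r = m/d(\iota)$, hence adds $m \cdot r^{-1} = d(\iota)$ to $\sum_j r(j)^{-1}$; the constant case contributes $d(\iota)$ to the count of constant walks. Summing over $\iota$ recovers $\sum_\iota d(\iota) = n$, matching $m + \sum_{j=1}^l r(j)^{-1}$. The main obstacle is purely bookkeeping: one must check that the successive similarities — from \cite{SaigoSako}, from Proposition \ref{proposition: decomposition into prime walks}, and from the coset decomposition of the constant summands — compose to a single similarity between $U$ and the stated direct sum. This is routine because similarity of smooth quantum walks is transitive and compatible with direct sums, as already noted after Definition \ref{definition: similarity}.
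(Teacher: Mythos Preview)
Your proposal is correct and follows essentially the same route as the paper: invoke the decomposition from \cite{SaigoSako} into model quantum walks, then apply Proposition \ref{proposition: decomposition into prime walks} to the non-constant summands and split the constant summands into copies of $(\ell_2(\Z),(\alpha^t),D_1)$. You supply a little more detail than the paper (the rationality of $r$, the explicit coset argument for the constant case, and the degree-of-freedom bookkeeping), but the strategy is identical.
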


\begin{proof}
As we explained in Subsection \ref{subsection: review of SS},
$U$ is similar to the direct sum of model quantum walks
\[
\bigoplus_{\iota = 1}^\nu
\left( \ell_2(r(\iota) \Z), \left( U_{\lambda(\iota)}^t \right)_{t \in \Z}, D_{r(\iota)} \right).
\]
The positive numbers $r(\iota)$ are reciprocals of natural numbers $d(\iota)$.

Consider the case that the analytic function $\lambda(\iota) \colon \T_{2 \pi d(\iota)} \to \T$ is not constant.
If it is not prime, we can further decompose the
the model quantum walk 
into prime model quantum walks.
In such a case, $r(\iota)$ becomes larger, but the sum of reciprocals is preserved (see Proposition \ref{proposition: decomposition into prime walks}).
Each prime model quantum walk becomes a direct summand of the first half.

If the analytic function $\lambda(\iota) \colon \T_{2 \pi d(\iota)} \to \T$ is a constant function $\alpha(\iota)$,
then the corresponding model quantum walk is decomposed as follows:
\[\left( \ell_2(r(\iota) \Z), \left( \alpha(\iota)^t \right)_{t \in \Z}, D_{r(\iota)} \right)
\cong
\left( \ell_2(\Z), \left( \alpha(\iota)^t \right)_{t \in \Z}, D_1 \right)^{\oplus r(\iota)^{-1}}.
\]
These direct summands becomes direct summands of the second half.
\end{proof}

Examples 
\ref{example: 4-state Grover walk}
and
\ref{example: 3-state Grover walk}
give decompositions into constant quantum walks and prime model quantum walks.

\subsection{Uniform intertwiner between two walks}
\label{subsection: uniform intertwiner}

For a while, we make use of two pairs of dual groups $(r(1)\Z, \T_{l(1)})$ and $(r(2)\Z, \T_{l(2)})$.
As explained in Subsection \ref{subsection: Fourier analysis},
for $\iota = 1, 2$, the length $l(\iota)$ of the torus $\T_{l(\iota)}$ is equal to $2 \pi r(\iota)^{-1}$.

The following lemma is the most important technical ingredient of this paper.

\begin{lemma}\label{lemma: diffeomorphism on T}
Let $\psi \colon \T_{l(2)} \to \T_{l(1)}$ be a diffeomorphism.
Let $f$ be a bounded Borel function $\T_{l(2)}$ whose support is not null.
Let $\widehat{V}$ be the composite $M[f] \circ \psi^*$ of
\begin{itemize}
\item
the pull back $\psi^* \colon L^2(\T_{l(1)}) \to L^2(\T_{l(2)})$ of $\psi$ and
\item
the multiplication operator $M[f] \colon L^2(\T_{l(2)}) \to L^2(\T_{l(2)})$.
\end{itemize}
If $\widehat{V} \in \mathcal{B}(L^2(\T_{l(2)}) \leftarrow L^2(\T_{l(1)}))$ is uniform with respect to the differential operators $\frac{d}{i d k}$ on $\T_{l(1)}$ and $\frac{d}{i d k}$ on $\T_{l(2)}$,
then on every interval contained in ${\rm supp} f$, $\psi(k) - k$ is constant.
\end{lemma}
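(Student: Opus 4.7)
The plan is to prove the contrapositive: if $\psi - \mathrm{id}$ is not constant on some interval $I\subset\supp f$, then $\widehat V$ fails to be uniform. First I unpack the uniformity condition. Since $\tfrac{d}{idk}$ generates translation, $(e^{itD}\xi)(k)=\xi(k+t)$, so a direct computation gives
\[
(T_t\xi)(y) := \bigl(e^{itD_{l(2)}}\,\widehat V\, e^{-itD_{l(1)}}\,\xi\bigr)(y) = f(y+t)\,\xi\bigl(\psi(y+t)-t\bigr),
\]
and $\widehat V$ is uniform iff $\|T_t-T_0\|\to 0$ as $t\to 0$. It therefore suffices to construct unit vectors $\xi_t\in L^2(\T_{l(1)})$ along a sequence $t\to 0$ for which $\|(T_t-T_0)\xi_t\|$ is bounded below.

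Next I extract local data. Since $\psi$ is a diffeomorphism, $\psi'$ is continuous. Non-constancy of $\psi-\mathrm{id}$ on $I$ forces $\psi'-1\ne 0$ somewhere in $I$, so by continuity there exist a closed sub-interval $J\subset I$ and $\epsilon>0$ on which $\psi'-1$ has constant sign with $|\psi'-1|\ge\epsilon$; assume $\psi'(k)\ge 1+\epsilon$ on $J$ (the other case is symmetric). Because $J\subset\supp f$, the set $\{|f|>0\}\cap J$ has positive Lebesgue measure, so by Lebesgue's differentiation theorem I may pick an interior point $k_0\in J$ which is simultaneously a Lebesgue point of $|f|^2$ and satisfies $|f(k_0)|^2>0$. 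Fix a constant $c$ with $0<c<\epsilon/2$, set $\sigma_t:=ct$, and define the $L^2$-normalized test vector $\xi_t:=(2\sigma_t)^{-1/2}\mathbf{1}_{[\psi(k_0)-\sigma_t,\,\psi(k_0)+\sigma_t]}\in L^2(\T_{l(1)})$.

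Finally I compare supports. Writing $\alpha:=\psi'(k_0)-1\ge\epsilon$ and Taylor-expanding $\psi^{-1}$ at $\psi(k_0)$, the support of $T_0\xi_t$ is an interval centered at $k_0$ of half-length $\sigma_t/\psi'(k_0)+O(\sigma_t^2)$, while the support of $T_t\xi_t$ is an interval of the same half-length centered at $k_0-t\alpha/\psi'(k_0)+O(t^2)$. Because $t\alpha/\psi'(k_0)>2\sigma_t/\psi'(k_0)=2ct/\psi'(k_0)$ by the choice $c<\epsilon/2\le\alpha/2$, these supports are disjoint for all sufficiently small $t$, which yields $\|(T_t-T_0)\xi_t\|^2=\|T_t\xi_t\|^2+\|T_0\xi_t\|^2\ge \|T_0\xi_t\|^2$. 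The change of variables $u=\psi(y)$ produces
\[
\|T_0\xi_t\|^2=\frac{1}{2\sigma_t}\int_{|u-\psi(k_0)|\le\sigma_t}\frac{|f(\psi^{-1}(u))|^2}{\psi'(\psi^{-1}(u))}\,du,
\]
and Lebesgue differentiation at $\psi(k_0)$ (transferred from $k_0$ through the $C^1$-diffeomorphism $\psi^{-1}$ and the continuous positive factor $1/\psi'$) forces this to tend to $|f(k_0)|^2/\psi'(k_0)>0$ as $t\to 0$. Hence $\|T_t-T_0\|$ stays bounded below along a sequence $t\to 0$, contradicting uniformity. The main technical point is the simultaneous selection of $k_0$: one needs $k_0\in J$ with $|f(k_0)|>0$ and $k_0$ a Lebesgue density point of $|f|^2$, which is possible precisely because $J\subset\supp f$ makes $\{|f|>0\}\cap J$ of positive measure and almost every such point is a Lebesgue point.
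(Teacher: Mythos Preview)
Your proof is correct and takes a genuinely different route from the paper. The paper proceeds via Fourier analysis: it rewrites the matrix coefficients $\langle \widehat V c_y,c_x\rangle$ as the position-distribution of an auxiliary smooth quantum walk $(L^2(\T_{l(2)}),M[\Psi]^t,\tfrac{d}{idk})$ with initial vector $f$, computes that the limit distribution of that walk is the push-forward of $|f|^2\,dk$ under $\psi'$, and then invokes the earlier Proposition~\ref{proposition: converging to an atom} (uniform operators send mass toward the atom at $1$) to force $\psi'\equiv 1$ on $\supp f$. Your argument instead works directly in position space: you write $T_t=e^{itD_{l(2)}}\widehat V e^{-itD_{l(1)}}$ explicitly as $(T_t\xi)(y)=f(y+t)\xi(\psi(y+t)-t)$, pick a localized bump $\xi_t$ of width $\sim t$ near $\psi(k_0)$ with $\psi'(k_0)\neq 1$, and observe that the supports of $T_0\xi_t$ and $T_t\xi_t$ separate at rate $\sim t|\psi'(k_0)-1|/\psi'(k_0)$, which beats their width, while Lebesgue differentiation pins $\|T_0\xi_t\|$ away from zero.

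Your approach is more elementary---it uses only the mean value theorem and Lebesgue differentiation, and avoids the machinery of limit distributions, moments, and Proposition~\ref{proposition: converging to an atom}. The paper's approach, on the other hand, ties the lemma conceptually to the rest of the paper's framework and makes transparent \emph{why} uniformity forces $\psi'=1$: the asymptotic velocity distribution is $\psi'_*(|f|^2\,dk)$, and uniformity collapses it to a point mass at $1$. Two small remarks on your write-up: the disjointness step is stated via Taylor expansion with $O(t^2)$ errors, which is fine but is made fully rigorous by a single mean-value estimate (e.g.\ $\psi^{-1}(a+t+\sigma_t)-\psi^{-1}(a-\sigma_t)\le (t+2\sigma_t)/(1+\epsilon)<t$); and the ``other case is symmetric'' includes the orientation-reversing situation $\psi'<0$, where one should note that $\psi'-1<-1$ everywhere and the same scaling argument goes through with the inequality running the other way.
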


\begin{proof}
For $y \in r(1) \Z$, let $c_y$ denote the function on $\T_{l(1)} = \R / (r(1) \Z)$ given by $c_y(k) = \exp(i k y)$.
For $x \in r(2) \Z$, let $c_x$ denote the function on $\T_{l(2)} = \R / (r(2) \Z)$ given by $c_x(k) = \exp(i k x)$.
For $x \in r(2) \Z$ and $y \in r(1) \Z$, the matrix coefficient $V_{x, y}$ of the Fourier transform of $\widehat{V}$ is given by
\begin{eqnarray*}
V_{x, y} = \left\langle M[f] \psi^* c_y, c_x \right\rangle_{L^2 \left( \T_{l(2)} \right)}
= \int_{0}^{l(2)} \exp(-i k x) f \left( k \right) \exp(i \psi \left(k \right) y)
\cdot \dfrac{dk}{l(2)}.
\end{eqnarray*}
Define a function $\Psi \colon \T_{l(2)} \to \T = \{z \in \C | \ |z| = 1\}$ by
$\Psi(k) = \exp(i \psi(k) r(1))$.
The above quantity is equal to
\begin{eqnarray}\label{equation: a new quantum walk}
V_{x, y} &=& \left\langle M \left[\Psi^{y / r(1)} \right] f, c_x \right\rangle_{L^2 \left( \T_{l(2)} \right)}.
\end{eqnarray}

Motivated by the above formula, we consider the homogeneous smooth quantum walk $(L^2(\T_{l(2)}), (M[\Psi^{y / r(1)}])_{y \in r(1) \Z}, \frac{d}{i d k})$ and the initial vector $f \in L^2(\T_{l(2)})$.
For the rest of this proof, consider the case that $y \in r(1) \Z$ is large, and we regard the integer $y / r(1)$ as time. Define an integer $t(y)$ by $y / r(1)$.
The Fourier coefficients $(V_{x, y})_{x} \in \ell_2(r(2) \Z)$ of  $M[\Psi^{t(y)}] f \in L^2 (\T_{l(2)})$ gives a measure on $r(2) \Z$.
Denote by $p_y$ the push-forward measure along the mapping $r(2) \Z \ni x \to x / y \in \R$.
More precisely, $p_y$ is the sum $\sum_{x \in r(2) \Z} |V_{x, y}|^2 \delta_{x / y}$ of point masses
at $\{x/y \ |\ x \in r(2) \Z\}$.

We first consider the case that $f \colon \T_{l(2)} \to \C$ is smooth.
Denote by $D = \frac{d}{i dk}$ the differential operator acting on $L^2 (\T_{l(2)})$.
By Lemma \ref{lemma: moment},
the $m$-th moment of $p_y$ is identical to
\begin{eqnarray*}
& &
\left\langle \left(\dfrac{D}{y}\right)^m M\left[\Psi^{t(y)}\right] f, M\left[\Psi^{- t(y)}\right] f \right\rangle_{L^2 \left( \T_{l(2)} \right)}\\
&=&
\left\langle \left( M\left[\Psi^{- t(y)}\right] \dfrac{D}{y}  M\left[\Psi^{t(y)}\right] \right)^m f, f \right\rangle_{L^2 \left( \T_{l(2)} \right)}\\
&=&
\left\langle \left( M\left[\Psi^{- t(y)} \dfrac{t(y)}{i y} \Psi' \Psi^{t(y) -1}\right] + \dfrac{D}{y} \right)^m f, f \right\rangle_{L^2 \left( \T_{l(2)} \right)}\\
&=&
\left\langle \left( M\left[\frac{\Psi'}{i r(1) \Psi}\right] + \dfrac{D}{y} \right)^m f, f \right\rangle_{L^2 \left( \T_{l(2)} \right)}.
\end{eqnarray*}
The function $\Psi' / (i r(1) \Psi)$ is equal to $\psi'$.
As $y \to \infty$, the moment of $p_y$ tends to
\[\left\langle M[\psi']^m f, f \right\rangle_{L^2 \left( \T_{l(2)} \right)}
= \int_0 ^{l(2)} \psi'(k)^m |f(k)|^2 \cdot \dfrac{dk}{l(2)}.
\]
This implies that $p_y$ converges in law to the push-forward of the measure $|f(k)|^2 \dfrac{dk}{l(2)}$ along the mapping $\psi' \colon \T_{l(2)} \to \R$.
It follows that $p_y$ weakly converges to the push-forward measure
(Proposition \ref{proposition: two kinds of convergence}).
The vector $f$ is not a unit vector, but the argument in Proposition \ref{proposition: two kinds of convergence} is valid.

Let us go back to the general case.
Suppose that $g \colon \T_{l(2)} \to \C$ is a general bounded Borel function.
Denote by $\widehat{W}$ the operator $M \left[g \right] \circ \psi^*$.
The matrix coefficients $W_{x, y} = \left\langle \widehat{W} c_y, c_x \right\rangle_{L^2(\T_{l(2)})}$ of the Fourier transform of $\widehat{W}$ are given by
\begin{eqnarray*}
W_{x, y} &=& \left\langle M \left[\Psi^{t(y)} \right] g, c_x \right\rangle_{L^2(\T_{l(2)})}.
\end{eqnarray*}
Let $q_y$ be the probability measure $\sum_{x \in r(2) \Z} \left|W_{x, y} \right|^2 \delta_{x / y}$.
Define a constant $C$ by 
$\left\| g \right\|_{L^2 (\T_{l(2)})}$.
For an arbitrary positive number $\epsilon$, there exist a smooth function $f \colon \T_{l(2)} \to \C$ satisfying
$\left\|f - g \right\|_{L^2(\T_{l(2)})} < \epsilon$, 
$\left\|f \right\|_{L^2(\T_{l(2)})} \le C$.
We denote by $\| \cdot \|_{{\rm cb}^*}$ the norm of linear functionals on the Banach space of bounded continuous functions on $\R$.
By the Cauchy–-Schwarz inequality, we have
\begin{eqnarray*}
\left\| p_y - q_y  \right\|_{{\rm cb}^*}
&\le& 
\left\| \left(|V_{x, y}|^2 \right)_{x} - \left( \left| W_{x, y} \right|^2  \right)_{x}  \right\|_{\ell_1}\\
&\le& 
\left\| (V_{x, y})_{x} + \left(W_{x, y} \right)_{x}  \right\|_{\ell_2}
\left\| (V_{x, y})_{x} - \left(W_{x, y} \right)_{x}  \right\|_{\ell_2}.
\end{eqnarray*}
By the Plancherel theorem, and by the equation $|\Psi(k)| = 1$, we have
\begin{eqnarray*}
& & 
\left\| p_y - q_y  \right\|_{{\rm cb}^*}\\
&\le& 
\left\| M\left[\Psi^{t(y)} \right] f + M\left[\Psi^{t(y)} \right] g \right\|_{L^2(\T_{l(2)})}
\left\| M\left[\Psi^{t(y)} \right] f - M\left[\Psi^{t(y)} \right] g \right\|_{L^2(\T_{l(2)})}\\
&=& 
\left\| f + g \right\|_{L^2(\T_{l(2)})}
\left\| f - g \right\|_{L^2(\T_{l(2)})}
\\
&\le&
2C \epsilon. 
\end{eqnarray*}
By the Cauchy–-Schwarz inequality, we have
\begin{eqnarray*}
& &
\left\| \psi'_* \left( |f(k)|^2 \dfrac{dk}{l(2)} \right)
 - \psi'_* \left( \left| g(k) \right|^2 \dfrac{dk}{l(2)} \right)
 \right\|_{{\rm cb}^*}\\
&\le&
\left\| |f(k)|^2 \dfrac{dk}{l(2)}
 - 
\left| g(k) \right|^2 \dfrac{dk}{l(2)}
 \right\|_{{\rm cb}^* (\T_{l(2)})}\\
&=&
\left\| |f|^2 
 - 
\left| g \right|^2 \right\|_{L^1(\T_{l(2)})}\\
&\le& 
\left\| f + g \right\|_{L^2(\T_{l(2)})}
\left\| f - g \right\|_{L^2(\T_{l(2)})}\\
&\le&
2C \epsilon.
\end{eqnarray*}
Because $p_y$ weakly converges to $\psi'_* \left( |f(k)|^2 \dfrac{dk}{l(2)} \right)$,
$\widetilde{p}_y$ weakly converges to $\psi'_* \left( \left| g(k) \right|^2 \dfrac{dk}{l(2)} \right)$.

By Proposition \ref{proposition: converging to an atom},
since the Fourier transform $W$ of $\widehat{W}$ is uniform with respect to the diagonal operators $D_{r(1)}$ and $D_{r(2)}$,
the weak limit of $q_y$ has to be concentrated on $1$.
It follows that $\psi'$ is the constant function $1$ on the support of $g$.
\end{proof}

\begin{lemma}\label{lemma: diffeomorphism between intervals}
Let $\widehat{V}$ be a bounded operator from $L^2(\T_{l(1)})$ to $L^2(\T_{l(2)})$.
Let $I(1)$ be an open interval of $\T_{l(1)}$ and
let $I(2)$ be an open interval of $\T_{l(2)}$.
Let $\psi \colon I(2) \to I(1)$ be a diffeomorphism.
Suppose that the derivatives $\psi'$ and $\left( \psi^{-1} \right)'$ are bounded.
Let $f \colon I(2) \to \C$ be a bounded Borel non-zero function.
Suppose that the restriction $\widehat{V} |_{L^2(I(1))}$ is identical to the composition operator  $M[f] \circ \psi^*$ of
\begin{itemize}
\item
the pull back $\psi^* \colon L^2(I(1)) \to L^2(I(2))$ of $\psi$ and
\item
the multiplication operator $M[f] \colon L^2(I(2)) \to L^2(I(2)) \subset L^2(\mathbb{T})$.
\end{itemize}
If $\widehat{V}$ is uniform with respect to the differential operators $\frac{d}{i dk}$ on $L^2(\T_{l(1)})$ and $\frac{d}{i dk}$ on $L^2(\T_{l(2)})$,
then on every interval contained in ${\rm supp} f \cap I(2)$, $\psi(k) - k$ is constant.
\end{lemma}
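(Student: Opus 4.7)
The plan is to reduce this lemma to Lemma \ref{lemma: diffeomorphism on T} by replacing the partially defined pair $(\psi, f)$ with a globally defined pair $(\tilde\psi, F)$ on the full tori, at the cost of precomposing $\widehat V$ with a smooth cutoff. It suffices to prove the conclusion for every interval $J$ satisfying $\bar J \subset \supp f \cap I(2)$; the general case then follows by exhaustion and continuity of $\psi$.

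First I would construct a smooth diffeomorphism $\tilde\psi \colon \T_{l(2)} \to \T_{l(1)}$ that agrees with $\psi$ on a closed subarc $K \subset I(2)$ whose interior contains $\bar J$. Since $\psi$ is smooth on the open set $I(2)$, it has well-defined jets of all orders at $\partial K$, and a smooth diffeomorphism between the two complementary arcs of $K$ in $\T_{l(2)}$ and of $\psi(K)$ in $\T_{l(1)}$ matching these jets at the endpoints can be built by a standard partition-of-unity argument. Simultaneously I would choose a smooth cutoff $\chi \colon \T_{l(1)} \to [0, 1]$ supported inside $I(1)$ with $\chi \equiv 1$ on $\psi(\bar J)$, and set $F := f \cdot (\chi \circ \tilde\psi)$. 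The resulting $F$ is a bounded Borel function on $\T_{l(2)}$ whose support lies inside $I(2)$ and contains $J$, because $F$ agrees with $f$ on $J$ and $J \subset \supp f$.

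Next I would verify the operator identity $W := \widehat V \circ M[\chi] = M[F] \circ \tilde\psi^{\ast}$. For any $\xi \in L^2(\T_{l(1)})$, the function $\chi \xi$ lies in $L^2(I(1))$, so the hypothesis on $\widehat V|_{L^2(I(1))}$ gives $\widehat V(\chi \xi) = M[f]\, \psi^{\ast}(\chi \xi) = M[f \cdot (\chi \circ \psi)]\, \psi^{\ast} \xi$; because $\tilde\psi$ coincides with $\psi$ on $I(2) \supset \supp F$, this equals $M[F]\, \tilde\psi^{\ast} \xi$. Smoothness of $\chi$ makes $M[\chi]$ smooth with respect to $\tfrac{d}{i dk}$ on $\T_{l(1)}$, since its conjugate under $e^{i k D}$ is multiplication by the translate $\chi(\cdot + k)$, which depends smoothly on $k$ in operator norm; hence $W$ is uniform, being the composition of a uniform operator with a smooth one.

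Then the triple $(\tilde\psi, F, W)$ meets the hypotheses of Lemma \ref{lemma: diffeomorphism on T}, whose conclusion forces $\tilde\psi(k) - k$ to be constant on every interval contained in $\supp F$. In particular this holds on $J$, and since $\tilde\psi = \psi$ on $J$ we obtain $\psi(k) - k$ constant on $J$, as desired. The main technical hurdle I expect is the construction of $\tilde\psi$: one must produce a smooth bijection of the complementary arcs in $\T_{l(2)}$ and $\T_{l(1)}$ whose full jet at $\partial K$ matches that of $\psi$ from the inside. This is a standard but delicate differential-topological extension, made feasible by placing $\partial K$ strictly inside $I(2)$, where $\psi$ is smooth with bounded derivative.
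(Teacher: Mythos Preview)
Your argument is essentially the paper's own proof: precompose $\widehat V$ with a smooth cutoff supported in $I(1)$, extend $\psi$ to a global diffeomorphism of the tori, rewrite the composite as $M[F]\circ\tilde\psi^{*}$, and invoke Lemma~\ref{lemma: diffeomorphism on T}. One small fix is needed: to guarantee that $F$ is supported inside $K$ (so that ``$\tilde\psi=\psi$ on $\supp F$'' is actually true---you wrote $I(2)$ there, but you only arranged $\tilde\psi=\psi$ on $K$) you should require $\supp\chi\subset\psi(K)$ rather than merely $\supp\chi\subset I(1)$; with that adjustment the two arguments coincide, and the paper likewise just asserts the existence of the extended diffeomorphism without discussing the jet-matching you flag.
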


\begin{proof}
Let $g$ be an arbitrary smooth function on $\T_{l(1)}$ such that the support ${\rm supp} (g)$ is a compact subset of $I(1)$.
The multiplication operator $M[g]$ maps $L^2(\T_{l(1)})$ to $L^2({\rm supp} (g)) \subset L^2(I(1))$.
Since $M[g]$ is uniform with respect to $\frac{d}{id k}$, $\widehat{V} M[g]$ is uniform with respect to $\frac{d}{id k}$. The operator $\widehat{V} M[g]$ expressed as follows:
\[\widehat{V} M[g]
= M[f] \circ \psi^* \circ M[g].\]
Choose a diffeomorphism $\phi \colon \T_{l(2)} \to \T_{l(1)}$ which is identical to $\psi$ on $\psi^{-1} ({\rm supp} (g))$.
Then we have
\[\widehat{V} M[g] 
= M[f] \circ \phi^* \circ M[g]
= M[f \cdot (g \circ \phi)] \circ \phi^*.\]
By Lemma \ref{lemma: diffeomorphism on T},
$\phi$ is rotation on every interval included in ${\rm supp}f \cap {\rm supp}(g \circ \phi)$, and therefore, $\psi$ is rotation on every interval included in ${\rm supp}f \cap {\rm supp}(g \circ \psi)$.
It follows that for every interval included in ${\rm supp} f \cap I(2)$, the map $\psi$ is rotation.
\end{proof}

\begin{proposition}\label{proposition: uniform intertwiner}
Let $\lambda_1 \colon \T_{l(1)} \to \T$ and $\lambda_2 \colon \T_{l(2)} \to \T$ be analytic maps. 
Assume that $\lambda_1$ and $\lambda_2$ do not have period.
Let $\left( \ell_2 \left( r(1) \Z \right), (U_1^{t})_{t \in \Z}, D_{r(1)} \right)$ and $( \ell_2\left( r(2) \Z \right)$, $(U_2^{t})_{t \in \Z}$, $D_{r(2)} )$ be the prime model quantum walks given by $\lambda_1$ and $\lambda_2$.
Assume that there exists a non-zero uniform intertwiner between them. Then $l(1)$ is equal to $l := l(2)$, and therefore $r(1)$ is equal to $r := r(2)$. 
There exists (unique) $\alpha \in \T_l$ such that 
\[\lambda_2 (k) = \lambda_1(k + \alpha), \quad k \in \T_l.\]
The set of all the uniform intertwiners
\[\left\{ V \colon \ell_2\left( r \Z \right) \to \ell_2\left( r \Z \right) \ \left| \
V U_1 = U_2 V, 
k \mapsto e^{ik D_r} V e^{-i k D_r} {\rm \ is\ continuous} \right. \right\}\]
is equal to
$\left\{\left. \F_r M[\rho] \F_r^{-1} \circ \exp( i \alpha D_r) \ \right|\ \rho \colon \T_l \to \C {\rm\ continuous} \right\}$.
\end{proposition}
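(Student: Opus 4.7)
The strategy is to Fourier-transform the problem and apply Lemma \ref{lemma: diffeomorphism between intervals} branch-by-branch. Setting $\widehat V := \F_{r(2)}^{-1} V \F_{r(1)}$, the intertwining $V U_1 = U_2 V$ becomes $\widehat V M[\lambda_1] = M[\lambda_2] \widehat V$, and since $\F_{r(i)}^{-1} e^{ikD_{r(i)}} \F_{r(i)}$ equals translation by $k$ on $L^2(\T_{l(i)})$, the uniformity of $V$ is precisely the uniformity hypothesis of that lemma, with respect to $\frac{d}{idk}$ on each torus.

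\textbf{Local pullback structure.} Because $\lambda_1$ and $\lambda_2$ are analytic and necessarily non-constant (``no period'' rules out constancy), each $\lambda_i'$ has only finitely many zeros, and away from those points the torus $\T_{l(i)}$ splits into finitely many open intervals $\{I_{i,a}\}$ on which $\lambda_i$ is a diffeomorphism onto its image. Approximating characteristic functions of arcs by continuous functions, the intertwining relation promotes to $\widehat V M[\chi_A \circ \lambda_1] = M[\chi_A \circ \lambda_2] \widehat V$ for every Borel arc $A \subset \T$, so $\widehat V$ maps $L^2(\lambda_1^{-1}(A))$ into $L^2(\lambda_2^{-1}(A))$. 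For small $A$, transporting each branch diffeomorphically to $L^2(A)$ via $\lambda_i$ (with the appropriate Jacobian) makes $\widehat V$ commute with multiplications by every $h \in C(A)$; since $L^\infty(A)$ is maximal abelian in $\B(L^2(A))$, on this sector $\widehat V$ is a matrix $[M[g_{ba}]]$ of multiplication operators. Pulling back and multiplying by smooth cutoffs $\phi'_a$ supported in $I_{1,a}$ and $\phi_b$ supported in $I_{2,b}$, the operator $M[\phi_b] \widehat V M[\phi'_a]$ is still uniform and, restricted to $L^2(I_{1,a})$, has the form $M[F_{ba}] \circ \psi_{ba}^*$ required by Lemma \ref{lemma: diffeomorphism between intervals}, with $\psi_{ba} := \lambda_1|_{I_{1,a}}^{-1} \circ \lambda_2|_{I_{2,b}}$.

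\textbf{Extracting $\alpha$ and classifying.} The lemma now yields that $\psi_{ba}(k) - k$ is constant on every subinterval of $\{F_{ba} \neq 0\}$; exhausting $I_{2,b}$ and $I_{1,a}$ with cutoffs extends this to $\{f_{ba} \neq 0\}$, where $f_{ba}$ is the transported $g_{ba}$. Since $\widehat V \neq 0$, some $f_{ba}$ is non-zero on an open interval, so there exists $\alpha \in \R$ with $\lambda_2(k) = \lambda_1(k + \alpha)$ on that interval; analyticity propagates this identity to all of $\R$. The no-period hypothesis then forces $l(1) = l(2) =: l$ and pins down $\alpha$ uniquely modulo $l$. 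A second shift arising from another branch $(b, a')$ with $\psi_{ba'} \not\equiv k + \alpha$ and non-zero $f_{ba'}$ would yield $\alpha' \neq \alpha$ with $\lambda_2 = \lambda_1(\cdot + \alpha')$, contradicting this uniqueness, so only the ``diagonal'' branches survive and the pieces assemble into $\widehat V = M[f] \circ T_\alpha$ for a bounded Borel function $f$ on $\T_l$, where $T_\alpha$ corresponds to $\exp(i\alpha D_r)$ under $\F_r$. Conjugation by $T_s$ fixes $T_\alpha$ and sends $M[f]$ to $M[f(\cdot + s)]$, so the uniformity of $\widehat V$ is equivalent to $\|f(\cdot + s) - f\|_{L^\infty} \to 0$ as $s \to 0$, equivalently to $f \in C(\T_l)$; conversely, for any $\rho \in C(\T_l)$, the identity $\lambda_2 = \lambda_1(\cdot + \alpha)$ makes $M[\rho] \circ T_\alpha$ a uniform intertwiner. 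Transporting back by $\F_r$ gives the stated description. \textbf{The hardest step} is producing the local pullback-times-multiplier form in the second paragraph: handling the possibly unequal branch counts of $\lambda_1^{-1}$ and $\lambda_2^{-1}$, and correctly gluing the spectral-sector decompositions across the finitely many critical values of $\lambda_1'$ and $\lambda_2'$.
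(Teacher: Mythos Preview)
Your proof is correct and follows essentially the same route as the paper: Fourier-transform, use the spectral/branch decomposition over regular arcs to reduce locally to the form $M[f]\circ\psi^*$, invoke Lemma~\ref{lemma: diffeomorphism between intervals} to force $\psi$ to be a translation, propagate by the identity theorem, eliminate off-diagonal branches by uniqueness of $\alpha$, and assemble into $M[f]\circ T_\alpha$ with $f$ continuous. The only stylistic differences are that you obtain the local multiplication-operator structure via the maximal-abelian property of $L^\infty(A)$ after transporting branches (the paper computes it directly by functional calculus on $M[\lambda_i]$), and you deduce continuity of $f$ from translation-continuity in $L^\infty$ rather than by evaluating $\widehat V$ on a continuous test function.
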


In the proof, for a Borel subset $B \subset \T_{l(\iota)}$, we denote by $1_B$ the definition function of $B$. Note that the multiplication operator $M[1_B]$ is the orthogonal projection $L^2(\T_{l(\iota)}) \to L^2(B)$.

\begin{proof}
Let $\widehat{V} \colon L^2(\T_{l(1)}) \to L^2(\T_{l(2)})$ be a non-zero uniform intertwiner between
\[\left( L^2(\T_{l(1)}), (M[\lambda_1]^t), \frac{d}{i d k} \right), \quad \left( L^2(\T_{l(2)}), (M[\lambda_2]^t), \frac{d}{i d k} \right).\]
For a Borel subset $J \subset \T$, the spectral projection $E_1(J)$ of $M[\lambda_1]$
is the orthogonal projection
\[E_1(J) \colon L^2(\T_{l(1)}) \to L^2( \lambda_1^{-1}(J) ) \subset L^2(\T_{l(1)}).\] 
The spectral projection $E_2(J)$ of $M[\lambda_2]$
is the orthogonal projection
\[E_2(J) \colon L^2(\T_{l(2)}) \to L^2( \lambda_2^{-1}(J) ) \subset L^2(\T_{l(2)}).\] 
The equation $\widehat{V} M[\lambda_1] = M[\lambda_2] \widehat{V}$ implies 
$\widehat{V} E_1(J) = E_2(J) \widehat{V}$.

Since $\lambda_1$ and $\lambda_2$ are not constant function,
by the identity theorem of analytic functions,
the inverse images of a singleton in $\T$ with respect to $\lambda_1$ and $\lambda_2$
is at most finite.
Therefore, the operators $M[\lambda_1]$ and $M[\lambda_2]$ do not have point spectrum.
We also note that the number of critical values of $\lambda_1$ and $\lambda_2$ is finite.
It follows that there exists an open interval $J \subset \T$ satisfying the following:
\begin{itemize}
\item
The interval $J$ does not contain the critical values of $\lambda_1$ nor 
those of $\lambda_2$.
\item
The operator $V E_1(J)$ is not zero. (Therefore $E_2(J) V$ is not zero.)
\end{itemize}
For $\iota = 1, 2$, $\lambda_\iota^{-1}(J)$ consists of
finitely many open intervals.
Note that the restriction of $\lambda_\iota$ on each connected component of $\lambda_\iota^{-1}(J)$ is diffeomorphism onto $J$.
Choose connected components $I(\iota) \subset \lambda_\iota^{-1}(J)$ such that
$M[1_{I(2)}] \widehat{V} M[1_{I(1)}] \neq 0$.
There exist smooth functions $g_\iota$ on $\T_{l(\iota)}$ such that the support of $g_\iota$ is included in $I(\iota)$ and \[M[g_2] \widehat{V} M[g_1] = M[g_2] M[1_{I(2)}] \widehat{V} M[1_{I(1)}] M[g_1] \neq 0.\] 
The operator $M[g_2] \widehat{V} M[g_1]$ is also a uniform intertwiner between $M[\lambda_1]$ and $M[\lambda_2]$.

Replace $\widehat{V}$ with $M[g_2] \widehat{V} M[g_1]$.
Thus we have a non-zero uniform intertwiner $\widehat{V}$ between $M[\lambda_1]$ and $M[\lambda_2]$ and intervals $I(\iota) \subset \T_{l(\iota)}$, $J \subset \T$ satisfying the following
\begin{itemize}
\item
$\lambda_\iota | _{I(\iota)}$ are diffeomorphisms onto $J$,
\item
There exist closed intervals $K(\iota) \subset I(\iota)$ such that 
$\widehat{V} = M[1_{K(2)}] \widehat{V} M[1_{K(1)}]$.
\item
$\widehat{V}$ is uniform with respect to the differential operators $\frac{d}{i d k}$ on $\T_{l(1)}$ and $\T_{l(2)}$.
\end{itemize}
Denote by $\psi$ the diffeomorphism $(\lambda_1 | _{I(1)})^{-1} \circ \lambda_2 | _{I(2)} \colon I(2) \to I(1)$.
Note that $\psi'$ and $(\psi^{-1})'$ are bounded on $K(2)$ and on $K(1)$.

Using a bounded Borel function $g$ on $J$, we can express an arbitrary bounded Borel function on $I(1)$ as $(g \circ \lambda_1) 1_{I(1)}$.
The image of  $(g \circ \lambda_1) 1_{I(1)}$ through $\widehat{V}$ is
\begin{eqnarray*}
\widehat{V}((g \circ \lambda_1) 1_{I(1)}) 
=
\widehat{V} \circ M[g \circ \lambda_1] (1_{I(1)}).
\end{eqnarray*}
Since the operator $M[g \circ \lambda_1]$ is equal to the functional calculus
$\int_{t \in \T} g(t) E_1(d t)$ of $M[\lambda_1]$, we have
\begin{eqnarray*}
\widehat{V}((g \circ \lambda_1) 1_{I(1)}) 
=
\widehat{V} \circ \left(\int_{t \in \T} g(t) E_1(d t) \right) (1_{I(1)}).
\end{eqnarray*}
Define $f \in L^2(\T_{l(2)})$ by $\widehat{V}(1_{I(1)})$.
Because $\widehat{V}$ is an intertwiner between $M[\lambda_1]$ and  $M[\lambda_2]$,
we have
\begin{eqnarray*}
\widehat{V}((g \circ \lambda_1) 1_{I(1)}) 
&=& 
\left(\int_{t \in \T} g(t) E_2(d t) \right) \circ \widehat{V} (1_{I(1)})\\
&=& 
\left(\int_{t \in \T} g(t) E_2(d t) \right) (f)\\
&=& 
M[g \circ \lambda_2] (f)\\
&=&
M[f] ((g \circ \lambda_2) 1_{I(2)}).
\end{eqnarray*}
The function $(g \circ \lambda_2) 1_{I(2)}$ is equal to $((g \circ \lambda_1) 1_{I(1)}) \circ \psi$.
It follows that $\widehat{V} = M[f] \circ \psi^*$.

The function $f = \widehat{V}(1_{I(1)})$ is continuous.
Indeed, we can express $f$ as $\widehat{V}(g_3)$, using a continuous function $g_3$ on $I(1)$ such that $\mathrm{supp}(g_3)$ is included in $I(1)$ and that $g_3(k) = 1$ for $k \in K(1)$.
Since $\widehat V$ is uniform and $g_3$ is continuous, $f = \widehat{V}(1_{I(1)})$ is a continuous on $T_{l(2)}$.

Since $\widehat{V} = M[f] \circ 
\psi^*$ is not zero, ${\rm supp} f$ has to contain an open interval.
By Lemma \ref{lemma: diffeomorphism between intervals},
the mapping $\psi |_{{\rm supp} f}$ is given by rotation on the open interval.
By the identity theorem of analytic functions, $l(1) = l(2)$ and 
there exists $\alpha \in \R$ such that
$\lambda_2(k) = \lambda_1(k + \alpha)$ for every $k \in \T_{l}$.
Note that $\alpha$ is uniquely determined only by $\lambda_1$ and $\lambda_2$, because $\lambda_2$ does not have period.
Define $l$ by $l(1)$ and $r$ by $2 \pi / l$.

To identify the set of all the uniform intertwiners, take an arbitrary uniform intertwiner $\widehat{V}$ between $M[\lambda_1], M[\lambda_2]$. 
There exists finite open intervals
\[J(1), J(2), \cdots, J(\nu) \subset \T_l\]
such that the union $\cup_{\sigma} J(\sigma)$ is the complement of the set of critical values of $\lambda_1$.
It follows that the union $\cup_{\sigma} (J(\sigma) - \alpha)$ is the complement of the set of critical values of $\lambda_2$.
We note that if $\tau \neq \sigma$, then the intertwiner
$M[1_{J(\tau) - \alpha}] \widehat{V} M[1_{J(\sigma)}]$ is zero.
Indeed,
there exist no open intervals $I(2) \subset J(\tau) - \alpha$ and $I(1) \subset J(\sigma)$ such that $\psi = (\lambda_1 | _{I(1)})^{-1} \circ \lambda_2 | _{I(2)} \colon I(2) \to I(1)$ is well-defined and that $\psi$ is rotation.
By the contrapositive of the last paragraph, the intertwiner $M[1_{J(\tau) - \alpha}] \widehat{V} M[1_{J(\sigma)}]$ is zero.

Thus we can express $\widehat{V}$ as follows:
\[\widehat{V} = \sum_{\sigma = 1}^\nu \widehat{V} M[1_{J(\sigma)}]
= \sum_{\sigma = 1}^\nu M[1_{J(\sigma) - \alpha}] \widehat{V} M[1_{J(\sigma)}].\]
For the corner $M[1_{J(\sigma) - \alpha}] \widehat{V} M[1_{J(\sigma)}]$ of $\widehat{V}$,
there exists a Borel function $f_\sigma \colon (J(\sigma) - \alpha) \to \C$ such that
\[M[1_{J(\sigma) - \alpha}] \widehat{V} M[1_{J(\sigma)}] = M[f_\sigma] \circ \phi_\alpha^*,\]
where $\phi_\alpha$ is the rotation by $\alpha \in \T_{2 \pi r^{-1}}$.
Define a Borel function $f$ on $\T_l$, combining $f_\sigma$. We obtain that
\[\widehat{V} = M[f] \circ \phi_\alpha^*.\]
Because $\widehat{V}$ is uniform, $f \colon \T_l \to \C$ has to be continuous.
Therefore the set of uniform intertwiners is included in 
\[\{M[f] \circ \phi_\alpha^* \ |\ f \colon  \T_l \to \C {\rm\ continuous.}\}\]
We can easily show the converse inclusion by direct computation.
The set of Fourier transforms of these operators is nothing other than the set in the proposition.
\end{proof}

Now we are ready to identify the set of uniform intertwiners between given two homogeneous analytic quantum walks.
Theorem \ref{theorem: structure theorem} means that every homogeneous analytic quantum walk is a direct sum of finitely many prime model quantum walks and constant quantum walks.
It suffices to identify the set of uniform intertwiners between these building blocks $U_1$, $U_2$, $\cdots$.

\paragraph{\bf Case 1}
First consider the case that
$U_1$ is a constant quantum walk
$(\ell_2(\Z), (\alpha^t)_t, D_1)$.
and 
that
$U_2$ is a prime model quantum walk
$\left( \ell_2(r \Z), (U_{\lambda}^t)_t, D_{r} \right)$.
Let $V$ be an intertwiner between $U_1$ and $U_2$.
then we have
\[U_\lambda V = V \alpha = \alpha V.\]
Because $U_\lambda$ has no eigenvector other than the zero vector,
$V$ has to be $0$.

\paragraph{\bf Case 2}
Consider the case that
$U_1$ and $U_2$ are constant quantum walks.
We express them as follows
\[
U_1 = \left( \ell_2(\Z), (\alpha^t)_t, D_1 \right), 
\quad 
U_2 = \left( \ell_2(\Z), (\beta^t)_t, D_1 \right). 
\]
If $\alpha \neq \beta$, then there exists no non-zero intertwiner between them.
If $\alpha = \beta$, then every operator is an intertwiner between them.
The collection of the uniform operators is the uniform Roe algebra ${\rm C}^*_{\rm u}(\Z)$. See Remark \ref{remark: uniform Roe algebra}.

\paragraph{\bf Case 3}
Consider the case that $U_1$ and $U_2$ are prime model quantum walks.
We express the quantum walks as follows
\[
U_1 = \left( \ell_2(r(1) \Z), (U_{\lambda_1}^t)_t, D_{r(1)} \right), 
\quad 
U_2 = \left( \ell_2(r(2) \Z), (U_{\lambda_2}^t)_t, D_{r(2)} \right), 
\]
By Proposition \ref{proposition: uniform intertwiner},
if $r: = r(1) = r(2)$, and
if there exists $\alpha \in [0, 2 \pi r^{-1})$ such that $\lambda_2(k) = \lambda_1(k + \alpha)$,
then the set of intertwiners is
\[\left\{\left. \F_r M[\rho] \F_r^{-1} \exp (i\alpha D_r) \ \right|\ \rho \colon \T_{2 \pi r^{-1}} \to \C {\rm\ continuous} \right\}.\]
By Proposition \ref{proposition: uniform intertwiner},
if $r(1) \neq r(2)$, or if there does not exist $\alpha \in \T_{2 \pi r^{-1}}$ such that $\lambda_2(k) = \lambda_1(k + \alpha)$,
then there exists no non-zero uniform intertwiner between $U_1$ and $U_2$.

\begin{example}\label{example: two direct summands of 4-state Grover walk}
Consider the quantum walk generated by
\[W = 
\frac{1}{2}
\left(
\begin{array}{cc}
- S_1^3 - S_1 & S_1 - S_1^{-1}\\
S_1 - S_1^{-1} & - S_1^{-1} - S_1^{-3}
\end{array}
\right) \in \B(\ell_2(\Z) \otimes \C^2).
\]
The characteristic polynomial of the inverse Fourier transform $\widehat{W}(k)$
is
\[\lambda^2 + (\cos 3k + \cos k) \lambda + 1.\]
The roots are given by $\lambda_3(k)$ and $\lambda_4(k)$ defined in Example \ref{example: 4-state Grover walk}.
The quantum walk is similar to
\[\left( L^2(\T_{2 \pi})^2, M[\lambda_3] \oplus M[\lambda_4], \dfrac{d}{i dk} \oplus \dfrac{d}{i dk} \right).\]

Let $V_U$ be the intertwining unitary operator from the $4$-state Grover walk $U$ to
\[1 \oplus (-1) \oplus M[\lambda_3] \oplus M[\lambda_4] \in \B(L^2(\T_{2 \pi})^4).\]
Let $V_W$ be the intertwining unitary operator from $W$ to
\[M[\lambda_3] \oplus M[\lambda_4] \in \B(L^2(\T_{2 \pi})^2).\]
Because $\lambda_4$ is not obtained by translation of $\lambda_3$,
the space of uniform operators intertwining $U$ and $W$ is 
\[
V_W^{-1}
\left(
\begin{array}{ccccc}
0 & 0 & C(\T_{2 \pi}) & 0\\
0 & 0 & 0 & C(\T_{2 \pi})\\
\end{array}
\right)
V_U,
\]
where $C(\T_{2 \pi})$ is the space of multiplication operators given by continuous functions on the torus.
The operators in the middle map $(L^2(\T_{2 \pi}))^4$ to $(L^2(\T_{2 \pi}))^2$.
To identify $V_U$ and $V_W$, we only have to identify the eigenspace decomposition of $\widehat{U}(k)$ and $\widehat{W}(k)$.
The calculation is possible, but complicated. We omit identifying them.
\end{example}

\begin{example}\label{example: a direct summand of 3-state Grover walk}
Consider the quantum walk generated by
\[W = 
\frac{1}{3}
\left(
\begin{array}{cc}
-2 - S_1 & \sqrt{2}i (S_1^{-1} - 1)\\
\sqrt{2}i (S_1 - 1) & - 2 - S_1^{-1}
\end{array}
\right) \in \B(\ell_2(\Z) \otimes \C^2).
\]
The characteristic polynomial of the inverse Fourier transform $\widehat{W}(k)$
is
\[\lambda^2 + \dfrac{4 + 2 \cos k}{3} \lambda + 1.\]
The roots are given by $\lambda_2(k)$ and $\lambda_2(k + 2 \pi)$ defined in Example \ref{example: 3-state Grover walk}.
The quantum walk is similar to
\[\left( L^2(\T_\pi), M[\lambda_2], \dfrac{d}{i dk} \right).\]

Let $V_U$ be the intertwining unitary operator from the $3$-state Grover walk $U$ to
\[1 \oplus M[\lambda_2] \in \B(L^2(\T_{2 \pi}) \oplus L^2(\T_{4 \pi})).\]
Let $V_W$ be the intertwining unitary operator from $W$ to
\[M[\lambda_2] \in \B(L^2(\T_{4 \pi})).\]
The space of uniform operators intertwining $U$ and $W$ is 
\[
V_W^{-1}
\left(
0 \quad M(C(\T_{4 \pi}))
\right)
V_U.
\]
The operators in the middle map $L^2(\T_{2 \pi}) \oplus L^2(\T_{4 \pi})$ to $L^2(\T_{4 \pi})$.
\end{example}

In the above two examples, we see the cases that there exist non-zero intertwiners.
However, in many cases, there exists no non-zero uniform intertwiner.
For example, there exists no non-zero uniform intertwiner between the walk in Example \ref{example: two direct summands of 4-state Grover walk} and that in Example \ref{example: a direct summand of 3-state Grover walk}.
To show absence of a non-zero intertwiner, we need some systematic way of proof like the contrapositive of Proposition \ref{proposition: uniform intertwiner}, while
for existence of non-zero intertwiner, we might find intertwiners by some chance.

\begin{example}
Let $\rho$ be a positive real number less than $1$. 
Consider the quantum walks $U_\rho$ on $\ell_2(\Z) \otimes \C^2$ generated by
$U_\rho = \left(
\begin{array}{cc}
\rho S_1^{-1} & \sqrt{1 - \rho^2} S_1^{-1}\\
\sqrt{1 - \rho^2} S_1 & \rho S_1
\end{array}
\right)$.
The eigenvalue functions of the inverse Fourier transform $\widehat{U_\rho}$ are
\[
\lambda_{\rho, +} = \rho \cos k + i \sqrt{1 - \rho^2 \cos^2 k}, 
\quad 
\lambda_{\rho, +} = \rho \cos k - i \sqrt{1 - \rho^2 \cos^2 k}.
\]
If $\rho(1), \rho(2) \in (0, 1)$ and if $\rho(1) \neq \rho(2)$, then there exists no $\alpha \in \T_{2 \pi}$ satisfying $\lambda_{\rho(2), \pm}(k) = \lambda_{\rho(1), \pm}(k + \alpha)$. 
It follows that there exists no uniform intertwiner between $U_{\rho(1)}$ and $U_{\rho(2)}$.

\end{example}

\subsection{Uniform commutant of a homogeneous analytic quantum walk}

For a quantum walk $(\Hil, (U^t)_{t \in \Z}, D)$,
we call the algebra
\[\left\{ V \in \B(\Hil) \left| 
V U = U V, 
k \mapsto \exp(i k D) V \exp(- i k D) {\rm \ is\ continuous} \right. \right\}\]
the {\it uniform commutant} of $U$.
The following are conclusions of Proposition \ref{proposition: uniform intertwiner}.

\begin{corollary}
The uniform commutant of
a prime model quantum walk 
$( \ell_2( r \Z )$, $( U_\lambda^{t} )_{t \in \Z}$, $D_r )$
is identical to
\[\left\{\left. \F_r M[\rho] \F_r^{-1} \ \right|\ \rho \colon \T_{2 \pi r^{-1}} \to \C {\rm\ continuous} \right\}.\]
\end{corollary}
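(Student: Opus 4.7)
The plan is to derive the corollary as an immediate specialization of Proposition \ref{proposition: uniform intertwiner} to the case in which source and target walks coincide. Setting $U_1 = U_2 = (\ell_2(r \Z), (U_\lambda^t)_{t \in \Z}, D_r)$, so that $\lambda_1 = \lambda_2 = \lambda$ and $r(1) = r(2) = r$, the uniform commutant of the walk is by definition precisely the space of uniform intertwiners from this walk to itself. The hypothesis of the proposition is satisfied, since primality of the walk asserts exactly that $\lambda$ has no period other than $0$.

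Next I would read off the shift parameter. Proposition \ref{proposition: uniform intertwiner} furnishes a unique $\alpha \in \T_{2 \pi r^{-1}}$ satisfying $\lambda(k) = \lambda(k + \alpha)$ for every $k$. But this equation says precisely that $\alpha$ is a period of $\lambda$, and primality therefore forces $\alpha = 0$. Consequently $\exp(i \alpha D_r) = {\rm id}$, and the set of uniform intertwiners provided by the proposition collapses to $\{\F_r M[\rho] \F_r^{-1} \mid \rho \colon \T_{2 \pi r^{-1}} \to \C \text{ continuous}\}$, which is exactly the asserted description of the uniform commutant.

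No further obstacle remains, since the substantive analytic work has already been carried out in Proposition \ref{proposition: uniform intertwiner} (and ultimately in Lemma \ref{lemma: diffeomorphism on T}, which pins any uniform intertwiner down to a rotation on the torus composed with a multiplication operator). The only content specific to this corollary is the primality-driven cancellation $\alpha = 0$ of the twisting factor $\exp(i \alpha D_r)$; both inclusions follow at once, with the reverse inclusion amounting to the elementary observation that $\F_r M[\rho] \F_r^{-1}$ commutes with $U_\lambda = \F_r M[\lambda] \F_r^{-1}$ and is uniform with respect to $D_r$ whenever $\rho$ is continuous.
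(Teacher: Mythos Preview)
Your proposal is correct and follows exactly the paper's approach: the paper's proof is a one-liner noting that primality means $\lambda$ has no rotational symmetry, which is precisely your observation that the unique $\alpha$ furnished by Proposition~\ref{proposition: uniform intertwiner} must be $0$. Your write-up is more detailed (making explicit that the identity witnesses the existence hypothesis of the proposition and spelling out the reverse inclusion), but the underlying argument is identical.
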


\begin{proof}
By the definition of a prime model quantum walk,
$\lambda$ has no rotational symmetry.
\end{proof}

The following is the motivation of the definition of {\it prime} model quantum walk.

\begin{corollary}
No prime model quantum walk is similar to 
a direct sum of two (not necessarily homogeneous) one-dimensional uniform quantum walks.
\end{corollary}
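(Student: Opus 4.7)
The plan is to argue by contradiction: suppose a prime model quantum walk $(\ell_2(r\Z), (U_\lambda^t)_{t \in \Z}, D_r)$ were similar to a direct sum $(\Hil_1 \oplus \Hil_2, (U_1^t \oplus U_2^t)_{t \in \Z}, D_1 \oplus D_2)$ of two non-trivial uniform quantum walks via some smooth unitary intertwiner $V \colon \ell_2(r \Z) \to \Hil_1 \oplus \Hil_2$. The strategy is then to exhibit a non-trivial pair of commuting projections inside the uniform commutant of the prime model walk, and to contradict the previous corollary which forces every such projection to come from a continuous function on the connected torus $\T_{2 \pi r^{-1}}$.

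The first step is to consider the coordinate projections $P_i \colon \Hil_1 \oplus \Hil_2 \to \Hil_i$. Each $P_i$ commutes with $U_1 \oplus U_2$ and with $e^{i k (D_1 \oplus D_2)}$, so it is a uniform operator in the trivial sense that the map $k \mapsto e^{i k (D_1 \oplus D_2)} P_i e^{-i k (D_1 \oplus D_2)}$ is constant. Next, I would pull these back and set $Q_i = V^* P_i V$. Using the factorization
\[e^{i k D_r} Q_i e^{-i k D_r} = \bigl(e^{i k D_r} V^* e^{-i k (D_1 \oplus D_2)}\bigr) P_i \bigl(e^{i k (D_1 \oplus D_2)} V e^{-i k D_r}\bigr),\]
together with the fact that smoothness of $V$ implies norm-continuity of the conjugation map, one sees that $Q_i$ is uniform with respect to $D_r$. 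Since $V$ is an intertwiner, $Q_i$ commutes with $U_\lambda$; since $V$ is unitary, $Q_1$ and $Q_2$ are orthogonal projections with $Q_1 + Q_2 = I$.

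Now I would invoke the preceding corollary, which identifies the uniform commutant of the prime model quantum walk with $\{\F_r M[\rho] \F_r^{-1} \mid \rho \in C(\T_{2 \pi r^{-1}})\}$. Hence $Q_i = \F_r M[\rho_i] \F_r^{-1}$ for continuous functions $\rho_i \colon \T_{2 \pi r^{-1}} \to \C$. The projection relations $Q_i^2 = Q_i$ force $\rho_i(k) \in \{0, 1\}$, and because $\T_{2 \pi r^{-1}}$ is connected while $\{0, 1\}$ is discrete, each $\rho_i$ is a constant. The constraint $\rho_1 + \rho_2 = 1$ then forces one of them to be $0$ and the other $1$, so one of the summands $\Hil_i$ must have been zero, contradicting the non-triviality of the direct sum decomposition.

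The main conceptual obstacle is the verification that the similarity, which is only assumed to be smooth (not uniform in a stronger equivariant sense), suffices to transfer the trivially-uniform coordinate projections to genuinely uniform operators on $\ell_2(r \Z)$; this is handled by the three-factor decomposition above, exploiting that smoothness implies norm-continuity of $k \mapsto e^{i k (D_1 \oplus D_2)} V e^{-i k D_r}$. Once this step is in place, the rest is a clean application of the structure of the uniform commutant together with the connectedness of the torus.
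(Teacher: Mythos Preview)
Your proof is correct and follows essentially the same route as the paper's: both arguments identify the uniform commutant of a prime model quantum walk with $C(\T_{2\pi r^{-1}})$, observe that this algebra contains no nontrivial projections (by connectedness of the torus), and deduce a contradiction from the coordinate projections of a putative direct-sum decomposition. The paper's proof is very terse and leaves the transfer of the coordinate projections via the smooth similarity $V$ implicit; you have spelled out precisely this step with the three-factor decomposition, which is the correct way to fill in that gap.
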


\begin{proof}
For every prime model quantum walk,
the uniform commutant is
\[\left\{\left. \F_r M[\rho] \F_r^{-1} \ \right|\ \rho \colon \T_{2 \pi r^{-1}} \to \C {\rm\ continuous} \right\} \subset \B(\ell_2(r \Z)).\]
The set of all the orthogonal projections in this algebra is $\{0, {\rm id}\}$.
If the walk were a direct sum of two quantum walks, the set would contain a non-trivial projection.
\end{proof}

\begin{remark}
In \cite{SaigoSako}, the notion of indecomposable quantum walk is defined.
The condition of indecomposable model quantum walk is weaker than that of primeness.
Primeness means that the walk can not be decomposable in the category of quantum walks,
while indecomposability means that the walk can not be decomposable in the category of {\it homogeneous} quantum walks.
\end{remark}

Using Theorem \ref{theorem: structure theorem}, 
we identify the structure of the uniform commutant 
of the discrete-time homogeneous analytic quantum walk
$(\ell_2(\Z) \otimes \C^n, (U^t)_{t \in \Z}, D \otimes {\rm id})$.

\begin{proposition}\label{proposition: uniform commutant}
Let $(\ell_2(\Z) \otimes \C^n, (U^t)_{t \in \Z}, D_1 \otimes {\rm id})$ be an arbitrary one-dimensional discrete-time homogeneous analytic quantum walk.
The uniform commutant
\[
\left\{
V \colon 
\ell_2(\Z) \otimes \C^n \to \ell_2(\Z) \otimes \C^n \ |\
V U = U V, k \mapsto e^{ik D} V e^{-ikD} {\rm \ is\ continuous} 
\right\}
\]
is isomorphic to the algebra of the following form
\[
\bigoplus_{j = 1}^l 
\left(
C (\T_{r(j)}) \otimes M_{\mu(j)}(\C) 
\right)
\oplus
\bigoplus_{k = 1}^m 
\left(
{\rm C}^{*}_{\rm u} (\Z) \otimes M_{\nu(k)}(\C)
\right)
.\]
(The integers $l$ and $m$ can be zero. In such a case, erase the corresponding direct summand.)
The operator $U$ itself is located at the element of the form
\[
\bigoplus_{j = 1}^l 
\left(
M[\lambda_j] \otimes {\rm id}_{\mu(j)} 
\right)
\oplus
\bigoplus_{k = 1}^m 
\left(
\alpha(k) \otimes {\rm id}_{\nu(k)} 
\right).
\]
The structure of smoothness and analyticity is given by the self-adjoint operator
\[
\bigoplus_{j = 1}^l 
\left(
\frac{d}{i d k} \otimes {\rm id}_{\mu(j)}
\right)
\oplus
\bigoplus_{k = 1}^m 
\left(
D_1 \otimes {\rm id}_{\nu(k)}
\right)
.\]
\end{proposition}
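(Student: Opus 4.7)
The plan is to reduce everything to the block-matrix analysis of intertwiners between the building blocks provided by Theorem \ref{theorem: structure theorem}, using the three cases already carried out just before the proposition.

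First, by Theorem \ref{theorem: structure theorem}, the given walk $U$ is similar to a direct sum
\[
\bigoplus_{\iota \in I_{\rm pr}} \left( \ell_2(r(\iota) \Z), (U_{\lambda(\iota)}^t)_{t\in\Z}, D_{r(\iota)} \right)
\ \oplus \
\bigoplus_{\kappa \in I_{\rm c}} \left( \ell_2(\Z), (\alpha(\kappa)^t)_{t\in\Z}, D_1 \right)
\]
of prime model walks and constant walks. Similarity is implemented by a smooth (hence uniform) unitary, and the uniform commutant is invariant under conjugation by such a unitary. So it is enough to identify the uniform commutant of the direct sum.

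Second, any uniform operator $V$ on a direct sum $\bigoplus_\iota \Hil_\iota$ decomposes into blocks $V_{\iota',\iota} \colon \Hil_\iota \to \Hil_{\iota'}$, and the matrix $(V_{\iota',\iota})$ commutes with the direct sum of walks precisely when each entry is an intertwiner between the walks labelled by $\iota$ and $\iota'$; uniformity of $V$ with respect to the direct sum self-adjoint operator is equivalent to uniformity of each entry with respect to the corresponding pair. (This is the same ``off-diagonal'' trick used after Definition 2.10.) The three cases discussed right before the proposition then apply entry by entry: a block between a constant walk and a prime walk vanishes; a block between two constant walks is zero unless the eigenvalues agree, in which case any element of $\mathrm{C}^*_{\rm u}(\Z)$ works; and by Proposition \ref{proposition: uniform intertwiner}, a block between two prime walks is zero unless the tori have the same length and the eigenvalue functions differ by a single translation $\alpha$, in which case the block has the form $\F_r M[\rho] \F_r^{-1} \exp(i \alpha D_r)$ with $\rho \in C(\T_{2\pi r^{-1}})$.

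Third, group the summands into equivalence classes: two prime summands are equivalent when $r(\iota)=r(\iota')$ and $\lambda(\iota')(k) = \lambda(\iota)(k+\alpha)$ for some $\alpha$; two constant summands are equivalent when their eigenvalues coincide; no constant summand is equivalent to a prime summand. By the previous paragraph, only within-class blocks can be non-zero. Within a prime class of size $\mu(j)$, fix a representative $\lambda_j$ and write each member as $\lambda_j(k+\alpha_s)$. Conjugating the $s$-th summand by the smooth unitary $\exp(i\alpha_s D_{r(j)})$ (whose effect on the eigenvalue function is precisely translation by $\alpha_s$) replaces that summand by a copy of $\left( \ell_2(r(j)\Z), (U_{\lambda_j}^t)_t, D_{r(j)} \right)$, and this conjugation preserves uniformity of all entries. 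After this normalization every within-class intertwiner becomes $\F_{r(j)} M[\rho] \F_{r(j)}^{-1}$ with $\rho \in C(\T_{2\pi r(j)^{-1}})$, and composition corresponds to pointwise multiplication of the $\rho$'s, so the block algebra is identified with $C(\T_{r(j)}) \otimes M_{\mu(j)}(\C)$. In this identification $U$ acts on each summand as $M[\lambda_j]$, hence as $M[\lambda_j] \otimes {\rm id}_{\mu(j)}$. For a constant class of size $\nu(k)$ with common eigenvalue $\alpha(k)$, the block is directly $\mathrm{C}^*_{\rm u}(\Z) \otimes M_{\nu(k)}(\C)$ by Remark \ref{remark: uniform Roe algebra} together with Case 2, with $U$ acting as $\alpha(k)\otimes{\rm id}_{\nu(k)}$.

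The main technical point to verify is the compatibility in the third step: one needs the conjugation by $\bigoplus_s \exp(i\alpha_s D_{r(j)})$ to be smooth as an intertwiner with respect to $\bigoplus_s D_{r(j)}$, and one needs composition of Fourier-multiplication-plus-translation blocks to track matrix multiplication once the translations have been absorbed. Both are direct calculations, but they are the only place where the $\mu(j)\times\mu(j)$ matrix structure (as opposed to a merely set-theoretic grouping) is extracted, so that is where I would spend the most care.
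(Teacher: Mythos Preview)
Your proposal is correct and follows essentially the same route as the paper: decompose via Theorem \ref{theorem: structure theorem}, analyze the uniform commutant block by block using Cases 1--3, group the summands into equivalence classes, and then, within each prime class, absorb the translations $\alpha_s$ by conjugating with the (smooth, Fourier-dual) rotations $\exp(i\alpha_s D_{r(j)})$ to recognize the block as $C(\T_{2\pi r(j)^{-1}})\otimes M_{\mu(j)}(\C)$. The paper carries out exactly this, phrasing the normalization on the Fourier side via $\exp(\alpha_j\,d/dk)$ and noting that these translations normalize the multiplication algebra and commute with $d/(i\,dk)$---precisely the ``compatibility'' you flagged as the main technical check.
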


\begin{remark}
The non-negative integers $l$ and $m$ in Proposition \ref{proposition: uniform commutant} can be different from those in Theorem
\ref{theorem: structure theorem}.
\end{remark}

\begin{proof}
Recall that $U$ can be decomposed into prime model quantum walks and constant quantum walks.
Let $U_1$ and $U_2$ be two direct summand of $U$.
By the argument in Cases {\bf 1}, {\bf 2}, {\bf 3} in the previous subsection,
if there exists a non-zero uniform intertwiner between $U_1$ and $U_2$ ,
then there exists a uniform unitary operator which intertwines them.
Therefore, existence of non-zero uniform intertwiner defines an equivalence relation between
direct summand of $U$.
Let $\{U_1, U_2, \cdots, U_\nu\}$ be such an equivalence class.
By the above argument, they are all constant quantum walks, or they are all prime model quantum walks.

Consider the case that
$U_1, U_2, \cdots, U_\nu$ are the constant walks.
By {\bf Case 2} in the previous subsection, these are identical.
Express them as $\left( \ell_2(\Z), (\alpha^t)_t, D_1 \right)$.
Combining the set of uniform intertwiners, we obtain the algebra
${\rm C}^*_{\rm u}(\Z) \otimes M_\nu(\C)$.

Consider the case that
$U_1, U_2, \cdots, U_\nu$ are prime quantum walks.
By {\bf Case 3} in the previous subsection, corresponding analytic functions 
\[\lambda_1, \cdots, \lambda_\nu \colon \T_{2 \pi r^{-1}} \to \T\]
are mutually translations of each other.
There exist $\alpha_1, \cdots, \alpha_\nu \in \T_{2 \pi r^{-1}}$ such that
$\lambda_j(k) = \lambda_1 (k + \alpha_j)$.
Then we have $\lambda_j(k) = \lambda_l(k + \alpha_j - \alpha_l)$.
By {\bf Case 3} in the previous subsection, 
the set of uniform intertwiners from $U_l$ to $U_j$ is
\[\left\{\left. \F_r M[\rho] \F_r^{-1} \exp (i (\alpha_j - \alpha_l) D_r) \ \right|\ \rho \colon \T_{2 \pi r^{-1}} \to \C {\rm\ continuous} \right\}.\]
In the case of $l = k$,
$U_l$ is located at $\F_r M[\lambda_l] \F_r^{-1}$.
The inverse Fourier transform is
\[\left\{\left. 
\exp \left(\alpha_j \dfrac{d}{d k} \right) 
M[\rho] 
\exp \left(\alpha_l \dfrac{d}{dk} \right)^{-1} 
\ \right|\ 
\rho \colon \T_{2 \pi r^{-1}} \to \C {\rm\ continuous} \right\}.\] 
Note that the operator $\exp \left(\alpha_j \frac{d}{d k} \right) \in \B(L^2(\T_{2 \pi r^{-1}}))$
is the translation operator by $\alpha_j \in \T_{2 \pi r^{-1}}$ and that it is a normalizer of the space of multiplication operators $C(\T_{2 \pi r^{-1}})$.
Also note that $\exp \left(\alpha_j \frac{d}{d k} \right)$ commutes with the differential operator $\frac{d}{i d k}$.

Combining all the intertwiners, we conclude that
the set of uniform intertwiners between $U_1 \oplus \cdots \oplus U_\nu$
and itself
is isomorphic to
$C(\T_{2 \pi r^{-1}}) \otimes M_\nu(\C)$.
\end{proof}

\section{Realization by a continuous-time uniform quantum walk}

\begin{lemma}\label{lemma: realization by one-parameter}
Let $\nu$ be a natural number.
Let $r$ be a positive real number.
Let $\lambda \colon \T_{2 \pi r^{-1}} \to \T$ be a continuous map.
There exists a one-parameter group $(U^{(t)})_{t \in \R}$ of unitary operators in 
$C (\T_{2 \pi r^{-1}}) \otimes M_{\nu}(\C)$ satisfying
\[U^{(1)} = M[\lambda] \otimes {\rm id}_{\nu} ,\]
if and only if the winding number of $\lambda$ is zero.
\end{lemma}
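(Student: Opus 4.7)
The plan is to exploit the determinant homomorphism $\det \colon U(\nu) \to U(1) = \T$ together with the universal covering $\R \to \T$, $s \mapsto e^{is}$, which turns the winding number condition into the existence of a continuous real logarithm.

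For the implication ($\Leftarrow$), suppose the winding number of $\lambda$ is zero. Standard covering space theory supplies a continuous lift $\phi \colon \T_{2\pi r^{-1}} \to \R$ with $\lambda(k) = \exp(i \phi(k))$. I would then set
\[
U^{(t)} := M[\exp(it\phi)] \otimes \mathrm{id}_\nu, \qquad t \in \R.
\]
Each $U^{(t)}$ lies in $C(\T_{2\pi r^{-1}}) \otimes M_\nu(\C)$, the map $t \mapsto U^{(t)}$ is norm continuous, the one-parameter group law $U^{(s+t)} = U^{(s)}U^{(t)}$ is immediate from the corresponding property for scalar exponentials, and $U^{(1)} = M[\lambda] \otimes \mathrm{id}_\nu$.

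For the implication ($\Rightarrow$), assume such a group $(U^{(t)})_{t \in \R}$ exists. I would write $U^{(t)} = \exp(it H)$ for a self-adjoint $H \in C(\T_{2\pi r^{-1}}) \otimes M_\nu(\C)$, viewed fiberwise as a continuous self-adjoint matrix-valued function $k \mapsto H(k)$. Then the function $\mu(k) := \mathrm{tr}\, H(k)$ is continuous and real-valued on $\T_{2\pi r^{-1}}$, and the identity $\det \exp(it H(k)) = \exp(i t\, \mathrm{tr}\, H(k))$ applied at $t = 1$ to $U^{(1)}(k) = \lambda(k) I_\nu$ yields
\[
\lambda(k)^\nu = \exp(i\mu(k)), \qquad k \in \T_{2\pi r^{-1}}.
\]
Hence $\lambda^\nu$ admits a continuous real logarithm, so its winding number is zero. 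Since winding number is multiplicative under powers, $\nu \cdot (\text{winding number of } \lambda) = 0$, and since $\nu \geq 1$, the winding number of $\lambda$ itself vanishes.

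The only delicate step is the representation $U^{(t)} = \exp(itH)$ with $H$ a continuous matrix-valued function on the torus, i.e.\ an honest element of the algebra $C(\T_{2\pi r^{-1}}) \otimes M_\nu(\C)$. If one reads ``one-parameter group'' as norm continuous in this $C^*$-algebra, this is automatic: bounded self-adjoint elements exponentiate in norm and the generator sits in the algebra. If one only assumes strong operator continuity on $L^2(\T_{2\pi r^{-1}}) \otimes \C^\nu$, then at each $k$ the fiber $t \mapsto U^{(t)}(k)$ is a continuous one-parameter group in the finite-dimensional unitary group $U(\nu)$, hence of the form $\exp(itH(k))$ for a uniquely determined self-adjoint $H(k)$; continuity of $k \mapsto H(k)$ then follows from compactness of $\T_{2\pi r^{-1}}$ together with the hypothesis that each $U^{(t)}$ is a continuous function on the torus. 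Either interpretation leaves the core determinant-and-lift argument above unchanged.
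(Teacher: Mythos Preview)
Your argument is correct, and the $(\Leftarrow)$ direction matches the paper's proof verbatim. For $(\Rightarrow)$, however, the paper takes a cleaner route that sidesteps the generator entirely: it observes that $t \mapsto w(\det U^{(t)})$ is a group homomorphism from $\R$ to $\Z$ (using only multiplicativity of the $C(\T_{2\pi r^{-1}})$-valued determinant and additivity of winding numbers), and any such homomorphism is identically zero since $\R$ is divisible. Then $\nu\, w(\lambda) = w(\lambda^\nu) = w(\det U^{(1)}) = 0$, just as in your final step.

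The advantage of the paper's argument is that it is purely algebraic: it needs \emph{no} continuity of $t \mapsto U^{(t)}$ whatsoever, which is precisely the point of the remark following Theorem~4.2. Your approach, by contrast, hinges on producing a self-adjoint generator $H$ inside $C(\T_{2\pi r^{-1}}) \otimes M_\nu(\C)$. Under norm continuity this is standard, but your fallback for the merely strongly continuous case is not quite complete: strong continuity of $t \mapsto U^{(t)}$ on $L^2(\T_{2\pi r^{-1}}) \otimes \C^\nu$ does not immediately yield continuity of the fiber map $t \mapsto U^{(t)}(k)$ at a fixed $k$, so the claimed fiberwise generator $H(k)$ and its continuity in $k$ would need further justification. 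The determinant--homomorphism trick bypasses all of this.
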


\begin{proof}
Suppose that the winding number of $\lambda$ is zero.
Then there exists a continuous function $h \colon \T_{2 \pi r^{-1}} \to \R$ such that $\exp(i h) = \lambda$.
The one-parameter unitary group
\[U^{(t)} = M[\exp(i t h)] \otimes {\rm id}_\nu \in C (\T_{2 \pi r^{-1}}) \otimes M_{\nu}(\C)\]
satisfies $U^{(1)} = M[\lambda] \otimes {\rm id}_{\nu}$.

Conversely suppose that there exists a one-parameter unitary group $U^{(t)} \in C (\T_{2 \pi r^{-1}}) \otimes M_{\nu}(\C)$ which satisfies $U^{(1)} = M[\lambda] \otimes {\rm id}_{\nu}$.
We make use of $C (\T_{2 \pi r^{-1}})$-valued determinant
\[\det \colon C (\T_{2 \pi r^{-1}}) \otimes M_{\nu}(\C) \to C (\T_{2 \pi r^{-1}}).\]
Since the map $\det$ is multiplicative, $\det U^{(t)}$ is a unitary element of $C (\T_{2 \pi r^{-1}})$.
The winding numbers $\{w(\det U^{(t)})\}_{t \in \R}$ define a group homomorphism from $\R$ to $\Z$.
It follows that $w(\det U^{(t)}) = 0$ for every $t \in \R$.
Therefore we have
\begin{eqnarray*}
\nu w(\lambda) = w(\lambda^\nu) = w(\det U^{(1)}) = 0
\end{eqnarray*}
and $w(\lambda) = 0$.
\end{proof}

\begin{theorem}\label{theorem: continuous-time}
Let $(\ell_2(\Z) \otimes \C^n, (U^t)_{t \in \Z}, D_1 \otimes {\rm id})$ be an arbitrary one-dimensional discrete-time homogeneous analytic quantum walk.
Let $\lambda_1, \lambda_2, \cdots$ be the eigenvalue functions of $U$
introduced in Subsection \ref{subsection: review of SS}.
Then the following conditions are equivalent
\begin{enumerate}
\item
There exists a one-dimensional continuous-time {\rm uniform} quantum walk $(\ell_2(\Z) \otimes \C^n, (U^{(t)})_{t \in \R}, D_1 \otimes {\rm id})$ such that
$U^{(1)} = U$.
\item
There exists a one-dimensional continuous-time {\rm homogeneous} and {\rm analytic} quantum walk $(\ell_2(\Z) \otimes \C^n, (U^{(t)})_{t \in \R}, D \otimes {\rm id})$ such that
$U^{(1)} = U$.
\item
All the winding numbers of  $\lambda_1, \lambda_2, \cdots $ are zero.
\end{enumerate}
\end{theorem}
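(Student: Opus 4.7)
The plan is to prove the three implications (2)$\Rightarrow$(1)$\Rightarrow$(3)$\Rightarrow$(2). The first is essentially immediate: analyticity implies uniformity, and a homogeneous quantum walk is in particular a quantum walk, so every homogeneous analytic continuous-time walk is a uniform one.

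For the main direction (1)$\Rightarrow$(3), the key observation is that for every $t \in \R$ the operator $U^{(t)}$ commutes with $U = U^{(1)}$ (since they all sit in the abelian one-parameter family) and is uniform with respect to $D_1 \otimes \mathrm{id}$ by the definition of uniform continuous-time quantum walk. Hence $U^{(t)}$ lies in the uniform commutant of $U$, which by Proposition \ref{proposition: uniform commutant} has the form
$$\bigoplus_{j=1}^l \bigl( C(\T_{r(j)}) \otimes M_{\mu(j)}(\C) \bigr) \;\oplus\; \bigoplus_{k=1}^m \bigl( {\rm C}^*_{\rm u}(\Z) \otimes M_{\nu(k)}(\C) \bigr),$$
with $U$ itself located at $\bigoplus_j \bigl( M[\lambda_j] \otimes \mathrm{id}_{\mu(j)} \bigr) \oplus \bigoplus_k \bigl( \alpha(k) \otimes \mathrm{id}_{\nu(k)} \bigr)$. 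The central projections onto these direct summands lie in the center of the commutant algebra and therefore commute with every $U^{(t)}$. Restricting to each summand produces a one-parameter family of unitaries in $C(\T_{r(j)}) \otimes M_{\mu(j)}(\C)$ whose value at $t = 1$ is $M[\lambda_j] \otimes \mathrm{id}_{\mu(j)}$, and Lemma \ref{lemma: realization by one-parameter} then forces $w(\lambda_j) = 0$. The constant eigenvalues $\alpha(k)$ trivially have winding number zero.

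For (3)$\Rightarrow$(2), assume all winding numbers vanish. For each prime-model summand write $\lambda_j = \exp(i h_j)$ with $h_j$ real analytic on the corresponding torus (available exactly because $w(\lambda_j) = 0$), and each constant as $\alpha(k) = \exp(i \theta_k)$. On each summand define the norm-analytic one-parameter unitary group
$$U^{(t)}_j = M[\exp(i t h_j)] \otimes \mathrm{id}_{\mu(j)}, \qquad U^{(t)}_k = \exp(i t \theta_k) \otimes \mathrm{id}_{\nu(k)},$$
each fiber of which is analytic with respect to $\frac{d}{i\,dk}$ (respectively $D_1$) because $h_j$ is analytic, and each of which commutes with the shift on its summand. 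Assembling these via direct sum and transporting back through the analytic similarity unitary supplied by Subsection \ref{subsection: review of SS} yields a strongly continuous one-parameter group on $\ell_2(\Z) \otimes \C^n$ consisting of operators analytic with respect to $D_1 \otimes \mathrm{id}$ and commuting with $S_1 \otimes \mathrm{id}$, whose value at $t = 1$ is $U$. This is the desired homogeneous analytic continuous-time quantum walk.

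The main technical obstacle is the (1)$\Rightarrow$(3) step, since the definition of uniform continuous-time quantum walk imposes only strong continuity in $t$ together with per-$t$ uniformity; one cannot assume that $t \mapsto U^{(t)}$ is norm continuous. The saving observation is that Lemma \ref{lemma: realization by one-parameter} requires no continuity in $t$ whatsoever: its proof reduces the problem to showing that $t \mapsto w(\det U^{(t)})$ is a group homomorphism $\R \to \Z$, and the only such homomorphism is zero because $\R$ is divisible and $\Z$ torsion-free. Once the one-parameter family has been sliced along the central projections of the commutant, the lemma therefore applies directly and the argument goes through without any further continuity hypothesis.
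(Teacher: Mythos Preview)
Your argument is correct and follows the same route as the paper. For (1)$\Rightarrow$(3) you do exactly what the paper does: place each $U^{(t)}$ in the uniform commutant described in Proposition~\ref{proposition: uniform commutant}, restrict to each summand via the central projections, and invoke Lemma~\ref{lemma: realization by one-parameter}. Your remark that the lemma needs no continuity in $t$ (because any group homomorphism $\R\to\Z$ vanishes) is exactly the content of the paper's Remark following the theorem. For (3)$\Rightarrow$(2) the paper simply cites \cite[Theorem~5.14]{SaigoSako}, whereas you sketch the construction directly; your sketch is correct, but note a small wording slip: the ``analytic similarity unitary supplied by Subsection~\ref{subsection: review of SS}'' decomposes $U$ into model walks (one per eigenvalue function $\lambda_\iota$), not into \emph{prime} model walks, and it is this first similarity that intertwines the shifts and hence preserves homogeneity. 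Since condition~(3) concerns precisely those $\lambda_\iota$, the analytic logarithms $h_\iota$ and the multiplication operators $M[\exp(ith_\iota)]$ should be built on that decomposition; the further refinement into prime pieces (Proposition~\ref{proposition: decomposition into prime walks}) is not needed here and, as the paper remarks, is not in general a similarity of \emph{homogeneous} walks.
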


The first item looks much weaker than the second item, 
but the following proof will show that both are equivalent to the third item.

\begin{remark}
We may further weaken the first condition.
We can eliminate the assumption that $(U^{(t)})_{t \in \R}$ is continuous with respect to the strong operator topology.
\end{remark}

\begin{proof}
The easier half of
Theorem 5.14 in \cite{SaigoSako} shows that the third condition implies the second one.
It suffices to show that the first condition implies the third one.

Suppose that there exists a one-parameter group $U^{(t)}$ of uniform unitary operator on $\ell_2(\Z) \otimes \C^n$ such that $U^{(1)} = U$.
Note that for every $t$, $U^{(t)}$ commutes with $U$.
We use the algebra
\[
\bigoplus_{j = 1}^l 
\left(
C (\T_{r(j)}) \otimes M_{\mu(j)}(\C) 
\right)
\oplus
\bigoplus_{k = 1}^m 
\left(
{\rm C}^{*}_{\rm u} (\Z) \otimes M_{\nu(k)}(\C)
\right)
\]
in Proposition \ref{proposition: uniform commutant}.
Existence of $U^{(t)}$ means that
\[
\widehat{U} = \bigoplus_{j = 1}^l 
\left(
M[\lambda_j] \otimes {\rm id}_{\mu(j)} 
\right)
\oplus
\bigoplus_{k = 1}^m 
\left(
\alpha(k) \otimes {\rm id}_{\nu(k)} 
\right).
\]
can be realized by a one-parameter group of unitary operators in the algebra.

The winding numbers of constant functions $\alpha(k)$ are $0$, so the latter summand does not have to do with our problem.
We can concentrate on the operator
\[
M[\lambda_j] \otimes {\rm id}_{\mu(j)} 
\in
C (\T_{r(j)}) \otimes M_{\mu(j)}(\C).
\]
By Lemma \ref{lemma: realization by one-parameter},
if this is realized by a one-parameter unitary group inside $C (\T_{r(j)}) \otimes M_{\mu(j)}(\C)$,
then the winding number of $\lambda_j$ is $0$.
\end{proof}

\begin{example}
Let $r$ be a real number greater than $0$ and less than $1$.
Let us consider the quantum walk
\[
U = 
\left(
\begin{array}{cc}
r S^{-1} & - \sqrt{1 - r^2} S^{-1} \\
\sqrt{1 - r^2} S & r S 
\end{array}
\right).\]
acting on $\ell_2(\mathbb{Z}) \otimes \mathbb{C}^2$.
The weak limit theorem for this walk has been shown in \cite{KonnoJMSJ}.
The characteristic polynomial of the inverse Fourier transform $\widehat{U}(k)$ is
\[f(k; z) = \lambda^2 - r \left( e^{ik} + e^{-ik} \right) \lambda + 1.\]
We express $z$ by $e^{i \theta}$.
The roots are
\begin{eqnarray*}
\lambda_1(k) &=& r \cos k + i \sqrt{1 - r^2 \cos^2 k},\\
\lambda_2(k) &=& r \cos k - i \sqrt{1 - r^2 \cos^2 k}.
\end{eqnarray*}
They are single-valued functions.
The winding numbers are both zero.
By Theorem \ref{theorem: continuous-time},
This can be realized by a continuous-time quantum walk.\qed
\end{example}

\begin{example}\label{example: 3-state Grover walk is continuous-time}
The $3$-state Grover walk in Example \ref{example: 3-state Grover walk} can  be realized 
by a continuous-time analytic quantum walk.
We have obtained the constant eigenvalue function $\lambda_1(k) = 1$ and a multi-valued analytic eigenvalue function $\lambda_2$.
The winding number of $\lambda_2 \colon \R / (4 \pi \Z) \to \T$ is zero.
For the same reason, the quantum walk in Example \ref{example: a direct summand of 3-state Grover walk} can be realized by a continuous-time quantum walk.
\end{example}

Even if a homogeneous analytic quantum walk $(U^t)_{t \in \Z}$ is realized by a continuous-time quantum walk $(U^{(t)})_{t \in \R}$, the walk $(U^{(t)})_{t \in \R}$ is not necessarily homogeneous. 

\begin{example}
Let $\beta$ be an element of $\T_{2 \pi} = \R / (2 \pi \Z)$.
Assume that for every integer $x \in \Z$, $x \beta \in \T_{2 \pi}$ is not zero.
Let $\lambda \colon \T_{2 \pi} \to \T$ be an analytic map without period.
Assume that the winding number of $\lambda$ is zero.
Choose an analytic map $g \colon \T_{2 \pi} \to \R$ satisfying $\exp(i g) = \lambda$.
Define $\rho \colon \T_{2 \pi} \to \T$ and $h \colon \T_{2 \pi} \to \R$ by $\rho(k) = \lambda(k + \beta)$ and $h(k) = g(k + \beta)$.
Consider the direct sum of model quantum walks
\[
(\ell_2(\Z), (\F_1 M[\lambda]^t \F_1^{-1})_{t \in \Z}, D_1) 
\oplus
(\ell_2(\Z), (\F_1 M[\rho]^t \F_1^{-1})_{t \in \Z}, D_1).
\]
This is realized by the continuous-time homogeneous quantum walk
\[
U^{(t)} =
\left(
\begin{array}{cc}
\F_1 M[\exp(it g)] \F_1^{-1} & 0\\
0 & \F_1 M[\exp(it h)] \F_1^{-1}\\
\end{array}
\right).
\]
We also consider the one-parameter family of unitary
\[
V^{(t)} =
\left(
\begin{array}{cc}
\cos 2 \pi t & - \sin 2 \pi t \cdot \exp( - i \beta D_1)\\
\sin 2 \pi t \cdot \exp( i \beta D_1) & \cos 2 \pi t\\
\end{array}
\right).
\]
Because the inverse Fourier transform $\exp\left( \beta \frac{d}{dk} \right)$ of the operator $\exp( i \beta D_1)$ is the translation operator on $L^2(\T_{2 \pi})$ by $- \beta \in \T_{2 \pi}$, $V^{(t)}$ commutes with $U^{(t)}$.
It follows that $V^{(t)} U^{(t)}$ is also a one-parameter group of unitary operators and realizes the given quantum walk $\F_1 M[\lambda]^t \F_1^{-1} \oplus \F_1 M[\rho]^t \F_1^{-1}$.
If $t$ is not an element of $\frac{1}{2}\Z$, then $V^{(t)} U^{(t)}$ is not homogeneous.
Moreover, it is not even virtually homogeneous in the sense of Definition \ref{definition: homogeneous}.
\end{example}

Theorem \ref{theorem: continuous-time} provides a powerful way to show that given quantum walk is not a restriction of continuous-time quantum walk.
If one wants to show that given quantum walk can be realized by a continuous-time quantum walk, ad hoc way might be useful, because we might be able to find concrete description.
However, if one wants to show that it can not be realized by a continuous-time quantum walk, ad hoc way can not be useful, and we need some systematic procedure.
The following Corollary gives a sufficient condition for such non-existence.

\begin{corollary}\label{corollary: non-existence}
Let $(\ell_2(\Z) \otimes \C^n, (U^t)_{t \in \Z}, D_1 \otimes {\rm id})$ be an arbitrary one-dimensional discrete-time homogeneous analytic quantum walk.
Denote by $\widehat{U} \in C(\T_{2 \pi}) \otimes M_n(\C)$ the inverse Fourier transform of $U$.
If the winding number of $\det \widehat{U} \colon \T_{2 \pi} \to \T$ is not zero, then
there exists no one-dimensional continuous-time uniform quantum walk $(\ell_2(\Z) \otimes \C^n, (U^{(t)})_{t \in \R}, D_1 \otimes {\rm id})$ such that
$U^{(1)} = U$.
\end{corollary}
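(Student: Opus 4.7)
The plan is to derive this corollary as a direct consequence of Theorem \ref{theorem: continuous-time} by relating the winding number of $\det \widehat{U}$ to the winding numbers of the individual eigenvalue functions $\lambda_\iota$ furnished by the structure theorem.

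First I would recall from Subsection \ref{subsection: review of SS} that the analytic eigendecomposition yields analytic maps $\lambda_\iota \colon \T_{2\pi d(\iota)} \to \T$ for $\iota = 1, \dots, \nu$ with $\sum_\iota d(\iota) = n$, such that for every $k \in \T_{2\pi}$ the multiset of eigenvalues of $\widehat{U}(k)$ (counted with total multiplicity $n$) is $\bigcup_\iota \{\lambda_\iota(\widetilde{k}) : \widetilde{k} \in p_{d(\iota)}^{-1}(k)\}$. Consequently
\[\det \widehat{U}(k) = \prod_{\iota = 1}^\nu \prod_{\widetilde{k} \in p_{d(\iota)}^{-1}(k)} \lambda_\iota(\widetilde{k}).\]

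Next comes the key computation: for each $\iota$, the function $\Lambda_\iota \colon \T_{2\pi} \to \T$ defined by $\Lambda_\iota(k) = \prod_{\widetilde{k} \in p_{d(\iota)}^{-1}(k)} \lambda_\iota(\widetilde{k})$ has winding number equal to that of $\lambda_\iota$ viewed as a map from $\T_{2\pi d(\iota)}$. Indeed, working with a local lift and concatenating the $d(\iota)$ translates,
\[w(\Lambda_\iota) = \frac{1}{2 \pi i} \int_0^{2 \pi} \sum_{j = 0}^{d(\iota) - 1} \frac{\lambda_\iota'(k + 2 \pi j)}{\lambda_\iota(k + 2 \pi j)} \, dk = \frac{1}{2 \pi i} \int_0^{2 \pi d(\iota)} \frac{\lambda_\iota'(k)}{\lambda_\iota(k)} \, dk = w(\lambda_\iota).\]
Since winding number is additive under products of unitary-valued continuous maps, this yields $w(\det \widehat{U}) = \sum_\iota w(\lambda_\iota)$. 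Hence if $w(\det \widehat{U}) \neq 0$, at least one $w(\lambda_\iota)$ must be nonzero, so condition (3) of Theorem \ref{theorem: continuous-time} fails, and by the equivalence established there, condition (1) fails as well, which is exactly the asserted non-existence.

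The main obstacle I anticipate is the behavior of the product formula at points $k \in \T_{2\pi}$ where eigenvalues of $\widehat{U}(k)$ coincide, since the analytic section description of Subsection \ref{subsection: review of SS} is cleanest on the complement of such an exceptional set. However, $\det \widehat{U}$ is continuous on all of $\T_{2\pi}$ and the product formula holds on a dense open subset, so it extends by continuity; the winding number depends only on the resulting continuous global map into $\T$, so the exceptional points pose no genuine obstruction.
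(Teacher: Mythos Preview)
Your proof is correct and follows essentially the same route as the paper: both arguments express $\det \widehat{U}$ via the eigenvalue functions $\lambda_\iota$, observe that $w(\det \widehat{U}) = \sum_\iota w(\lambda_\iota)$, and then apply the contrapositive of Theorem~\ref{theorem: continuous-time}. The paper simply asserts the winding-number identity in one line, while you supply the computation via $\Lambda_\iota$ and address the exceptional set explicitly; these extra details are sound but not required.
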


\begin{proof}
Let $\lambda_1, \cdots, \lambda_l$ be eigenvalue functions of $U$ introduced in Subsection \ref{subsection: review of SS}.
Whichever the eigenvalue functions are single-valued or multi-valued,
the winding number of  $\det \widehat{U}$ is the sum of the winding numbers of $\lambda_1, \cdots, \lambda_l$.
If  the winding number of $\det \widehat{U} \colon \T_{2 \pi} \to \T$ is not zero, there exists an eigenvalue function whose winding number is not zero.
\end{proof}

\begin{example}
we consider a quantum walk defined by
\[U
=
\left(
\begin{array}{cc}
r S_1 & - b S_1 \\
\overline{b} & r
\end{array}
\right), \quad r \in \mathbb{R}, b \in \mathbb{C}, r^2 + |b|^2 = 1.\]
acting on $\ell_2(\mathbb{Z}) \otimes \mathbb{C}^2$.
Determinant of the Fourier dual is
\[
\det \widehat{U} (k) = \det
\left(
\begin{array}{cc}
r e^{ik} & - b e^{ik} \\
\overline{b} & r
\end{array}
\right)
= e^{ik}.
\]
The winding number of $\det \widehat{U} \colon \T_{2 \pi} \to \T$ is one.
By Corollary \ref{corollary: non-existence}, $U$ can not be realized by a continuous-time uniform quantum walk.
\end{example}

The converse of Corollary \ref{corollary: non-existence} does not hold true.

\begin{example}\label{example: 4-state Grover walk is not continuous-time}
We prove that the $4$-state Grover walk in Example \ref{example: 4-state Grover walk} can not be realized 
by a continuous-time uniform quantum walk.
Determinant of the Fourier dual $\widehat{U}$ is a constant function, so we can not use
Corollary \ref{corollary: non-existence}. 
We obtain four single-valued analytic eigenvalue functions
\[\lambda_1(k) = 1, \quad \lambda_2(k) = -1, 
\quad \lambda_3(k), \quad \lambda_4(k)\]
in Example \ref{example: 4-state Grover walk}.
The winding numbers are
\[w(\lambda_1) = 0, \quad w(\lambda_2) = 0, 
\quad w(\lambda_3) = 1, \quad w(\lambda_4) = -1.\]
By Theorem \ref{theorem: continuous-time}, the quantum walk given by $U$ can not be realized by a continuous-time analytic (not necessarily homogeneous) quantum walk.

By the same reason, the quantum walk in Example \ref{example: two direct summands of 4-state Grover walk} can not be realized by a continuous-time quantum walk.
\end{example}

\bibliographystyle{amsalpha}
\bibliography{dtct3.bib}

\end{document}